\documentclass[submission]{dmtcs-episciences}

\usepackage[utf8]{inputenc}
\usepackage{subfigure}
\usepackage[round]{natbib}
\usepackage{amsmath}
\usepackage{amssymb}

\newcommand{\bis}{\protect\underline{\leftrightarrow}}

\newcommand{\doublera}{\rightarrow\!\!\!\!\!\rightarrow}
\newcommand{\cho}{[]}
\newcommand{\rs}{\ {\sf rs}\ }
\newcommand{\sy}{\ {\sf sy}\ }
\newcommand{\sr}{\ {\sf sr}\ }

\newtheorem{definition}{Definition}[section]
\newtheorem{example}{Example}[section]
\newtheorem{proposition}{Proposition}[section]
\newtheorem{theorem}{Theorem}[section]

\author{Igor V. Tarasyuk\affiliationmark{1}\thanks{Partially supported by Deutsche Forschungsgemeinschaft under Grant
BE 1267/14-1.} \and Hermenegilda Maci\`a\affiliationmark{2} \and Valent\'in Valero\affiliationmark{2}}

\title[Stochastic equivalence for performance analysis in dtsiPBC]{Stochastic equivalence for performance analysis of
concurrent systems in dtsiPBC\thanks{This work was supported in part by the Spanish Ministry of Science and Innovation
and the European Union FEDER Funds with the coordinated Project DArDOS entitled ``Formal development and analysis of
complex systems in distributed contexts: foundations, tools and applications'', UCLM subproject ``Formal analysis and
applications of Web services and electronic contracts'', under Grant TIN2015-65845-C3-2-R.}}

\affiliation{A.P. Ershov Institute of Informatics Systems, SB RAS, Novosibirsk, Russian Federation\\
High School of Informatics Engineering, University of Castilla - La Mancha, Albacete, Spain}

\keywords{stochastic process algebra, Petri box calculus, discrete time, immediate multiaction, performance evaluation,
stochastic equivalence}

\received{YYYY-MM-DD}
\revised{YYYY-MM-DD}
\accepted{YYYY-MM-DD}

\begin{document}

\publicationdetails{VOL}{2017}{ISS}{NUM}{SUBM}

\maketitle

\begin{abstract}
We propose an extension with immediate multiactions of discrete time stochastic Petri Box Calculus (dtsPBC), presented
by I.V. Tarasyuk. The resulting algebra dtsiPBC is a discrete time analogue of stochastic Petri Box Calculus (sPBC)
with immediate multiactions, designed by H. Maci\`a, V. Valero et al. within a continuous time domain. The step
operational semantics is constructed via labeled probabilistic transition systems. The denotational semantics is based
on labeled discrete time stochastic Petri nets with immediate transitions. To evaluate performance, the corresponding
semi-Markov chains are analyzed. We define step stochastic bisimulation equivalence of expressions that is applied to
reduce their transition systems and underlying semi-Markov chains while preserving the functionality and performance
characteristics. We explain how this equivalence can be used to simplify performance analysis of the algebraic
processes. In a case study, a method of modeling, performance evaluation and behaviour reduction for concurrent systems
is outlined and applied to the shared memory system.
\end{abstract}

\section{Introduction}
\label{introduction.sec}

Algebraic process calculi like CSP \cite{Hoa85}, ACP \cite{BK85} and CCS \cite{Mil89} are well-known formal models for
specification of computing systems and analysis of their behaviour. In such process algebras (PAs), systems and
processes are specified by formulas, and verification of their properties is accomplished at a syntactic level via
equivalences, axioms and inference rules. In recent decades, stochastic extensions of PAs were proposed, such as MTIPP
\cite{HR94}, PEPA \cite{Hil96} and EMPA \cite{BGo98,BDGo98,Bern99}. Unlike standard PAs, stochastic process algebras
(SPAs) do not just specify actions which can occur (qualitative features), but they associate with the actions the
distribution parameters of their random time delays (quantitative characteristics).

\subsection{Petri Box Calculus}

PAs specify concurrent systems in a compositional way via an expressive formal syntax. On the other hand, Petri nets
(PNs) provide a graphical representation of such systems and capture explicit asynchrony in their behaviour. To combine
the advantages of both models, a semantics of algebraic formulas in terms of PNs has been defined. Petri Box Calculus
(PBC) \cite{BDH92,BKo95,BDK01} is a flexible and expressive process algebra developed as a tool for specification of
the PNs structure and their interrelations. Its goal was also to propose a compositional semantics for high level
constructs of concurrent programming languages in terms of elementary PNs. Formulas of PBC are combined not from single
(visible or invisible) actions and variables, like in CCS, but from multisets of elementary actions and their
conjugates, called multiactions ({\em basic formulas}). The empty multiset of actions is interpreted as the silent
multiaction specifying some invisible activity. In contrast to CCS, synchronization is separated from parallelism ({\em
concurrent constructs}). Synchronization is a unary multi-way stepwise operation based on communication of actions and
their conjugates. This extends the CCS approach with conjugate matching labels. Synchronization in PBC is asynchronous,
unlike that in Synchronous CCS (SCCS) \cite{Mil89}. Other operations are sequence and choice ({\em sequential
constructs}). The calculus includes also restriction and relabeling ({\em abstraction constructs}). To specify infinite
processes, refinement, recursion and iteration operations were added ({\em hierarchical constructs}). Thus, unlike CCS,
PBC has an additional iteration operation to specify infinite behaviour when the semantic interpretation in finite PNs
is possible. PBC has a step operational semantics in terms of labeled transition systems, based on the rules of
structural operational semantics (SOS) \cite{Plo81}. The operational semantics of PBC is of step type, since its SOS
rules have transitions with (multi)sets of activities, corresponding to simultaneous executions of activities (steps).
Note that we do not reason in terms of a big-step (natural) \cite{Kah87} or small-step (structural) \cite{Plo81}
operational semantics here, and that PBC (and all its extensions to be mentioned further) have a small-step operational
semantics, in that terminology. A denotational semantics of PBC was proposed via a subclass of PNs equipped with an
interface and considered up to isomorphism, called Petri boxes. For more detailed comparison of PBC with other process
algebras and the reasoning about importance of non-interleaving semantics see \cite{BDH92,BDK01}.

\subsection{Stochastic extensions of Petri Box Calculus}

A stochastic extension of PBC, called stochastic Petri Box Calculus (sPBC), was proposed in \cite{MVF01}. In sPBC,
multiactions have stochastic delays that follow (negative) exponential distribution. Each multiaction is equipped with
a rate that is a parameter of the corresponding exponential distribution. The instantaneous execution of a stochastic
multiaction is possible only after the corresponding stochastic time delay. Just a finite part of PBC was initially
used for the stochastic enrichment, i.e. in its former version sPBC does not have refinement, recursion or iteration
operations. The calculus has an interleaving operational semantics defined via transition systems labeled with
multiactions and their rates. Its denotational semantics was defined in terms of a subclass of labeled continuous time
stochastic PNs, based on CTSPNs \cite{Mar90,Bal01} and called stochastic Petri boxes (s-boxes). In \cite{MVCC04}, the
iteration operator was added to sPBC. In sPBC with iteration, performance of the processes is evaluated by analyzing
their underlying continuous time Markov chains (CTMCs). In \cite{MVCF08}, a number of new equivalence relations were
proposed for regular terms of sPBC with iteration to choose later a suitable candidate for a congruence. sPBC with
iteration was enriched with immediate multiactions having zero delay in \cite{MVCR08}. We call such an sPBC extension
generalized sPBC or gsPBC. An interleaving operational semantics of gsPBC was constructed via transition systems
labeled with stochastic or immediate multiactions together with their rates or probabilities. A denotational semantics
of gsPBC was defined via a subclass of labeled generalized stochastic PNs, based on GSPNs \cite{Mar90,Bal01,Bal07} and
called generalized stochastic Petri boxes (gs-boxes). The performance analysis in gsPBC is based on the underlying
semi-Markov chains (SMCs).

PBC has a step operational semantics, whereas sPBC has an interleaving one. Remember that in step semantics, parallel
executions of activities (steps) are permitted while in interleaving semantics, we can execute only single activities.
Hence, a stochastic extension of PBC with a step semantics is needed to keep the concurrency degree of behavioural
analysis at the same level as in PBC. As mentioned in \cite{Mol81,Mol85}, in contrast to continuous time approach (used
in sPBC), discrete time approach allows for constructing models of common clock systems and clocked devices. In such
models, multiple transition firings (or executions of multiple activities) at time moments (ticks of the central clock)
are possible, resulting in a step semantics. Moreover, employment of discrete stochastic time fills the gap between the
models with deterministic (fixed) time delays and those with continuous stochastic time delays. As argued in
\cite{AHR00}, arbitrary delay distributions are much easier to handle in a discrete time domain. In
\cite{MVi08,MVi09,MABV12}, discrete stochastic time was preferred to enable simultaneous expiration of multiple delays.
In \cite{Tar05,Tar07}, a discrete time stochastic extension dtsPBC of finite PBC was presented. In dtsPBC, the
residence time in the process states is geometrically distributed. A step operational semantics of dtsPBC was
constructed via labeled probabilistic transition systems. Its denotational semantics was defined in terms of a subclass
of labeled discrete time stochastic PNs (LDTSPNs), based on DTSPNs \cite{Mol81,Mol85} and called discrete time
stochastic Petri boxes (dts-boxes). A variety of stochastic equivalences were proposed to identify stochastic processes
with similar behaviour which are differentiated by the semantic equivalence. The interrelations of all the introduced
equivalences were studied. In \cite{Tar06,Tar14}, we constructed an enrichment of dtsPBC with the iteration operator
used to specify infinite processes. The performance evaluation in dtsPBC with iteration is accomplished via the
underlying discrete time Markov chains (DTMCs) of the algebraic processes. Since dtsPBC has a discrete time semantics
and geometrically distributed sojourn time in the process states, unlike sPBC with continuous time semantics and
exponentially distributed delays, the calculi apply two different approaches to the stochastic extension of PBC, in
spite of some similarity of their syntax and semantics inherited from PBC. The main advantage of dtsPBC is that
concurrency is treated like in PBC having step semantics, whereas in sPBC parallelism is simulated by interleaving,
obliging one to collect the information on causal independence of activities before constructing the semantics. In
\cite{TMV13,TMV14,TMV15}, we presented the extension dtsiPBC of the latter calculus with immediate multiactions.
Immediate multiactions increase the specification capability: they can model logical conditions, probabilistic
branching, instantaneous probabilistic choices and activities whose durations are negligible in comparison with those
of others. They are also used to specify urgent activities and the ones that are not relevant for performance
evaluation. Thus, immediate multiactions can be considered as a kind of instantaneous dynamic state adjustment and, in
many cases, they result in a simpler and more clear system representation.

\subsection{Equivalence relations}

A notion of equivalence is important in theory of computing systems. Equivalences are applied both to compare behaviour
of systems and reduce their structure. There is a wide diversity of behavioural equivalences, and their interrelations
are well explored in the literature. The best-known and widely used one is bisimulation. Typically, the mentioned
equivalences take into account only functional (qualitative) but not performance (quantitative) aspects. Additionally,
the equivalences are usually interleaving ones, i.e. they interpret concurrency as a sequential nondeterminism.
Interleaving equivalences permit to imitate parallel execution of actions via all possible occurrence sequences
(interleavings) of them. Step equivalences require instead simulating such a parallel execution by simultaneous
occurrence (step) of all the involved actions. To respect quantitative features of behaviour, probabilistic
equivalences have additional requirement on execution probabilities. Two equivalent processes must be able to execute
the same sequences of actions, and for every such sequence, its execution probabilities within both processes should
coincide. In case of probabilistic bisimulation equivalence, the states from which similar future behaviours start are
grouped into equivalence classes that form elements of the aggregated state space. From every two bisimilar states, the
same actions can be executed, and the subsequent states resulting from execution of an action belong to the same
equivalence class. In addition, for both states, the cumulative probabilities to move to the same equivalence class by
executing the same action coincide. A different kind of quantitative relations is called Markovian equivalences, which
take rate (the parameter of exponential distribution that governs time delays) instead of probability. The
probabilistic equivalences can be seen as discrete time analogues of the Markovian ones, since the latter are defined
as the continuous time relations.

Interleaving probabilistic weak trace equivalence was introduced in \cite{Chr90} on labeled probabilistic transition
systems. Interleaving probabilistic strong bisimulation equivalence was proposed in \cite{LS91} on the same model.
Interleaving probabilistic equivalences were defined for probabilistic processes in \cite{JS90,GSS95}. Interleaving
Markovian weak bisimulation equivalences were considered in \cite{Buc94a} on Markovian process algebras, in
\cite{Buc95} on labeled CTSPNs and in \cite{Buc98} on labeled GSPNs. Interleaving Markovian strong bisimulation
equivalence was constructed in \cite{HR94} for MTIPP, in \cite{Hil96} for PEPA and in \cite{BGo98,BDGo98,Bern99} for
EMPA. In \cite{Bern07,Bern15}, interleaving Markovian trace, test, strong and weak bisimulation equivalences were
compared on sequential and concurrent Markovian process calculi. However, no appropriate equivalence was defined for
concurrent SPAs. The non-interleaving bisimulation equivalence in GSMPA \cite{BBGo98,Bra02} uses ST-semantics for
action particles while in S$\pi$ \cite{Pri02} it is based on a sophisticated labeling.

\subsection{Our contributions}

We present dtsPBC with iteration extended with immediate multiactions, called {\em discrete time stochastic and
immediate Petri Box Calculus} (dtsiPBC), which is a discrete time analog of sPBC. The latter calculus has iteration and
immediate multiactions within the context of a continuous time domain. The step operational semantics is constructed
with the use of labeled probabilistic transition systems. The denotational semantics is defined in terms of a subclass
of labeled discrete time stochastic and immediate PNs (LDTSPNs with immediate transitions, LDTSIPNs), based on the
extension of DTSPNs with transition labeling and immediate transitions, called dtsi-boxes. The consistency of both
semantics is demonstrated. The corresponding stochastic process, the underlying SMC, is constructed and investigated,
with the purpose of performance evaluation, which is the same for both semantics. In addition, the alternative solution
methods are developed, based on the underlying DTMC. Further, we propose step stochastic bisimulation equivalence
allowing one to identify algebraic processes with similar behaviour that are however differentiated by the semantics of
the calculus. We examine the interrelations of the proposed relation with other equivalences of the algebra. We
describe how step stochastic bisimulation equivalence can be used to reduce transition systems of expressions and their
underlying SMCs while preserving the qualitative and the quantitative characteristics. We prove that the mentioned
equivalence guarantees identity of the stationary behaviour and the residence time properties in the equivalence
classes. This implies coincidence of performance indices based on steady-state probabilities of the modeled stochastic
systems. The equivalences possessing the property can be used to reduce the state space of a system and thus simplify
its performance evaluation,
which is usually a complex problem due to the state space explosion. We present a case study of a system with two
processors and a common shared memory explaining how to model concurrent systems within the calculus and analyze their
performance, as well as how to reduce the systems behaviour while preserving their performance indices and making
easier the performance evaluation. Finally, we consider differences and similarities between dtsiPBC and other SPAs to
determine the advantages of our calculus. The salient point of dtsiPBC is a combination of immediate multiactions,
discrete stochastic time and step semantics in an SPA.

Concerning differences from our previous
papers about dtsiPBC \cite{TMV13,TMV14,TMV15}, the present text is much more detailed and many new important results
have been added. In particular, immediate multiactions now have positive real-valued weights (instead of previously
used positive integer weights), all the used notions (such as numbering, functions collecting executable activities,
probability functions) are formally defined and completely explained with examples; the operational and denotational
semantics are given in full detail (the inaction, action rules, LDTSPNs and dtsi-boxes are extensively described and
discussed); compact illustrative examples (of standard and alternative solution methods) are presented; keeping
properties of original Markov chains (irreducibility, positive recurrence and aperiodicity) in their embedded and
state-aggregated versions is studied. The main new contribution of the paper, step stochastic bisimulation equivalence
of the process expressions, is introduced and checked for stationary behaviour preservation in the equivalence classes;
quotienting the transition systems, SMCs and DTMCs by the equivalence, as well as the resulting simplification of
performance evaluation, are considered; generalized variant of the shared memory system and quotients of its behaviour
by the equivalence are constructed. In the enhanced related work overview, strong points of dtsiPBC with respect to
other SPAs are detected; in the discussion, analytical solution, application area, concurrency interpretation and
general advantages of dtsiPBC are explained.
Thus, the main contributions of the paper are the following.
\begin{itemize}

\item Flexible and expressive discrete time SPA with immediate activities called dtsiPBC.

\item Step operational semantics in terms of labeled probabilistic transition systems.

\item Net denotational semantics via discrete time stochastic and immediate Petri nets.

\item Performance analysis based on the underlying SMCs and DTMCs of expressions.

\item Stochastic equivalence used for functionality- and performance-preserving reduction.

\item Extended case study showing how to apply the theoretical results in practice.

\end{itemize}

\subsection{Structure of the paper}

In Section \ref{syntax.sec}, the syntax of the calculus dtsiPBC is presented. In Section \ref{opersem.sec}, we
construct the operational semantics of the algebra in terms of labeled probabilistic transition systems. In Section
\ref{denosem.sec}, we propose the denotational semantics based on a subclass of LDTSIPNs. In Section
\ref{perfeval.sec}, the corresponding stochastic process is derived and analyzed. Step stochastic bisimulation
equivalence is defined and investigated in Section \ref{stocheqs.sec}. In Section \ref{reduction.sec}, we explain how
to reduce transition systems and underlying SMCs of process expressions modulo the equivalence. In Section
\ref{stationary.sec}, this equivalence is applied to the stationary behaviour comparison in the equivalence classes to
verify the performance preservation. In Section \ref{gshmsysim.sec}, the generalized shared memory system is presented
as a case study. The difference between dtsiPBC and other well-known SPAs is considered in Section \ref{relwork.sec}.
The advantages of dtsiPBC with respect to other SPAs are described in Section \ref{discussion.sec}. Section
\ref{conclusion.sec} summarizes the results obtained and outlines the research perspectives.

\section{Syntax}
\label{syntax.sec}

In this section, we propose the syntax of dtsiPBC. First, we recall a definition of multiset that is an extension of
the set notion by allowing several identical elements.

\begin{definition}
A finite {\em multiset (bag)} $M$ over a set $X$ is a mapping $M:X\rightarrow\naturals$ such that $|\{x\in X\mid
M(x)>0\}|<\infty$, i.e. it contains a finite number of elements ($\naturals$ is the set of all nonnegative integers).
\end{definition}

We denote the {\em set of all finite multisets} over a set $X$ by $\naturals_{\rm fin}^X$. Let $M,M'\in\naturals_{\rm
fin}^X$. The {\em cardinality} of $M$ is $|M|=\sum_{x\in X}M(x)$. We write $x\in M$ if $M(x)>0$ and $M\subseteq M'$ if
$\forall x\in X,\ M(x)\leq M'(x)$. We define $(M+M')(x)=M(x)+M'(x)$ and $(M-M')(x)=\max\{0,M(x)-M'(x)\}$. When $\forall
x\in X,\ M(x)\leq 1,\ M$ can be interpreted as a proper set $M\subseteq X$. The {\em set of all subsets (powerset)} of
$X$ is denoted by $2^X$.

Let $Act=\{a,b,\ldots\}$ be the set of {\em elementary actions}. Then $\widehat{Act}=\{\hat{a},\hat{b},\ldots\}$ is the
set of {\em conjugated actions (conjugates)} such that $\hat{a}\neq a$ and $\hat{\hat{a}}=a$. Let
$\mathcal{A}=Act\cup\widehat{Act}$ be the set of {\em all actions}, and $\mathcal{L}=\naturals_{\rm fin}^\mathcal{A}$
be the set of {\em all multiactions}. Note that $\emptyset\in\mathcal{L}$, this corresponds to an internal move, i.e.
the execution of a multiaction with no visible actions. The {\em alphabet} of $\alpha\in\mathcal{L}$ is defined as
$\mathcal{A}(\alpha )=\{x\in\mathcal{A}\mid\alpha (x)>0\}$.

A {\em stochastic multiaction} is a pair $(\alpha ,\rho )$, where $\alpha\in\mathcal{L}$ and $\rho\in (0;1)$ is the
{\em probability} of the multiaction $\alpha$. This probability is interpreted as that of independent execution of the
stochastic multiaction at the next discrete time moment. Such probabilities are used to calculate those to execute
(possibly empty) sets of stochastic multiactions after one time unit delay. The probabilities of stochastic
multiactions are required not to be equal to $1$ to avoid extra model complexity, since in this case
weights would be required
to make a choice when several stochastic multiactions with probability $1$ can be executed
from a state.
Furthermore,
stochastic multiactions with probability $1$ would occur in a step (parallel execution) and all other with the less
probabilities do not. In this case, some problems appear with conflicts resolving. See \cite{Mol81,Mol85} for the
discussion on SPNs. On the other hand, there is no sense to allow zero probabilities of multiactions, since they would
never be performed in this case. Let $\mathcal{SL}$ be the set of {\em all stochastic multiactions}.

An {\em immediate multiaction} is a pair $(\alpha ,\natural_l)$, where $\alpha\in\mathcal{L}$ and
$l\in\reals_{>0}=(0;+\infty )$ is the positive real-valued {\em weight} of the multiaction $\alpha$. This weight is
interpreted as a measure of importance (urgency, interest) or a bonus reward associated with execution of the immediate
multiaction at the current discrete time moment. Such weights are used to calculate the probabilities to execute sets
of immediate multiactions instantly. Immediate multiactions have a priority over stochastic ones. Thus, in a state
where both kinds of multiactions can occur, immediate multiactions always occur before stochastic ones. Stochastic and
immediate multiactions cannot participate together in some step (concurrent execution), i.e. the steps consisting only
of immediate multiactions or those including only stochastic multiactions are allowed. Let $\mathcal{IL}$ be the set of
{\em all immediate multiactions}.

Note that the same multiaction $\alpha\in\mathcal{L}$ may have different probabilities and weights in the same
specification. An {\em activity} is a stochastic or
immediate multiaction. Let $\mathcal{SIL}=\mathcal{SL}\cup\mathcal{IL}$ be the set of {\em all activities}. The {\em
alphabet} of a multiset of activities $\Upsilon\in\naturals_{\rm fin}^\mathcal{SIL}$ is defined as
$\mathcal{A}(\Upsilon )=\cup_{(\alpha ,\kappa )\in\Upsilon}\mathcal{A}(\alpha )$. For an activity $(\alpha ,\kappa
)\in\mathcal{SIL}$, we define its {\em multiaction part} as $\mathcal{L}(\alpha ,\kappa )=\alpha$ and its {\em
probability} or {\em weight part} as $\Omega (\alpha ,\kappa )=\kappa$ if $\kappa\in (0;1)$; or $\Omega (\alpha ,\kappa
)=l$ if $\kappa =\natural_l,\ l\in\reals_{>0}$. The {\em multiaction part} of a multiset of activities
$\Upsilon\in\naturals_{\rm fin}^\mathcal{SIL}$ is defined as $\mathcal{L}(\Upsilon )=\sum_{(\alpha ,\kappa
)\in\Upsilon}\alpha$.

Activities are combined into formulas (process expressions) by the
operations: {\em sequential execution} $;$, {\em choice} $\cho$, {\em parallelism} $\|$, {\em relabeling} $[f]$ of
actions, {\em restriction} $\!\!\rs\!\!$ over a single action, {\em synchronization} $\!\!\sy\!\!$ on an action and its
conjugate, and {\em iteration} $[\,*\,*\,]$ with three arguments: initialization, body and termination.

Sequential execution and choice have a standard interpretation, like in other process algebras, but parallelism does
not include synchronization, unlike the operation in CCS \cite{Mil89}.

Relabeling functions $f:\mathcal{A}\rightarrow\mathcal{A}$ are bijections preserving conjugates, i.e. $\forall
x\in\mathcal{A},\ f(\hat{x})=\widehat{f(x)}$. Relabeling is extended to multiactions
as usual: for $\alpha\in\mathcal{L}$, we define $f(\alpha )=\sum_{x\in\alpha}f(x)$. Relabeling is extended to the
multisets of activities as follows: for $\Upsilon\in\naturals_{\rm fin}^\mathcal{SIL}$, we define $f(\Upsilon
)=\sum_{(\alpha ,\kappa )\in\Upsilon}(f(\alpha ),\kappa )$.

Restriction over an elementary action $a\in Act$ means that, for a given expression, any process behaviour containing
$a$ or its conjugate $\hat{a}$ is not allowed.

Let $\alpha ,\beta\in\mathcal{L}$ be two multiactions such that for some elementary action $a\in Act$ we have
$a\in\alpha$ and $\hat{a}\in\beta$, or $\hat{a}\in\alpha$ and $a\in\beta$. Then, synchronization of $\alpha$ and
$\beta$ by $a$ is defined as\\
$(\alpha\oplus_a\beta )(x)=\left\{
\begin{array}{ll}
\alpha (x)+\beta (x)-1, & \mbox{if }x=a\mbox{ or }x=\hat{a};\\
\alpha (x)+\beta (x), & \mbox{otherwise}.
\end{array}
\right.$\\
In other words, we require that $\alpha\oplus_a\beta =\alpha +\beta -\{a,\hat{a}\}$, since the synchronization of $a$
and $\hat{a}$ produces $\emptyset$. Activities are synchronized by their multiaction parts, i.e. the synchronization
on $a$ of two activities, whose multiaction parts $\alpha$ and $\beta$ possess the above properties, results in the
activity with the multiaction part $\alpha\oplus_a\beta$. We may synchronize activities of the same type only: either
both stochastic multiactions or both immediate ones, since immediate multiactions have a priority over stochastic ones,
hence, stochastic and immediate multiactions cannot be executed together (note also that the execution of immediate
multiactions takes no time, unlike that of stochastic ones). Synchronization
on $a$ means that, for a given expression with a process behaviour containing two concurrent activities that can be
synchronized
on $a$, there exists also the process behaviour that differs from the former only in that the two activities are
replaced by the result of their synchronization.

In the iteration, the initialization subprocess is executed first, then the body is performed zero or more times, and,
finally, the termination subprocess is executed.

Static expressions specify the structure of processes. As we shall see, the expressions correspond to unmarked LDTSIPNs
(LDTSIPNs are marked by definition).

\begin{definition}
Let $(\alpha ,\kappa )\in\mathcal{SIL}$ and $a\in Act$. A {\em static expression} of dtsiPBC is
$$E::=\ (\alpha ,\kappa )\mid E;E\mid E\cho E\mid E\| E\mid E[f]\mid E\rs a\mid E\sy a\mid [E*E*E].$$
\end{definition}

Let $StatExpr$ denote the set of {\em all static expressions} of dtsiPBC.

To avoid technical difficulties with the iteration operator, we should not allow any concurrency at the highest level
of the second argument of iteration. This is not a severe restriction, since we can always prefix parallel expressions
by an activity with the empty multiaction part. In \cite{Tar14}, we have demonstrated that relaxing the restriction can
result in nets which are not safe. Alternatively, we can use a different, safe, version of the iteration operator, but
its net translation has six arguments \cite{BDK01}.

\begin{definition}
Let $(\alpha ,\kappa )\in\mathcal{SIL}$ and $a\in Act$. A {\em regular static expression} of dtsiPBC is
$$\begin{array}{c}
E::=\ (\alpha ,\kappa )\mid E;E\mid E\cho E\mid E\| E\mid E[f]\mid E\rs a\mid E\sy a\mid [E*D*E],\\
\mbox{where }D::=\ (\alpha ,\kappa )\mid D;E\mid D\cho D\mid D[f]\mid D\rs a\mid D\sy a\mid [D*D*E].
\end{array}$$
\end{definition}

Let $RegStatExpr$ denote the set of {\em all regular static expressions} of dtsiPBC.

Dynamic expressions specify the states of processes. As we shall see, the expressions correspond to LDTSIPNs (marked by
default). Dynamic expressions are obtained from static ones, by annotating them with upper or lower bars which specify
the active components of the system at the current moment. The dynamic expression with upper bar (the overlined one)
$\overline{E}$ denotes the {\em initial}, and that with lower bar (the underlined one) $\underline{E}$ denotes the {\em
final} state of the process specified by a static expression $E$. The {\em underlying static expression} of a dynamic
one is obtained by removing all upper and lower bars from it.

\begin{definition}
Let $E\in StatExpr$ and $a\in Act$. A {\em dynamic expression} of dtsiPBC is
$$\begin{array}{c}
G::=\ \overline{E}\mid\underline{E}\mid G;E\mid E;G\mid G\cho E\mid E\cho G\mid G\| G\mid G[f]\mid G\rs a\mid G\sy
a\mid\\

[G*E*E]\mid [E*G*E]\mid [E*E*G].
\end{array}$$
\end{definition}

Let $DynExpr$ denote the set of {\em all dynamic expressions} of dtsiPBC.

If the underlying static expression of a dynamic one is not regular, the corresponding LDTSIPN can be non-safe (but it
is $2$-bounded in the worst case \cite{BDK01}).

\begin{definition}
A dynamic expression is {\em regular} if its underlying static one is so.
\end{definition}

Let $RegDynExpr$ denote the set of {\em all regular dynamic expressions} of dtsiPBC.

\section{Operational semantics}
\label{opersem.sec}

In this section, we define the operational semantics via labeled transition systems.

\subsection{Inaction rules}

The inaction rules for dynamic expressions describe their structural transformations in the form of
$G\Rightarrow\widetilde{G}$ which do not change the states of the specified processes. The goal of these syntactic
transformations is to obtain the well-structured resulting expressions called operative ones to which no inaction rules
can be further applied. As we shall see, the application of an inaction rule to a dynamic expression does not lead to
any discrete time tick or any transition firing in the corresponding LDTSIPN, hence, its current marking remains
unchanged. An application of every inaction rule does not need a discrete time delay, i.e. the dynamic expression
transformation described by the rule is accomplished instantly.

Table \ref{inactrulesim1.tab} defines inaction rules for regular dynamic expressions in the form of overlined and
underlined static ones, where $E,F,K\in RegStatExpr$ and $a\in Act$.

\begin{table}[h]
\caption{Inaction rules for overlined and underlined regular static expressions.}
\vspace{-1mm}
\label{inactrulesim1.tab}
\begin{center}
$\begin{array}{|lll|}
\hline
\rule{0mm}{4mm}
\overline{E;F}\Rightarrow\overline{E};F &
\underline{E};F\Rightarrow E;\overline{F} &
E;\underline{F}\Rightarrow\underline{E;F}\\[1mm]

\overline{E\cho F}\Rightarrow\overline{E}\cho F &
\overline{E\cho F}\Rightarrow E\cho\overline{F} &
\underline{E}\cho F\Rightarrow\underline{E\cho F}\\[1mm]

E\cho\underline{F}\Rightarrow\underline{E\cho F} &
\overline{E\| F}\Rightarrow\overline{E}\|\overline{F} &
\underline{E}\|\underline{F}\Rightarrow\underline{E\| F}\\[1mm]

\overline{E[f]}\Rightarrow\overline{E}[f] &
\underline{E}[f]\Rightarrow\underline{E[f]} &
\overline{E\rs a}\Rightarrow\overline{E}\rs a\\[1mm]

\underline{E}\rs a\Rightarrow\underline{E\rs a} &
\overline{E\sy a}\Rightarrow\overline{E}\sy a &
\underline{E}\sy a\Rightarrow\underline{E\sy a}\\[1mm]

\overline{[E*F*K]}\Rightarrow[\overline{E}*F*K]\hspace{1mm} &
[\underline{E}*F*K]\Rightarrow[E*\overline{F}*K]\hspace{1mm} &
[E*\underline{F}*K]\Rightarrow[E*\overline{F}*K]\\[1mm]

[E*\underline{F}*K]\Rightarrow[E*F*\overline{K}]\hspace{1mm} &
[E*F*\underline{K}]\Rightarrow\underline{[E*F*K]} & \\[1mm]
\hline
\end{array}$
\end{center}
\end{table}

Table \ref{inactrulesim2.tab} presents inaction rules for regular dynamic expressions in the arbitrary form, where
$E,F\in RegStatExpr,\ G,H,\widetilde{G},\widetilde{H}\in RegDynExpr$ and $a\in Act$.

\begin{table}[h]
\caption{Inaction rules for arbitrary regular dynamic expressions.}
\vspace{-1mm}
\label{inactrulesim2.tab}
\begin{center}
$\begin{array}{|lll|}
\hline
\rule{0mm}{7mm}
\dfrac{G\Rightarrow\widetilde{G},\ \circ\in\{;,\cho\}}{G\circ E\Rightarrow\widetilde{G}\circ E} &
\dfrac{G\Rightarrow\widetilde{G},\ \circ\in\{;,\cho\}}{E\circ G\Rightarrow E\circ\widetilde{G}} &
\dfrac{G\Rightarrow\widetilde{G}}{G\| H\Rightarrow\widetilde{G}\| H}\\[4mm]

\dfrac{H\Rightarrow\widetilde{H}}{G\| H\Rightarrow G\|\widetilde{H}} &
\dfrac{G\Rightarrow\widetilde{G}}{G[f]\Rightarrow\widetilde{G}[f]} &
\dfrac{G\Rightarrow\widetilde{G},\ \circ\in\{\!\!\rs\!\!,\!\!\sy\!\!\}}{G\circ a\Rightarrow\widetilde{G}\circ a}\\[4mm]

\dfrac{G\Rightarrow\widetilde{G}}{[G*E*F]\Rightarrow [\widetilde{G}*E*F]} &
\dfrac{G\Rightarrow\widetilde{G}}{[E*G*F]\Rightarrow [E*\widetilde{G}*F]} &
\dfrac{G\Rightarrow\widetilde{G}}{[E*F*G]\Rightarrow [E*F*\widetilde{G}]}\\[4mm]
\hline
\end{array}$
\end{center}
\end{table}

\begin{definition}
A regular dynamic expression $G$ is {\em operative} if no inaction rule can be applied to it.
\end{definition}

Let $OpRegDynExpr$ denote the set of {\em all operative regular dynamic expressions} of dtsiPBC. Note that any dynamic
expression can be always transformed into a (not necessarily unique) operative one by using the inaction rules. In the
following, we only consider regular expressions
and omit the word ``regular''.

\begin{definition}
The relation $\approx\ =(\Rightarrow\cup\Leftarrow )^*$ is a {\em structural equivalence} of dynamic expressions in
dtsiPBC. Thus, two dynamic expressions $G$ and $G'$ are {\em structurally equivalent}, denoted by $G\approx G'$, if
they can be reached from
one another by applying the inaction rules in a forward or backward direction.
\end{definition}

\subsection{Action and empty loop rules}

The action rules are applied when some activities are executed. With these rules we capture the prioritization of
immediate multiactions w.r.t. stochastic ones. We also have the empty loop rule which is used to capture a delay of one
discrete time unit in the same state when no immediate multiactions are executable. In this case, the empty multiset of
activities is executed. The action and empty loop rules will be used later to determine all multisets of activities
which can be executed from the structural equivalence class of every dynamic expression (i.e. from the state of the
corresponding process). This information together with that about probabilities or weights of the activities to be
executed from the current process state will be used to calculate the probabilities of such executions.

The action rules with stochastic (or immediate, otherwise) multiactions describe dynamic expression transformations in
the form of $G\stackrel{\Gamma}{\rightarrow}\widetilde{G}$ (or $G\stackrel{I}{\rightarrow}\widetilde{G}$) due to
execution of non-empty multisets $\Gamma$ of stochastic (or $I$ of immediate) multiactions. The rules represent
possible state changes of the specified processes when some non-empty multisets of stochastic (or immediate)
multiactions are executed. As we shall see, the application of an action rule with stochastic (or immediate)
multiactions to a dynamic expression leads in the corresponding LDTSIPN to a discrete time tick at which some
stochastic transitions fire (or to the instantaneous firing of some immediate transitions) and possible change of the
current marking. The current marking remains unchanged only if there is a self-loop produced by the iterative execution
of a non-empty multiset, which must be one-element, i.e. the single stochastic (or immediate) multiaction. The reason
is the regularity requirement that allows no concurrency at the highest level of the second argument of iteration.

The empty loop rule (applicable only when no immediate multiactions can be executed from the current state) describes
dynamic expression transformations in the form of $G\stackrel{\emptyset}{\rightarrow}G$ due to execution of the empty
multiset of activities at a discrete time tick. The rule reflects a non-zero probability to stay in the current state
at the next moment, which is a feature of discrete time stochastic processes. As we shall see, the application of the
empty loop rule to a dynamic expression leads to a discrete time tick in the corresponding LDTSIPN at which no
transitions fire and the current marking is not changed. This is a new rule with no prototype among inaction rules of
PBC, since it represents a time delay, but PBC has no notion of time. The PBC rule
$G\stackrel{\emptyset}{\rightarrow}G$ from \cite{BKo95,BDK01} in our setting would correspond to
a rule $G\Rightarrow G$ that describes staying in the current state when no time elapses. Since we do not need the
latter rule to transform dynamic expressions into operative ones and it can destroy the definition of operative
expressions, we do not have it.

Thus, an application of every action rule with stochastic multiactions or the empty loop rule requires one discrete
time unit delay, i.e. the execution of a (possibly empty) multiset of stochastic multiactions leading to the dynamic
expression transformation described by the rule is accomplished
after one time unit.
However, an application of every action rule with immediate multiactions does not take any time, i.e. the execution of
a (non-empty) multiset of immediate multiactions is accomplished instantly at the current
time.

Note that expressions of dtsiPBC can contain identical activities. To avoid technical difficulties, such as the proper
calculation of the state change probabilities for multiple transitions, we can always enumerate coinciding activities
from left to right in the syntax of expressions. The new activities resulted from synchronization will be annotated
with concatenation of numberings of the activities they come from, hence, the numbering should have a tree structure to
reflect the effect of multiple synchronizations. We now define the numbering which encodes a binary tree with the
leaves labeled by natural numbers.

\begin{definition}
The {\em numbering} of expressions is $\iota ::=\ n\mid (\iota )(\iota )$, where $n\in\naturals$.
\end{definition}

Let $Num$ denote the set of {\em all numberings} of expressions.

\begin{example}
The numbering $1$ encodes the binary tree in Figure \ref{bintrnum.fig}(a) with the root labeled by $1$. The numbering
$(1)(2)$ corresponds to the binary tree in Figure \ref{bintrnum.fig}(b) without internal nodes and with two leaves
labeled by $1$ and $2$. The numbering $(1)((2)(3))$ represents the binary tree in Figure \ref{bintrnum.fig}(c) with one
internal node, which is the root for the subtree $(2)(3)$, and three leaves labeled by $1,2$ and $3$.
\end{example}

\begin{figure}
\begin{center}
\includegraphics[scale=0.9]{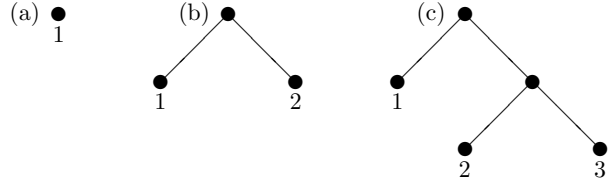}
\end{center}
\vspace{-6mm}
\caption{The binary trees encoded with the numberings $1,\ (1)(2)$ and $(1)((2)(3))$.}
\label{bintrnum.fig}
\end{figure}

The new activities resulting from synchronizations in different orders should be considered up to permutation of their
numbering. In this way, we shall recognize different instances of the same activity. If we compare the contents of
different numberings, i.e. the sets of natural numbers in them, we shall identify the mentioned instances. The {\em
content} of a numbering $\iota\in Num$ is\\
$Cont(\iota )=\left\{
\begin{array}{ll}
\{\iota\}, & \mbox{if }\iota\in\naturals;\\
Cont(\iota_1)\cup Cont(\iota_2), & \mbox{if }\iota =(\iota_1)(\iota_2).
\end{array}
\right.$

After the enumeration, the multisets of activities from the expressions will become the proper sets. Suppose that the
identical activities are enumerated when needed to avoid ambiguity. This enumeration is considered to be implicit.

Let $X$ be some set. We denote the Cartesian product $X\times X$ by $X^2$. Let $\mathcal{E}\subseteq X^2$ be an
equivalence relation on $X$. Then the {\em equivalence class} (w.r.t. $\mathcal{E}$) of an element $x\in X$ is defined
by $[x]_\mathcal{E}=\{y\in X\mid (x,y)\in\mathcal{E}\}$. The equivalence $\mathcal{E}$ partitions $X$ into the {\em set
of equivalence classes} $X/_\mathcal{E}=\{[x]_\mathcal{E}\mid x\in X\}$.

Let $G$ be a dynamic expression. Then $[G]_\approx =\{H\mid G\approx H\}$ is the equivalence class of $G$ w.r.t. the
structural equivalence. $G$ is an {\em initial} dynamic expression, denoted by $init(G)$, if $\exists E\in
RegStatExpr,\ G\in [\overline{E}]_\approx$. $G$ is a {\em final} dynamic expression, denoted by $final(G)$, if\\
$\exists E\in RegStatExpr,\ G\in [\underline{E}]_\approx$.

\begin{definition}
Let $G\in OpRegDynExpr$. We define the {\em set of all non-empty sets of activities which can be potentially executed
from $G$}, denoted by $Can(G)$. Let $(\alpha ,\kappa )\in\mathcal{SIL},\ E,F\in RegStatExpr,\ H\in OpRegDynExpr$ and
$a\in Act$.
\begin{enumerate}

\item If $final(G)$ then $Can(G)=\emptyset$.

\item If $G=\overline{(\alpha ,\kappa )}$ then $Can(G)=\{\{(\alpha ,\kappa )\}\}$.

\item If $\Upsilon\in Can(G)$ then $\Upsilon\in Can(G\circ E),\ \Upsilon\in Can(E\circ G)\ (\circ\in\{;,\cho\}),\
\Upsilon\in Can(G\| H),\\
\Upsilon\in Can(H\| G),\ f(\Upsilon )\in Can(G[f]),\ \Upsilon\in Can(G\rs a)\ (\mbox{when
}a,\hat{a}\not\in\mathcal{A}(\Upsilon )),\\
\Upsilon\in Can(G\sy a),\ \Upsilon\in Can([G*E*F]),\ \Upsilon\in Can([E*G*F]),\ \Upsilon\in Can([E*F*G])$.

\item If $\Upsilon\in Can(G)$ and $\Xi\in Can(H)$ then $\Upsilon +\Xi\in Can(G\| H)$.

\item If $\Upsilon\in Can(G\sy a)$ and $(\alpha ,\kappa ),(\beta ,\lambda )\in\Upsilon$ are different activities,
$a\in\alpha ,\ \hat{a}\in\beta$, then
\begin{enumerate}

\item $(\Upsilon +\{(\alpha\oplus_a\beta ,\kappa\cdot\lambda )\})\setminus\{(\alpha ,\kappa ),(\beta ,\lambda )\}\in
Can(G\sy a)$ if $\kappa ,\lambda\in (0;1)$;

\item $(\Upsilon +\{(\alpha\oplus_a\beta ,\natural_{l+m})\})\setminus\{(\alpha ,\kappa ),(\beta ,\lambda )\}\in
Can(G\sy a)$ if $\kappa =\natural_l,\ \lambda =\natural_m,\ l,m\in\reals_{>0}$.

When we synchronize the same set of activities in different orders, we obtain several activities with the same
multiaction and probability or weight parts, but with different numberings having the same content. Then, we
only consider a single one of the resulting activities.

For example, the synchronization of stochastic multiactions $(\alpha ,\rho )_1$ and $(\beta ,\chi )_2$ in
different orders generates the activities $(\alpha\oplus_a\beta ,\rho\cdot\chi )_{(1)(2)}$ and
$(\beta\oplus_a\alpha ,\chi\cdot\rho )_{(2)(1)}$. Similarly, the synchronization of immediate multiactions
$(\alpha ,\natural_l)_1$ and $(\beta ,\natural_m)_2$ in different orders generates the activities
$(\alpha\oplus_a\beta ,\natural_{l+m})_{(1)(2)}$ and $(\beta\oplus_a\alpha ,\natural_{m+l})_{(2)(1)}$. Since
$Cont((1)(2))=\{1,2\}=Cont((2)(1))$, in both cases, only the first activity (or the second one) resulting from
synchronization will appear in a set from $Can(G\sy a)$.

\end{enumerate}
\end{enumerate}
\end{definition}

Note that if $\Upsilon\in Can(G)$ then by definition of $Can(G)$, for all $\Xi\subseteq\Upsilon ,\ \Xi\neq\emptyset$,
we have $\Xi\in Can(G)$.

Let $G\in OpRegDynExpr$. Obviously, if there are only stochastic (or only immediate) multiactions in the sets from
$Can(G)$ then these stochastic (or immediate) multiactions can be executed from $G$. Otherwise, besides stochastic
ones, there are also immediate multiactions in the sets from $Can(G)$. By the note above, there are non-empty sets of
immediate multiactions in $Can(G)$ as well, i.e. $\exists\Upsilon\in Can(G),\ \Upsilon\in\naturals_{\rm
fin}^\mathcal{IL}\setminus\{\emptyset\}$. Then no stochastic multiactions can be executed from $G$, even if $Can(G)$
contains non-empty sets of stochastic multiactions, since immediate multiactions have a priority over stochastic ones.

\begin{definition}
Let $G\in OpRegDynExpr$. The {\em set of all non-empty sets of activities which can be execu\-ted from $G$} is
$Now(G)=\left\{
\begin{array}{ll}
Can(G), & \mbox{if }(Can(G)\subseteq\naturals_{\rm fin}^\mathcal{SL}\setminus\{\emptyset\})\vee
(Can(G)\subseteq\naturals_{\rm fin}^\mathcal{IL}\setminus\{\emptyset\});\\
Can(G)\cap\naturals_{\rm fin}^\mathcal{IL}, & \mbox{otherwise}.
\end{array}
\right.$

\end{definition}

An expression $G\in OpRegDynExpr$ is {\em tangible}, denoted by $tang(G)$, if $Now(G)\subseteq\naturals_{\rm
fin}^\mathcal{SL}\setminus\{\emptyset\}$. Otherwise, the expression $G$ is {\em vanishing}, denoted by $vanish(G)$, and
in this case $Now(G)\subseteq\naturals_{\rm fin}^\mathcal{IL}\setminus\{\emptyset\}$.

\begin{example}
Let $G=(\overline{(\{a\},\natural_1)}\cho (\{b\},\natural_2))\|\overline{(\{c\},\frac{1}{2})}$ and
$G'=((\{a\},\natural_1)\cho\overline{(\{b\},\natural_2)})\|\overline{(\{c\},\frac{1}{2})}$. Then $G\approx G'$, since
$G\Leftarrow G''\Rightarrow G'$ for $G''=(\overline{(\{a\},\natural_1)\cho(\{b\},\natural_2)}\|
\overline{(\{c\},\frac{1}{2})}$, but $Can(G)=\{\{(\{a\},\natural_1)\},\\
\{(\{c\},\frac{1}{2})\},\ \{(\{a\},\natural_1),(\{c\},\frac{1}{2})\}\},\ Can(G')=\{\{(\{b\},\natural_2)\},
\{(\{c\},\frac{1}{2})\},\{(\{b\},\natural_2),(\{c\},\frac{1}{2})\}\}$ and\\
$Now(G)=\{\{(\{a\},\natural_1)\}\}, Now(G')=\{\{(\{b\},\natural_2)\}\}$. Clearly, we have $vanish(G)$ and $vanish(G')$.
The executions like that of $\{(\{c\},\frac{1}{2})\}$ (and all sets including it) from $G$ and $G'$ must be disabled
using preconditions in the action rules, since immediate multiactions have a priority over stochastic ones, hence, the
former are always executed first.

Let $H=\overline{(\{a\},\natural_1)}\cho (\{b\},\frac{1}{2})$ and
$H'=(\{a\},\natural_1)\cho\overline{(\{b\},\frac{1}{2})}$. Then $H\approx H'$, since $H\Leftarrow H''\Rightarrow H'$
for $H''=\overline{(\{a\},\natural_1)\cho (\{b\},\frac{1}{2})}$, but $Can(H)=Now(H)=\{\{(\{a\},\natural_1)\}\}$ and
$Can(H')=Now(H')=\{\{(\{b\},\frac{1}{2})\}\}$. We have $vanish(H)$, but $tang(H')$. To get the action rules correct
under structural equivalence, the executions like that of $\{(\{b\},\frac{1}{2})\}$ from $H'$ must be disabled using
preconditions in the action rules, since immediate multiactions have a priority over stochastic ones, hence, the
choices are always resolved in favour of the former.
\label{cannow.exm}
\end{example}

In Table \ref{actrulesim.tab}, we define the action and empty loop rules. In this table, $(\alpha ,\rho ),(\beta ,\chi
)\in\mathcal{SL},\ (\alpha ,\natural_l),\\
(\beta ,\natural_m)\in\mathcal{IL}$ and $(\alpha ,\kappa )\in\mathcal{SIL}$. Further, $E,F\in RegStatExpr,\ G,H\in
OpRegDynExpr,\ \widetilde{G},\widetilde{H}\in RegDynExpr$ and $a\in Act$. Moreover, $\Gamma ,\Delta\in\naturals_{\rm
fin}^\mathcal{SL}\setminus\{\emptyset\},\ \Gamma '\in\naturals_{\rm fin}^\mathcal{SL},\ I,J\in\naturals_{\rm
fin}^\mathcal{IL}\setminus\{\emptyset\},\ I'\in\naturals_{\rm fin}^\mathcal{IL}$ and $\Upsilon\in\naturals_{\rm
fin}^\mathcal{SIL}\setminus\{\emptyset\}$. The first rule is the empty loop rule {\bf El}. The other rules are the
action rules, describing transformations of dynamic expressions, which are built using particular algebraic operations.
If we cannot merge a rule with stochastic multiactions and a rule with immediate multiactions for some operation then
we get the coupled action rules. Then the names of the action rules with immediate multiactions have a suffix ``${\bf
i}$''.

\begin{table}
\caption{Action and empty loop rules.}
\vspace{-1mm}
\label{actrulesim.tab}
\begin{center}
$\begin{array}{|ll|}
\hline
\multicolumn{2}{|l|}{\rule{0mm}{7.5mm}{\bf El}\ \dfrac{tang(G)}{G\stackrel{\emptyset}{\rightarrow}G}\hspace{9mm}
{\bf B}\ \overline{(\alpha ,\kappa )}\stackrel{\{(\alpha ,\kappa )\}}
{\longrightarrow}\underline{(\alpha ,\kappa )}\hspace{9mm}
{\bf S}\ \dfrac{G\stackrel{\Upsilon}{\rightarrow}\widetilde{G}}{G;E\stackrel{\Upsilon}{\rightarrow}\widetilde{G};E,\
E;G\stackrel{\Upsilon}{\rightarrow}E;\widetilde{G}}}\\[5mm]

{\bf C}\ \dfrac{G\stackrel{\Gamma}{\rightarrow}\widetilde{G},\ \neg init(G)\vee (init(G)\wedge tang(\overline{E}))}
{G\cho E\stackrel{\Gamma}{\rightarrow}\widetilde{G}\cho E,\ E\cho G\stackrel{\Gamma}{\rightarrow}E\cho\widetilde{G}} &
{\bf Ci}\ \dfrac{G\stackrel{I}{\rightarrow}\widetilde{G}}{G\cho E\stackrel{I}{\rightarrow}\widetilde{G}\cho E,\ E\cho
G\stackrel{I}{\rightarrow}E\cho\widetilde{G}}\\[5mm]

{\bf P1}\ \dfrac{G\stackrel{\Gamma}{\rightarrow}\widetilde{G},\ tang(H)}{G\|
H\stackrel{\Gamma}{\rightarrow}\widetilde{G}\| H,\ H\| G\stackrel{\Gamma}{\rightarrow}H\|\widetilde{G}} &
{\bf P1i}\ \dfrac{G\stackrel{I}{\rightarrow}\widetilde{G}}{G\| H\stackrel{I}{\rightarrow}\widetilde{G}\| H,\
H\| G\stackrel{I}{\rightarrow}H\|\widetilde{G}}\\[5mm]

{\bf P2}\ \dfrac{G\stackrel{\Gamma}{\rightarrow}\widetilde{G},\ H\stackrel{\Delta}{\rightarrow}\widetilde{H}}
{G\| H\stackrel{\Gamma +\Delta}{\longrightarrow}\widetilde{G}\|\widetilde{H}} &
{\bf P2i}\ \dfrac{G\stackrel{I}{\rightarrow}\widetilde{G},\ H\stackrel{J}{\rightarrow}\widetilde{H}}
{G\| H\stackrel{I+J}{\longrightarrow}\widetilde{G}\|\widetilde{H}}\\[5mm]

{\bf L}\ \dfrac{G\stackrel{\Upsilon}{\rightarrow}\widetilde{G}}{G[f]\stackrel{f(\Upsilon
)}{\longrightarrow}\widetilde{G}[f]} &
{\bf Rs}\ \dfrac{G\stackrel{\Upsilon}{\rightarrow}\widetilde{G},\ a,\hat{a}\not\in\mathcal{A}(\Upsilon )}{G\rs
a\stackrel{\Upsilon}{\rightarrow}\widetilde{G}\rs a}\\[5mm]

\multicolumn{2}{|l|}{{\bf I1}\ \dfrac{G\stackrel{\Upsilon}{\rightarrow}\widetilde{G}}
{[G*E*F]\stackrel{\Upsilon}{\rightarrow}[\widetilde{G}*E*F]}\hspace{9mm}
{\bf I2}\ \dfrac{G\stackrel{\Gamma}{\rightarrow}\widetilde{G},\ \neg init(G)\vee (init(G)\wedge
tang(\overline{F}))}{[E*G*F]\stackrel{\Gamma}{\rightarrow}[E*\widetilde{G}*F]}}\\[5mm]

\multicolumn{2}{|l|}{{\bf I2i}\ \dfrac{G\stackrel{I}{\rightarrow}\widetilde{G}}
{[E*G*F]\stackrel{I}{\rightarrow}[E*\widetilde{G}*F]}\hspace{8mm}
{\bf I3}\ \dfrac{G\stackrel{\Gamma}{\rightarrow}\widetilde{G},\ \neg init(G)\vee (init(G)\wedge tang(\overline{F}))}
{[E*F*G]\stackrel{\Gamma}{\rightarrow}[E*F*\widetilde{G}]}}\\[5mm]

{\bf I3i}\ \dfrac{G\stackrel{I}{\rightarrow}\widetilde{G}}{[E*F*G]\stackrel{I}{\rightarrow}[E*F*\widetilde{G}]} &
{\bf Sy1}\ \dfrac{G\stackrel{\Upsilon}{\rightarrow}\widetilde{G}}
{G\sy a\stackrel{\Upsilon}{\rightarrow}\widetilde{G}\sy a}\\[5mm]

\multicolumn{2}{|l|}{{\bf Sy2}\ \dfrac{G\sy a\xrightarrow{\Gamma '+\{(\alpha ,\rho )\}+\{(\beta ,\chi )\}}
\widetilde{G}\sy a,\ a\in\alpha ,\ \hat{a}\in\beta}
{G\sy a\xrightarrow{\Gamma '+\{(\alpha\oplus_a\beta,\rho\cdot\chi )\}}\widetilde{G}\sy a}}\\[5mm]

\multicolumn{2}{|l|}{{\bf Sy2i}\ \dfrac{G\sy a\xrightarrow{I'+\{(\alpha ,\natural_l)\}+
\{(\beta ,\natural_m)\}}\widetilde{G}\sy a,\ a\in\alpha ,\ \hat{a}\in\beta}{G\sy
a\xrightarrow{I'+\{(\alpha\oplus_a\beta ,\natural_{l+m})\}}\widetilde{G}\sy a}}\\[5mm]
\hline
\end{array}$
\end{center}
\end{table}

Almost all the rules in Table \ref{actrulesim.tab} (excepting {\bf El}, {\bf P2}, {\bf P2i}, {\bf Sy2} and {\bf Sy2i})
resemble those of gsPBC \cite{MVCR08}, but the former correspond to execution of sets of activities, not of single
activities, as in the latter, and our rules have simpler preconditions (if any), since all immediate multiactions in
dtsiPBC have the same priority level, unlike those of gsPBC. The preconditions in rules {\bf El}, {\bf C}, {\bf P1},
{\bf I2} and {\bf I3} are needed to ensure that (possibly empty) sets of stochastic multiactions are executed only from
{\em tangible} operative dynamic expressions, such that all operative dynamic expressions structurally equivalent to
them are tangible as well. For example, if $init(G)$ in rule {\bf C} then $G=\overline{F}$ for some static expression
$F$ and $G\cho E=\overline{F}\cho E\approx F\cho\overline{E}$. Hence, it should be guaranteed that
$tang(F\cho\overline{E})$, which holds iff $tang(\overline{E})$. The case $E\cho G$ is treated similarly. Further, in
rule {\bf P1}, assuming that $tang(G)$, it should be guaranteed that $tang(G\| H)$ and $tang(H\| G)$, which holds iff
$tang(H)$. The preconditions in rules {\bf I2} and {\bf I3} are analogous to that in rule {\bf C}.

Rule {\bf El} corresponds to one discrete time unit delay while executing no activities and therefore it has no
analogues among the rules of gsPBC that adopts the continuous time model. Rules {\bf P2} and {\bf P2i} have no similar
rules in gsPBC, since interleaving semantics of the algebra allows no simultaneous execution of activities. {\bf P2}
and {\bf P2i} have in PBC the analogous rule {\bf PAR} that is used to construct step semantics of the calculus, but
the former two rules correspond to execution of sets of activities, unlike that of multisets of multiactions in the
latter rule. Rules {\bf Sy2} and {\bf Sy2i} differ from the corresponding synchronization rules in gsPBC, since the
probability or the weight of synchronization in the former rules and the rate or the weight of synchronization in the
latter rules are calculated in two distinct ways.

Rule {\bf Sy2} establishes that the synchronization of two stochastic multiactions is made by taking the product of
their probabilities, since we are considering that both must occur for the synchronization to happen, so this
corresponds, in some sense, to the probability of the independent event intersection, but the real situation is more
complex, since these stochastic multiactions can be also executed in parallel. Nevertheless, when scoping (the combined
operation consisting of synchronization followed by restriction over the same action \cite{BDK01}) is applied over a
parallel execution, we get as final result just the simple product of the probabilities, since no normalization is
needed there. Multiplication is an associative and commutative binary operation that is distributive over addition,
i.e. it fulfills all practical conditions imposed on the synchronization operator in \cite{Hil94}. Further, if both
arguments of multiplication are from $(0;1)$ then the result belongs to the same interval, hence, multiplication
naturally maintains probabilistic compositionality in our model. Our approach is similar to the multiplication of rates
of the synchronized actions in MTIPP \cite{HR94} in the case when the rates are less than $1$. Moreover, for the
probabilities $\rho$ and $\chi$ of two stochastic multiactions to be synchronized we have $\rho\cdot\chi <\min\{\rho
,\chi\}$, i.e. multiplication meets the performance requirement stating that the probability of the resulting
synchronized stochastic multiaction should be less than the probabilities of the two ones to be synchronized.
In terms of performance evaluation, it is usually supposed that the execution of two components together require more
system resources and time than the execution of each single one. This resembles the {\em bounded capacity} assumption
from \cite{Hil94}. Thus, multiplication is easy to handle with and it satisfies the algebraic, probabilistic, time and
performance requirements. Therefore, we have chosen the product of the probabilities for the synchronization. See also
\cite{BKLL95,BrHe01} for a discussion about binary operations producing the rates of synchronization in the continuous
time setting.

In rule {\bf Sy2i}, we sum the weights of two synchronized immediate multiactions, since the weights can be interpreted
as the rewards \cite{Ros96}, which we collect. Next, we express that the synchronized execution of immediate
multiactions has more importance than that of every single one. The weights of immediate multiactions can be also seen
as bonus rewards associated with transitions \cite{BBr01}. The rewards are summed during synchronized execution of
immediate multiactions, since in this case all the synchronized activities can be seen as
participated in the execution. We prefer to collect more rewards, thus, the transitions providing greater rewards will
have a preference and they will be executed with a greater probability. Since execution of immediate multiactions takes
no time, we prefer to execute in a step as many synchronized immediate multiactions as possible to get more progress in
behaviour. Under behavioural progress we mean an advance in executing activities, which does not always imply a
progress in time, as when the activities are immediate multiactions. This aspect will be used later, while evaluating
performance via the embedded discrete time Markov chains (EDTMCs) of expressions. Since every state change in EDTMC
takes one unit of (local) time, greater advance in operation of the EDTMC allows one to calculate quicker performance
indices.

We do not have a self-synchronization, i.e. a synchronization of an activity with itself, since all the (enumerated)
activities executed together are considered to be different. This permits to avoid unexpected behaviour and technical
difficulties \cite{BDK01}.

In Table \ref{rulescompim.tab}, inaction rules, action rules (with stochastic or immediate multiactions) and empty loop
rule are compared according to the three aspects of their application: whether it changes the current state, whether it
leads to a time progress, and whether it results in execution of some activities. Positive answers to the questions are
denoted by the plus sign while negative ones are specified by the minus sign. If both positive and negative answers can
be given to some of the questions in different cases then the plus-minus sign is written. The process states are
considered up to structural equivalence of the corresponding expressions, and time progress is not regarded as a state
change.

\begin{table}
\caption{Comparison of inaction, action and empty loop rules.}
\vspace{-1mm}
\label{rulescompim.tab}
\begin{center}
\small\begin{tabular}{|c||c|c|c|}
\hline
Rules & State change & Time progress & Activities execution\\
\hline
Inaction rules & $-$ & $-$ & $-$\\
Action rules with stochastic multiactions & $\pm$ & $+$ & $+$\\
Action rules with immediate multiactions & $\pm$ & $-$ & $+$\\
Empty loop rule & $-$ & $+$ & $-$\\
\hline
\end{tabular}
\end{center}
\end{table}

\subsection{Transition systems}

We now construct labeled probabilistic transition systems associated with dynamic expressions to define their
operational semantics.

\begin{definition}
The {\em derivation set} of a dynamic expression $G$, denoted by $DR(G)$, is the minimal set with
\begin{itemize}

\item $[G]_\approx\in DR(G)$;

\item if $[H]_\approx\in DR(G)$ and $\exists\Upsilon ,\ H\stackrel{\Upsilon}{\rightarrow}\widetilde{H}$ then
$[\widetilde{H}]_\approx\in DR(G)$.

\end{itemize}
\end{definition}

Let $G$ be a dynamic expression and $s,\tilde{s}\in DR(G)$.

The set of {\em all sets of activities executable in} $s$ is defined as $Exec(s)=\{\Upsilon\mid\exists H\in s,\
\exists\widetilde{H},\ H\stackrel{\Upsilon}{\rightarrow}\widetilde{H}\}$. It can be proved by induction on the
structure of expressions that $\Upsilon\in Exec(s)\setminus\{\emptyset\}$ implies $\exists H\in s,\ \Upsilon\in
Now(H)$. The reverse statement does not hold in general, as the next example shows.

\begin{example}
Let $H,H'$ be from Example \ref{cannow.exm} and $s\!=\![H]_\approx\!=\![H']_\approx$. We have
$Now(H)=\{\{(\{a\},\natural_1)\}\}$ and $Now(H')=\{\{(\{b\},\frac{1}{2})\}\}$. Since only rules {\bf Ci} and {\bf B}
can be applied to $H$,
and no action rule can be applied to $H'$, we get $Exec(s)=\{\{(\{a\},\natural_1)\}\}$. Then, for $H'\in s$ and
$\Upsilon =\{(\{b\},\frac{1}{2})\}\in Now(H')$, we get $\Upsilon\not\in Exec(s)$.
\end{example}

The state $s$ is {\em tangible} if $Exec(s)\subseteq\naturals_{\rm fin}^\mathcal{SL}$. For tangible states we may have
$Exec(s)=\{\emptyset\}$. Otherwise, the state $s$ is {\em vanishing}, and in this case $Exec(s)\subseteq\naturals_{\rm
fin}^\mathcal{IL}\setminus\{\emptyset\}$. The set of {\em all tangible states from $DR(G)$} is denoted by $DR_{\rm
T}(G)$, and the set of {\em all vanishing states from $DR(G)$} is denoted by $DR_{\rm V}(G)$. Clearly, $DR(G)=DR_{\rm
T}(G)\uplus DR_{\rm V}(G)$ ($\uplus$ denotes disjoint union).

Note that if $\Upsilon\in Exec(s)$ then by rules {\bf P2}, {\bf P2i}, {\bf Sy2}, {\bf Sy2i} and definition of
$Exec(s)$, for all $\Xi\subseteq\Upsilon ,\ \Xi\neq\emptyset$, we have $\Xi\in Exec(s)$.

Since the inaction rules only distribute and move upper and lower bars along the syntax of dynamic expressions, all
$H\in s$ have the same underlying static expression $F$. The action rules {\bf Sy2} and {\bf Sy2i} are the only ones
that generate new activities. Since we have a finite number of operators $\sy$ in $F$ and all the multiaction parts of
the activities are finite multisets, the number of the new synchronized activities is also finite. The action rules
contribute to $Exec(s)$ (in addition to the empty set, if rule {\bf El} is applicable) only the sets consisting both of
activities from $F$ and the new activities, produced by {\bf Sy2} and {\bf Sy2i}. Since we have a finite number of such
activities, the set $Exec(s)$ is finite, hence, summation and multiplication by its elements are well-defined. Similar
reasoning can be used to demonstrate that for all dynamic expressions $H$ (not just for those from $s$), $Now(H)$ is a
finite set.

Let $\Upsilon\in Exec(s)\setminus\{\emptyset\}$. The {\em probability that the set of stochastic multiactions
$\Upsilon$ is ready for execution in $s$} or the {\em weight of the set of immediate multiactions $\Upsilon$ which is
ready for execution in $s$} is
$$PF(\Upsilon ,s)=
\left\{
\begin{array}{ll}
\displaystyle\prod_{(\alpha ,\rho )\in\Upsilon}\rho\cdot\prod_{\{\{(\beta ,\chi )\}\in Exec(s)\mid (\beta ,\chi
)\not\in\Upsilon\}}(1-\chi ), & \mbox{if }s\in DR_{\rm T}(G);\\
\displaystyle\sum_{(\alpha ,\natural_l)\in\Upsilon}l, & \mbox{if }s\in DR_{\rm V}(G).
\end{array}
\right.$$

In the case $\Upsilon =\emptyset$ and $s\in DR_{\rm T}(G)$ we define
$$PF(\emptyset ,s)=
\left\{
\begin{array}{ll}
\displaystyle\prod_{\{(\beta ,\chi )\}\in Exec(s)}(1-\chi ), & \mbox{if }Exec(s)\neq\{\emptyset\};\\
1, & \mbox{if }Exec(s)=\{\emptyset\}.
\end{array}
\right.$$

If $s\in DR_{\rm T}(G)$ and $Exec(s)\neq\{\emptyset\}$ then $PF(\Upsilon ,s)$ can be interpreted as a {\em joint}
probability of independent events (in a probability sense, i.e. the probability of intersection of these events is
equal to the product of their probabilities). Each such an event consists in the positive or negative decision to be
executed of a particular stochastic multiaction. Every executable stochastic multiaction decides probabilistically
(using its probabilistic part) and independently (from others), if it wants to be executed in $s$. If $\Upsilon$ is a
set of all executable stochastic multiactions which have decided to be executed in $s$ and $\Upsilon\in Exec(s)$ then
$\Upsilon$ is ready for execution in $s$. The multiplication in the definition is used because it reflects the
probability of the independent event intersection. Alternatively, when $\Upsilon\neq\emptyset ,\ PF(\Upsilon ,s)$ can
be interpreted as the probability to execute {\em exclusively} the set of stochastic multiactions $\Upsilon$ in $s$,
i.e. the probability of {\em intersection} of two events calculated using the conditional probability formula in the
form
${\sf P}(X\cap Y)={\sf P}(X|Y){\sf P}(Y)=\prod_{(\alpha ,\rho )\in\Upsilon}\rho\cdot\prod_{\{\{(\beta ,\chi )\}\in
Exec(s)\mid (\beta ,\chi )\not\in\Upsilon\}}(1-\chi )$, as shown in \cite{TMV14}. When $\Upsilon =\emptyset ,\
PF(\Upsilon ,s)$ can be interpreted as the probability not to execute in $s$ any executable stochastic multiactions,
thus, $PF(\emptyset ,s)=\prod_{\{(\beta ,\chi )\}\in Exec(s)}(1-\chi )$. When only the empty set of activities can be
executed in $s$, i.e. $Exec(s)=\{\emptyset\}$, we take $PF(\emptyset ,s)=1$, since then we stay in $s$. For $s\in
DR_{\rm T}(G)$ we have $PF(\emptyset ,s)\in (0;1]$, hence, we can stay in $s$ at the next time moment with a certain
positive probability.

If $s\in DR_{\rm V}(G)$ then $PF(\Upsilon ,s)$ can be interpreted as the {\em overall (cumulative)} weight of the
immediate multiactions from $\Upsilon$, i.e. the sum of all their weights. The summation here is used since the weights
can be seen as the rewards which are collected \cite{Ros96}. In addition, this means that concurrent execution of the
immediate multiactions has more importance than that of every single one.
Thus, this reasoning is the same as that used to define the weight of synchronized immediate multiactions in the rule
{\bf Sy2i}.

Note that the definition of $PF(\Upsilon ,s)$ (as well as our definitions of other probability functions) is based on
the enumeration of activities which is considered implicit.

Let $\Upsilon\in Exec(s)$. Besides $\Upsilon$, some other sets of activities may be ready for execution in $s$, hence,
a kind of conditioning or normalization is needed to calculate the execution probability. The {\em probability to
execute the set of activities $\Upsilon$ in $s$} is
$$PT(\Upsilon ,s)=\frac{PF(\Upsilon ,s)}{\displaystyle\sum_{\Xi\in Exec(s)}PF(\Xi ,s)}.$$

If $s\in DR_{\rm T}(G)$ then $PT(\Upsilon ,s)$ can be interpreted as the {\em conditional} probability to execute
$\Upsilon$ in $s$ calculated using the conditional probability formula in the form ${\sf P}(Z|W)=\frac{{\sf P}(Z\cap
W)}{{\sf P}(W)}
=\frac{PF(\Upsilon ,s)}{\sum_{\Xi\in Exec(s)}PF(\Xi ,s)}$, as shown in \cite{TMV14}. Note that $PF(\Upsilon ,s)$ can be
seen as the {\em potential} probability to execute $\Upsilon$ in $s$, since we have $PF(\Upsilon ,s)=PT(\Upsilon ,s)$
only when {\em all} sets (including the empty one) consisting of the executable stochastic multiactions can be executed
in $s$. In this case, all the mentioned stochastic multiactions can be executed in parallel in $s$ and we have
$\sum_{\Xi\in Exec(s)}PF(\Xi ,s)=1$, since this sum collects the products of {\em all} combinations of the probability
parts of the stochastic multiactions and the negations of these parts. But in general, for example, for two stochastic
multiactions $(\alpha ,\rho )$ and $(\beta ,\chi )$ executable in $s$, it may happen that they cannot be executed in
$s$ in parallel, i.e. $\emptyset ,\{(\alpha ,\rho )\},\{(\beta ,\chi )\}\in Exec(s)$, but $\{(\alpha ,\rho ),(\beta
,\chi )\}\not\in Exec(s)$. For $s\in DR_{\rm T}(G)$ we have $PT(\emptyset ,s)\in (0;1]$, hence, there is a non-zero
probability to stay in the state $s$ at the next moment, and the residence time in $s$ is at least
one
time unit.

If $s\in DR_{\rm V}(G)$ then $PT(\Upsilon ,s)$ can be interpreted as the weight of the set of immediate multiactions
$\Upsilon$ which is ready for execution in $s$ {\em normalized} by the weights of {\em all} the sets executable in $s$.
This approach is analogous to that used in the EMPA definition of the probabilities of immediate actions executable from
the same process state \cite{BGo98} (inspired by way in which the probabilities of conflicting immediate transitions in
GSPNs are calculated \cite{Bal07}). The only difference is that we have a step semantics and, for every set of
immediate multiactions executed in parallel, we use its cumulative weight.

Note that the sum of outgoing probabilities for the expressions belonging to the derivations of $G$ is equal to $1$.
More formally, $\forall s\in DR(G),\ \sum_{\Upsilon\in Exec(s)}PT(\Upsilon ,s)=1$. This, obviously, follows from the
definition of $PT(\Upsilon ,s)$, and guarantees that it always defines a probability distribution.

The {\em probability to move from $s$ to $\tilde{s}$ by executing any set of activities} is
$$PM(s,\tilde{s})=\sum_{\{\Upsilon\mid\exists H\in s,\ \exists\widetilde{H}\in\tilde{s},\
H\stackrel{\Upsilon}{\rightarrow}\widetilde{H}\}}PT(\Upsilon ,s).$$
The summation above reflects the probability of the mutually exclusive event union, since\\
$\sum_{\{\Upsilon\mid\exists H\in s,\ \exists\widetilde{H}\in\tilde{s},\
H\stackrel{\Upsilon}{\rightarrow}\widetilde{H}\}}PT(\Upsilon ,s)=\frac{1}{\sum_{\Xi\in Exec(s)}PF(\Xi ,s)}\cdot
\sum_{\{\Upsilon\mid\exists H\in s,\ \exists\widetilde{H}\in\tilde{s},\
H\stackrel{\Upsilon}{\rightarrow}\widetilde{H}\}}PF(\Upsilon ,s)$, where\\
for each $\Upsilon ,\ PF(\Upsilon ,s)$ is the probability of the exclusive execution of $\Upsilon$ in $s$.

Note that $\forall s\in DR(G),\ \sum_{\{\tilde{s}\mid\exists H\in s,\ \exists\widetilde{H}\in\tilde{s},\
\exists\Upsilon ,\ H\stackrel{\Upsilon}{\rightarrow}\widetilde{H}\}}PM(s,\tilde{s})=\\
\sum_{\{\tilde{s}\mid\exists H\in s,\ \exists\widetilde{H}\in\tilde{s},\ \exists\Upsilon ,\
H\stackrel{\Upsilon}{\rightarrow}\widetilde{H}\}} \sum_{\{\Upsilon\mid\exists H\in s,\
\exists\widetilde{H}\in\tilde{s},\ H\stackrel{\Upsilon}{\rightarrow}\widetilde{H}\}}PT(\Upsilon ,s)=
\sum_{\Upsilon\in Exec(s)}PT(\Upsilon ,s)=1$.

\begin{example}
Let $E=(\{a\},\rho )\cho (\{a\},\chi )$, where $\rho ,\chi\in (0;1)$. $DR(\overline{E})$ consists of the equivalence
classes $s_1=[\overline{E}]_\approx$ and $s_2=[\underline{E}]_\approx$. We have $DR_{\rm T}(\overline{E})=\{s_1,s_2\}$.
The execution probabilities are calcu\-lated as follows. Since $Exec(s_1)=\{\emptyset ,\{(\{a\},\rho
)\},\{(\{a\},\chi )\}\}$, we get $PF(\{(\{a\},\rho )\},s_1)=\rho (1-\chi ),\\
PF(\{(\{a\},\chi )\},s_1)=\chi (1-\rho )$ and $PF(\emptyset ,s_1)=(1-\rho )(1-\chi )$. Then $\sum_{\Xi\in
Exec(s_1)}PF(\Xi ,s_1)=\rho (1-\chi )+\chi (1-\rho )+(1-\rho )(1-\chi )=1-\rho\chi$. Thus, $PT(\{(\{a\},\rho
)\},s_1)=\frac{\rho (1-\chi )}{1-\rho\chi},\ PT(\{(\{a\},\chi )\},s_1)=\frac{\chi (1-\rho )}{1-\rho\chi}$ and
$PT(\emptyset ,s_1)=PM(s_1,s_1)=\frac{(1-\rho )(1-\chi )}{1-\rho\chi}$. Next, $Exec(s_2)=\{\emptyset\}$, hence,\\
$\sum_{\Xi\in Exec(s_2)}PF(\Xi ,s_2)=PF(\emptyset ,s_2)=1$ and $PT(\emptyset ,s_2)=PM(s_2,s_2)=\frac{1}{1}=1$.\\
Finally, $PM(s_1,s_2)=PT(\{(\{a\},\rho )\},s_1)+PT(\{(\{a\},\chi )\},s_1)=\frac{\rho (1-\chi )}{1-\rho\chi}+\frac{\chi
(1-\rho )}{1-\rho\chi}=\frac{\rho +\chi -2\rho\chi}{1-\rho\chi}$.

Let $E'=(\{a\},\natural_l)\cho (\{a\},\natural_m)$, where $l,m\in\reals_{>0}$. $DR(\overline{E'})$ consists of
the equivalence classes\\
$s_1'=[\overline{E'}]_\approx$ and $s_2'=[\underline{E'}]_\approx$. We have $DR_{\rm T}(\overline{E'})=\{s_2'\}$ and
$DR_{\rm V}(\overline{E'})=\{s_1'\}$. The execution probabilities are calculated as follows. Since
$Exec(s_1')=\{\{(\{a\},\natural_l)\},\{(\{a\},\natural_m)\}\}$, we get $PF(\{(\{a\},\natural_l)\},s_1')=l$ and
$PF(\{(\{a\},\natural_m)\},s_1')=m$. Then $\sum_{\Xi\in Exec(s_1')}PF(\Xi ,s_1')=l+m$. Thus,
$PT(\{(\{a\},\natural_l)\},s_1')=\frac{l}{l+m}$ and $PT(\{(\{a\},\natural_m)\},s_1')= \frac{m}{l+m}$. Next,
$Exec(s_2')=\{\emptyset\}$, hence, $\sum_{\Xi\in Exec(s_2')}PF(\Xi ,s_2')=PF(\emptyset ,s_2')=1$ and $PT(\emptyset
,s_2')=PM(s_2',s_2')= \frac{1}{1}=1$.\\
Finally, $PM(s_1',s_2')=PT(\{(\{a\},\natural_l)\},s_1')+PT(\{(\{a\},\natural_m)\},s_1')=\frac{l}{l+m}+\frac{m}{l+m}=1$.
\label{trprob.exm}
\end{example}

\begin{definition}
Let $G$ be a dynamic expression. The {\em (labeled probabilistic) transition system} of $G$ is a quadruple
$TS(G)=(S_G,L_G,\mathcal{T}_G,s_G)$, where
\begin{itemize}

\item the set of {\em states} is $S_G=DR(G)$;

\item the set of {\em labels} is $L_G=2^\mathcal{SIL}\times (0;1]$;

\item the set of {\em transitions} is $\mathcal{T}_G\!=\!\{(s,(\Upsilon ,PT(\Upsilon ,s)),\tilde{s})\!\mid\!
s,\tilde{s}\in DR(G),\ \exists H\in s,\
\exists\widetilde{H}\in\tilde{s},\ H\stackrel{\Upsilon}{\rightarrow}\widetilde{H}\}$;

\item the {\em initial state} is $s_G=[G]_\approx$.

\end{itemize}
\end{definition}

The definition of $TS(G)$ is correct, i.e. for every state, the sum of the probabilities of all the transitions
starting from it is $1$. This is guaranteed by the note after the definition of $PT(\Upsilon ,s)$. Thus, we have
defined a {\em generative} model of probabilistic processes \cite{GSS95}. The reason is that the sum of the
probabilities of the transitions with all possible labels should be equal to $1$, not only of those with the same
labels (up to enumeration of activities they include) as in the {\em reactive} models, and we do not have a nested
probabilistic choice as in the {\em stratified} models.

The transition system $TS(G)$ associated with a dynamic expression $G$ describes all the steps (concurrent executions)
that occur at discrete time moments with some (one-step) probability and consist of sets of activities. Every step
consisting of stochastic multiactions or the empty step (i.e. that consisting of the empty set of activities) occurs
instantly after one discrete time unit delay. Each step consisting of immediate multiactions occurs instantly without
any delay. The step can change the current state. The states are the structural equivalence classes of dynamic
expressions obtained by application of action rules starting from the expressions belonging to $[G]_\approx$. A
transition $(s,(\Upsilon ,\mathcal{P}),\tilde{s})\in\mathcal{T}_G$ will be written as
$s\stackrel{\Upsilon}{\rightarrow}_\mathcal{P}\tilde{s}$, interpreted as: the probability to change $s$ to $\tilde{s}$
as a result of executing $\Upsilon$ is $\mathcal{P}$.

For tangible states, $\Upsilon$ can be the empty set, and its execution does not change the current state (i.e. the
equivalence class), since we get a loop transition $s\stackrel{\emptyset}{\rightarrow}_\mathcal{P}s$ from a tangible
state $s$ to itself. This corresponds to the application of the empty loop rule to expressions from the equivalence
class. We keep track of such executions, called {\em empty loops}, since they have non-zero probabilities. This follows
from the definition of $PF(\emptyset ,s)$ and the fact that multiaction probabilities cannot be equal to $1$ as they
belong to $(0;1)$. For vanishing states, $\Upsilon$ cannot be the empty set, since we must execute some immediate
multiactions from them at the current moment.

The step probabilities belong to the interval $(0;1]$, being $1$ in the case when we cannot leave a tangible state $s$
and the only transition leaving it is the empty loop one $s\stackrel{\emptyset}{\rightarrow}_1 s$, or if there is just
a single transition from a vanishing state to any other one. We write $s\stackrel{\Upsilon}{\rightarrow}\tilde{s}$ if
$\exists\mathcal{P},\ s\stackrel{\Upsilon}{\rightarrow}_\mathcal{P}\tilde{s}$ and $s\rightarrow\tilde{s}$ if
$\exists\Upsilon ,\ s\stackrel{\Upsilon}{\rightarrow}\tilde{s}$.

The first equivalence we are going to introduce is isomorphism, which is a coincidence of systems up to renaming of
their components or states.

\begin{definition}
Let $TS(G)=(S_G,L_G,\mathcal{T}_G,s_G)$ and $TS(G')=(S_{G'},L_{G'},\mathcal{T}_{G'},s_{G'})$ be the transition systems
of dynamic expressions $G$ and $G'$, respectively. A mapping $\beta :S_G\rightarrow S_{G'}$ is an {\em isomorphism}
between $TS(G)$ and $TS(G')$, denoted by $\beta :TS(G)\simeq TS(G')$, if
\begin{enumerate}

\item $\beta$ is a bijection such that $\beta (s_G)=s_{G'}$;

\item $\forall s,\tilde{s}\in S_G,\ \forall\Upsilon ,\ s\stackrel{\Upsilon}{\rightarrow}_\mathcal{P}\tilde{s}\
\Leftrightarrow\ \beta (s)\stackrel{\Upsilon}{\rightarrow}_\mathcal{P}\beta (\tilde{s})$.

\end{enumerate}
Two transition systems $TS(G)$ and $TS(G')$ are {\em isomorphic}, denoted by $TS(G)\simeq TS(G')$, if $\exists\beta
:TS(G)\simeq TS(G')$.
\end{definition}

\begin{definition}
Two dynamic expressions $G$ and $G'$ are {\em equivalent w.r.t. transition systems}, denoted by $G=_{\rm ts}G'$,
if $TS(G)\simeq TS(G')$.
\end{definition}

\begin{example}
Consider the expression ${\sf Stop}=(\{g\},\frac{1}{2})\rs g$ specifying the non-terminating process that performs only
empty loops with probability $1$. Then, for $\rho ,\chi ,\theta ,\phi\in (0;1)$ and $l,m\in\reals_{>0}$, let\\
$E=[(\{a\},\rho )*((\{b\},\chi );(((\{c\},\natural_l);(\{d\},\theta ))\cho ((\{e\},\natural_m);(\{f\},\phi ))))*{\sf
Stop}]$.

$DR(\overline{E})$ consists of the equivalence classes\\
$\begin{array}{c}
s_1=[[\overline{(\{a\},\rho )}*((\{b\},\chi );(((\{c\},\natural_l);(\{d\},\theta ))\cho
((\{e\},\natural_m);(\{f\},\phi ))))*{\sf Stop}]]_\approx ,\\[1mm]
s_2=[[(\{a\},\rho )*(\overline{(\{b\},\chi )};(((\{c\},\natural_l);(\{d\},\theta ))\cho
((\{e\},\natural_m);(\{f\},\phi ))))*{\sf Stop}]]_\approx ,\\[1mm]
s_3=[[(\{a\},\rho )*((\{b\},\chi );\overline{(((\{c\},\natural_l);(\{d\},\theta ))\cho
((\{e\},\natural_m);(\{f\},\phi )))})*{\sf Stop}]]_\approx ,\\[1mm]
s_4=[[(\{a\},\rho )*((\{b\},\chi );(((\{c\},\natural_l);\overline{(\{d\},\theta ))}\cho
((\{e\},\natural_m);(\{f\},\phi ))))*{\sf Stop}]]_\approx ,\\[1mm]
s_5=[[(\{a\},\rho )*((\{b\},\chi );(((\{c\},\natural_l);(\{d\},\theta ))\cho
((\{e\},\natural_m);\overline{(\{f\},\phi ))}))*{\sf Stop}]]_\approx .
\end{array}$

We have $DR_{\rm T}(\overline{E})=\{s_1,s_2,s_4,s_5\}$ and $DR_{\rm V}(\overline{E})=\{s_3\}$. In the first part of
Figure \ref{tsboxrgnewrw.fig}, the transition system $TS(\overline{E})$ is presented. The tangible states are depicted
in ovals and the vanishing ones
in boxes. For simplicity of the graphical representation, the singleton sets of activities are written without outer
braces.
\label{ts.exm}
\end{example}

\section{Denotational semantics}
\label{denosem.sec}

In this section, we construct the denotational semantics via a subclass of labeled discrete time stochastic and
immediate PNs (LDTSIPNs), called discrete time stochastic and immediate Petri boxes (dtsi-boxes).

\subsection{Labeled DTSIPNs}

Let us introduce a class of labeled discrete time stochastic and immediate Petri nets (LDTSIPNs), a subclass of DTSPNs
\cite{Mol81,Mol85} (we do not allow the transition probabilities to be equal to $1$) extended with transition labeling
and immediate transitions. LDTSIPNs resemble in part discrete time deterministic and stochastic PNs (DTDSPNs)
\cite{ZFH01}, as well as discrete deterministic and stochastic PNs (DDSPNs) \cite{ZCH97}. DTDSPNs and DDSPNs are the
extensions of DTSPNs with deterministic transitions (having fixed delay that can be zero), inhibitor arcs, priorities
and guards. Next, while stochastic transitions of DTDSPNs, like those of DTSPNs, have geometrically distributed delays,
stochastic transitions of DDSPNs have discrete time phase distributed delays. Nevertheless, LDTSIPNs are not subsumed
by DTDSPNs or DDSPNs, since LDTSIPNs have a step semantics while DTDSPNs and DDSPNs have interleaving one. LDTSIPNs are
somewhat similar to labeled weighted DTSPNs from \cite{BT01}, but in the latter there are no immediate transitions, all
(stochastic) transitions have weights, the transition probabilities may be equal to $1$ and only maximal fireable
subsets of the enabled transitions are fired.

Stochastic preemptive time Petri nets (spTPNs) \cite{BSV05} is a discrete time model with a maximal step semantics,
where both time ticks and instantaneous parallel firings of maximal transition sets are possible, but the transition
steps in LDTSIPNs are not obliged to be maximal. The transition delays in spTPNs are governed by static general
discrete distributions, associated with the transitions, while the transitions of LDTSIPNs are only associated with
probabilities, used later to calculate the step probabilities after one unit (from tangible markings) or zero (from
vanishing markings) delay. Further, LDTSIPNs have just geometrically distributed or deterministic zero delays in the
markings. Moreover, the discrete time tick and concurrent transition firing are treated in spTPNs as different events
while firing every (possibly empty) set of stochastic transitions in LDTSIPNs requires one unit time delay. spTPNs are
essentially a modification and extension of unlabeled LWDTSPNs with additional facilities, such as inhibitor arcs,
priorities, resources, preemptions, schedulers etc. However, the price of such an expressiveness of spTPNs is that the
model is rather intricate and difficult to analyze.

Note that guards in DTDSPNs and DDSPNs, inhibitor arcs and priorities in DTDSPNs, DDSPNs and spTPNs, the maximal step
semantics of LWDTSPNs and spTPNs make these models Turing powerful, resulting in undecidability of important
behavioural properties.

\begin{definition}
A {\em labeled discrete time stochastic and immediate Petri net (LDTSIPN)} is a tuple\\
$N=(P_N,T_N,W_N,\Omega_N,\mathcal{L}_N,M_N)$, where
\begin{itemize}

\item $P_N$ and $T_N={\it Ts}_N\uplus{\it Ti}_N$ are finite sets of {\em places} and {\em stochastic and immediate
transitions}, respectively, such that $P_N\cup T_N\neq\emptyset$ and $P_N\cap T_N=\emptyset$;

\item $W_N:(P_N\times T_N)\cup (T_N\times P_N)\rightarrow\naturals$ is a function for the {\em weights of arcs}
between places and transitions;

\item $\Omega_N$ is the {\em transition probability and weight} function such that
\begin{itemize}

\item $\Omega_N|_{{\it Ts}_N}:{\it Ts}\rightarrow (0;1)$ (it associates stochastic transitions with probabilities);

\item $\Omega_N|_{{\it Ti}_N}:{\it Ti}\rightarrow\reals_{>0}$ (it associates immediate transitions with
weights);

\end{itemize}

\item $\mathcal{L}_N:T_N\rightarrow\mathcal{L}$ is the {\em transition labeling} function assigning multiactions to
transitions;

\item $M_N\in\naturals_{\rm fin}^{P_N}$ is the {\em initial marking}.

\end{itemize}
\end{definition}

The graphical representation of LDTSIPNs is like that for standard labeled PNs, but with probabilities or weights
written near the corresponding transitions. Square boxes of normal thickness depict stochastic transitions, and those
with thick borders represent immediate transitions. If the probabilities or the weights are not given in the picture,
they are considered to be of no importance in the corresponding examples, such as those describing the stationary
behaviour. The weights of arcs are depicted with them. The names of places and transitions are depicted near them when
needed.

Let $N$ be an LDTSIPN and $t\in T_N,\ U\in\naturals_{\rm fin}^{T_N}$. The {\em precondition} $^\bullet t$ and the
{\em postcondition} $t^\bullet$ of $t$ are the multisets of places $({^\bullet}t)(p)=W_N(p,t)$ and $(t^\bullet
)(p)=W_N(t,p)$. The {\em precondition} $^\bullet U$ and the {\em postcondition} $U^\bullet$ of $U$ are the multisets of
places ${^\bullet}U=\sum_{t\in U}{^\bullet}t$ and $U^\bullet =\sum_{t\in U}t^\bullet$. Note that for $U=\emptyset$ we
have ${^\bullet}\emptyset =\emptyset =\emptyset^\bullet$.

Let $N$ be an LDTSIPN and $M,\widetilde{M}\in\naturals_{\rm fin}^{P_N}$. Immediate transitions have a priority over
stochastic ones, thus, immediate transitions always fire first if they can. A transition $t\in T_N$ is {\em enabled}
at $M$ if ${^\bullet}t\subseteq M$ and one of the following holds: (1) $t\in{\it Ti}_N$ or (2) $\forall u\in T_N,\
{^\bullet}u\subseteq M\ \Rightarrow\ u\in{\it Ts}_N$.

Thus, a transition is enabled
at a marking if it has enough tokens in
its input places (i.e. in
the places from its precondition) and it is immediate one; otherwise, when it is stochastic,
there exists no immediate transition with enough tokens in
its input places. Let $Ena(M)$ be the set of {\em all transitions enabled
at $M$}. By definition, it follows that $Ena(M)\subseteq{\it Ti}_N$ or $Ena(M)\subseteq {\it Ts}_N$. A set of
transitions $U\subseteq Ena(M)$ is {\em enabled}
at a marking $M$ if ${^\bullet}U\subseteq M$. Firings of transitions are atomic operations, and transitions may fire
concurrently in steps. We assume that all transitions participating in a step should differ, hence, only the sets (not
multisets) of transitions may fire. Thus, we do not allow self-concurrency, i.e. firing of transitions in parallel to
themselves. This restriction is introduced to avoid some technical difficulties while calculating probabilities for
multisets of transitions as we shall see after the following formal definitions. Moreover, we do not need to consider
self-concurrency, since denotational semantics of expressions will be defined via dtsi-boxes which are safe LDTSIPNs
(hence, no self-concurrency is~possible).

The marking $M$ is {\em tangible}, denoted by $tang(M)$, if $Ena(M)\subseteq{\it Ts}_N$, in particular, if
$Ena(M)=\emptyset$. Otherwise, the marking $M$ is {\em vanishing}, denoted by $vanish(M)$, and in this case
$Ena(M)\subseteq{\it Ti}_N$ and $Ena(M)\neq\emptyset$. If $tang(M)$ then a stochastic transition $t\in Ena(M)$ fires
with probability $\Omega_N(t)$ when no other stochastic transitions conflicting with it are enabled.

Let $U\subseteq Ena(M),\ U\neq\emptyset$ and ${^\bullet}U\subseteq M$. The {\em probability that the set of stochastic
transitions $U$ is ready for firing in $M$} or the {\em weight of the set of immediate transitions $U$ which is ready
for firing in $M$} is
$$PF(U,M)=
\left\{
\begin{array}{ll}
\displaystyle\prod_{t\in U}\Omega_N(t)\cdot\prod_{u\in Ena(M)\setminus U}(1-\Omega_N(u)), & \mbox{if }tang(M);\\
\displaystyle\sum_{t\in U}\Omega_N(t), & \mbox{if }vanish(M).
\end{array}
\right.$$

In the case $U=\emptyset$ and $tang(M)$ we define
$$PF(\emptyset ,M)=
\left\{
\begin{array}{ll}
\displaystyle\prod_{u\in Ena(M)}(1-\Omega_N(u)), & \mbox{if }Ena(M)\neq\emptyset ;\\
1, & \mbox{if }Ena(M)=\emptyset .
\end{array}
\right.$$

Let $U\subseteq Ena(M),\ U\neq\emptyset$ and ${^\bullet}U\subseteq M$ or $U=\emptyset$ and $tang(M)$. Besides $U$, some
other sets of transitions may be ready for firing in $M$, hence, conditioning or normalization is needed to calculate
the firing probability. The concurrent firing of the transitions from $U$ changes the marking $M$ to
$\widetilde{M}=M-{^\bullet}U+U^\bullet$, denoted by $M\stackrel{U}{\rightarrow}_\mathcal{P}\widetilde{M}$, where
$\mathcal{P}=PT(U,M)$ is the {\em probability that the set of transitions $U$ fires in $M$} defined as
$$PT(U,M)=\frac{PF(U,M)}{\displaystyle\sum_{\{V\mid{^\bullet}V\subseteq M\}}PF(V,M)}.$$

Observe that in the case $U=\emptyset$ and $tang(M)$ we have $M=\widetilde{M}$. Note that for all markings of an
LDTSIPN $N$, the sum of outgoing probabilities is equal to $1$, i.e.
$\forall M\in\naturals_{\rm fin}^{P_N},\ \sum_{\{U\mid{^\bullet}U\subseteq M\}}PT(U,M)=1$. This follows from the
definition of $PT(U,M)$ and guarantees that it defines a probability distribution.

We write $M\stackrel{U}{\rightarrow}\widetilde{M}$ if $\exists\mathcal{P},\
M\stackrel{U}{\rightarrow}_\mathcal{P}\widetilde{M}$ and $M\rightarrow\widetilde{M}$ if $\exists U,\
M\stackrel{U}{\rightarrow}\widetilde{M}$.

The {\em probability to move from $M$ to $\widetilde{M}$ by firing any set of transitions} is
$$PM(M,\widetilde{M})=\sum_{\{U\mid M\stackrel{U}{\rightarrow}\widetilde{M}\}}PT(U,M).$$

Since $PM(M,\widetilde{M})$ is the probability for {\em any} (including the empty one) transition set to change marking
$M$ to $\widetilde{M}$, we use summation in the definition. Note that $\forall M\in\naturals_{\rm fin}^{P_N},\
\sum_{\{\widetilde{M}\mid M\rightarrow\widetilde{M}\}}PM(M,\widetilde{M})=\\
\sum_{\{\widetilde{M}\mid M\rightarrow\widetilde{M}\}}\sum_{\{U\mid M\stackrel{U}{\rightarrow}\widetilde{M}\}}PT(U,M)=
\sum_{\{U\mid{^\bullet}U\subseteq M\}}PT(U,M)=1$.

\begin{definition}
Let $N$ be an LDTSIPN. The {\em reachability set} of $N$, denoted by $RS(N)$, is the minimal set of markings such that
\begin{itemize}

\item $M_N\in RS(N)$;

\item if $M\in RS(N)$ and $M\rightarrow\widetilde{M}$ then $\widetilde{M}\in RS(N)$.

\end{itemize}
\end{definition}

\begin{definition}
Let $N$ be an LDTSIPN. The {\em reachability graph} of $N$ is a (labeled probabilistic) transition system
$RG(N)=(S_N,L_N,\mathcal{T}_N,s_N)$, where
\begin{itemize}

\item the set of {\em states} is $S_N=RS(N)$;

\item the set of {\em labels} is $L_N=2^{T_N}\times (0;1]$;

\item the set of {\em transitions} is $\mathcal{T}_N=\{(M,(U,\mathcal{P}),\widetilde{M})\mid M,\widetilde{M}\in
RS(N),\ M\stackrel{U}{\rightarrow}_\mathcal{P}\widetilde{M}\}$;

\item the {\em initial state} is $s_N=M_N$.

\end{itemize}
\end{definition}

Let $RS_{\rm T}(N)$ be the set of {\em all tangible markings from $RS(N)$} and $RS_{\rm V}(N)$ be the set of {\em all
vanishing markings from $RS(N)$}. Obviously, $RS(N)=RS_{\rm T}(N)\uplus RS_{\rm V}(N)$.

\subsection{Algebra of dtsi-boxes}

We now introduce discrete time stochastic and immediate Petri boxes and algebraic operations to define the net
representation of dtsiPBC expressions.

\begin{definition}
A {\em discrete time stochastic and immediate Petri box (dtsi-box)} is a tuple\\
$N=(P_N,T_N,W_N,\Lambda_N)$, where
\begin{itemize}

\item $P_N$ and $T_N$ are finite sets of {\em places} and {\em transitions}, respectively, such that $P_N\cup
T_N\neq\emptyset$ and $P_N\cap T_N=\emptyset$;

\item $W_N:(P_N\times T_N)\cup (T_N\times P_N)\rightarrow\naturals$ is a function providing the {\em weights of
arcs};

\item $\Lambda_N$ is the {\em place and transition labeling} function such that
\begin{itemize}

\item $\Lambda_N|_{P_N}:P_N\rightarrow\{{\sf e},{\sf i},{\sf x}\}$ (it specifies {\em entry, internal} and {\em exit}
places);

\item $\Lambda_N|_{T_N}:T_N\rightarrow\{\varrho\mid\varrho\subseteq 2^\mathcal{SIL}\times\mathcal{SIL}\}$ (it
associates transitions with {\em relabeling relations}
on activities).

\end{itemize}

\end{itemize}

Moreover, $\forall t\in T_N,\ {^\bullet}t\neq\emptyset\neq t^\bullet$. Next, for the set of {\em entry} places of $N$,
defined as ${^\circ}N=\{p\in P_N\mid\Lambda_N(p)={\sf e}\}$, and for the set of {\em exit} places of $N$, defined as
$N^\circ=\{p\in P_N\mid\Lambda_N(p)={\sf x}\}$, the following holds: ${^\circ}N\neq\emptyset\neq N^\circ ,\
{^\bullet}({^\circ}N)=\emptyset =(N^\circ )^\bullet$.
\end{definition}

A dtsi-box is {\em plain} if $\forall t\in T_N,\ \exists (\alpha ,\kappa )\in\mathcal{SIL},\
\Lambda_N(t)=\varrho_{(\alpha ,\kappa )}$, where $\varrho_{(\alpha ,\kappa )}=\{(\emptyset ,(\alpha ,\kappa ))\}$ is a
{\em constant relabeling},
identified with the activity $(\alpha ,\kappa )$. A {\em marked plain dtsi-box} is a pair $(N,M_N)$, where $N$ is a
plain dtsi-box and $M_N\in\naturals_{\rm fin}^{P_N}$ is its marking. We denote $\overline{N}=(N,{^\circ}N)$ and
$\underline{N}=(N,N^\circ )$. Note that a marked plain dtsi-box $(P_N,T_N,W_N,\Lambda_N,M_N)$ could be interpreted as
the LDTSIPN $(P_N,T_N,W_N,\Omega_N,\mathcal{L}_N,M_N)$, where functions $\Omega_N$ and $\mathcal{L}_N$ are defined as
follows:
$\forall t\in T_N,\\
\Omega_N(t)=\kappa$ if $\kappa\in (0;1)$; or $\Omega_N(t)=l$ if $\kappa =\natural_l,\ l\in \reals_{>0}$; and
$\mathcal{L}_N(t)=\alpha$, where $\Lambda_N(t)=\varrho_{(\alpha ,\kappa )}$. Behaviour of the marked dtsi-boxes follows
from the firing rule of LDTSIPNs. A plain dtsi-box $N$ is {\em $n$-bounded} ($n\in\naturals$) if $\overline{N}$ is so,
i.e. $\forall M\in RS(\overline{N}),\ \forall p\in P_N,\ M(p)\leq n$, and it is {\em safe} if it is $1$-bounded. A
plain dtsi-box $N$ is {\em clean} if $\forall M\in RS(\overline{N}),\ {^\circ}N\subseteq M\ \Rightarrow\ M={^\circ}N$
and $N^\circ\subseteq M\ \Rightarrow\ M=N^\circ$, i.e. if there are tokens in all its entry (exit) places then no other
places have tokens.

The structure of the plain dtsi-box corresponding to a static expression is constructed like in PBC \cite{BKo95,BDK01},
i.e. we use simultaneous refinement and relabeling meta-operator (net refinement) in addition to the {\em operator
dtsi-boxes} corresponding to the algebraic operations of dtsiPBC and featuring transformational transition relabelings.
Operator dtsi-boxes specify $n$-ary functions from plain dtsi-boxes to plain dtsi-boxes (we have $1\leq n\leq 3$ in
dtsiPBC). Thus, as we shall see in Theorem \ref{safeclean.the}, the resulting plain dtsi-boxes are safe and clean. In
the definition of the denotational semantics, we shall apply standard constructions used for PBC. Let $\Theta$ denote
{\em operator box} and $u$ denote {\em transition name} from PBC setting.

The relabeling relations $\varrho\subseteq 2^\mathcal{SIL}\times\mathcal{SIL}$ are defined as follows:
\begin{itemize}

\item $\varrho_{\rm id}=\{(\{(\alpha ,\kappa )\},(\alpha ,\kappa ))\mid (\alpha ,\kappa )\in\mathcal{SIL}\}$ is the
{\em identity relabeling};

\item $\varrho_{(\alpha ,\kappa )}=\{(\emptyset ,(\alpha ,\kappa ))\}$ is the {\em constant relabeling}, identified
with $(\alpha ,\kappa )\in\mathcal{SIL}$;

\item $\varrho_{[f]}=\{(\{(\alpha ,\kappa )\},(f(\alpha ),\kappa ))\mid (\alpha ,\kappa )\in\mathcal{SIL}\}$;

\item $\varrho_{\!\!\rs a}=\{(\{(\alpha ,\kappa )\},(\alpha ,\kappa ))\mid (\alpha ,\kappa )\in\mathcal{SIL},\
a,\hat{a}\not\in\alpha\}$;

\item $\varrho_{\!\!\sy a}$ is the least relabeling containing $\varrho_{\rm id}$ such that if $(\Upsilon ,(\alpha
,\kappa )),(\Xi ,(\beta ,\lambda ))\in\varrho_{\!\!\sy a},\ a\in\alpha ,\ \hat{a}\in\beta$ then
\begin{itemize}

\item $(\Upsilon +\Xi ,(\alpha\oplus_a\beta ,\kappa\cdot\lambda ))\in\varrho_{\!\!\sy a}$ if $\kappa ,\lambda\in
(0;1)$;

\item $(\Upsilon +\Xi ,(\alpha\oplus_a\beta ,\natural_{l+m}))\in\varrho_{\!\!\sy a}$ if $\kappa =\natural_l,\ \lambda
=\natural_m,\ l,m\in\reals_{>0}$.

\end{itemize}

\end{itemize}

The plain dtsi-boxes $N_{(\alpha ,\rho )_{\iota}},\ N_{(\alpha ,\natural_l)_{\iota}}$, where $\rho\in (0;1),\
l\in\reals_{>0}$, and operator dtsi-boxes are presented in Figure \ref{dtsiboxopsmrw.fig}. The label {\sf i} of
internal places is usually omitted.

\begin{figure}
\begin{center}
\includegraphics[width=\textwidth]{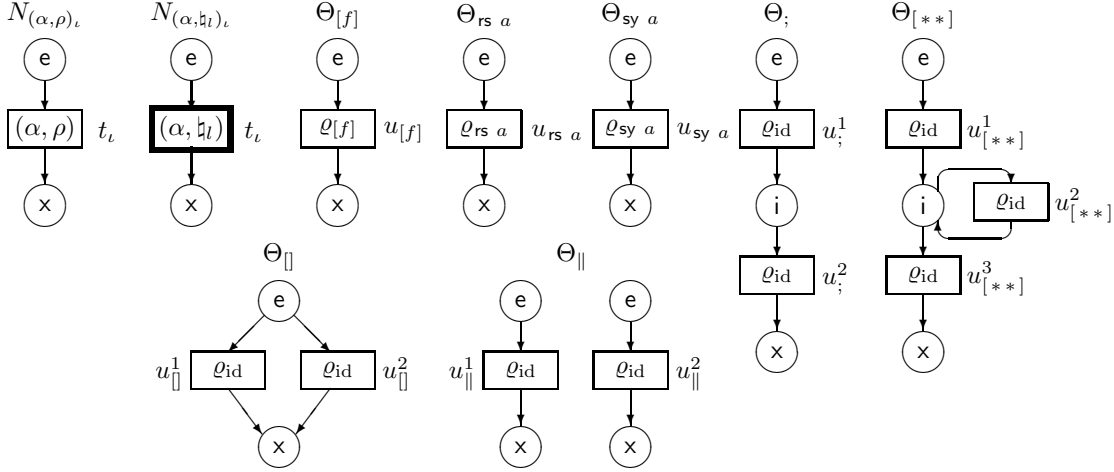}
\end{center}
\vspace{-6mm}
\caption{The plain and operator dtsi-boxes.}
\label{dtsiboxopsmrw.fig}
\end{figure}

In the case of the iteration, a decision that we must take is the selection of the operator box that we shall use for
it, since we have two proposals in plain PBC for that purpose \cite{BDK01}. One of them provides us with a safe version
with six transitions in the operator box, but there is also a simpler version, which has only three transitions. In
general, in PBC, with the latter version we may generate $2$-bounded nets, which only occurs when a parallel behaviour
appears at the highest level of the body of the iteration. Nevertheless, in our case, and due to the syntactical
restriction introduced for regular terms, this particular situation cannot occur, so that the net obtained will be
always safe.

To construct a semantic function
assigning a plain dtsi-box
to every static expression of dtsiPBC, we
define the {\em enumeration} function $Enu:T_N\rightarrow Num$, which associates the numberings with transitions of a
plain dtsi-box $N$ according to those of activities. For synchronization, the function associates with the re\-sulting
new transition the concatenation of the parenthesized numberings of the transitions it comes~from.

We now define the enumeration function $Enu$ for every operator of dtsiPBC. Let $Box_{\rm dtsi}(E)=\\
(P_E,T_E,W_E,\Lambda_E)$ be the plain dtsi-box corresponding to a static expression $E$, and $Enu_E:T_E\rightarrow Num$
be the enumeration function for $Box_{\rm dtsi}(E)$. We
use the similar notation for static expressions $F$ and
$K$.
\begin{itemize}

\item $Box_{\rm dtsi}((\alpha ,\kappa )_{\iota})=N_{(\alpha ,\kappa )_{\iota}}$. Since a single transition $t_{\iota}$
corresponds to the activity $(\alpha ,\kappa )_{\iota}\in\mathcal{SIL}$, their numberings coincide:
$Enu(t_{\iota})=\iota$.

\item $Box_{\rm dtsi}(E\circ F)=\Theta_{\circ}(Box_{\rm dtsi}(E),Box_{\rm dtsi}(F)),\ \circ\in\{;,\cho ,\|\}$. Since we
do not introduce new transitions, we preserve the initial numbering:
$Enu(t)=\left\{
\begin{array}{ll}
Enu_E(t), & \mbox{if }t\in T_E;\\
Enu_F(t), & \mbox{if }t\in T_F.
\end{array}
\right.$

\item $Box_{\rm dtsi}(E[f])=\Theta_{[f]}(Box_{\rm dtsi}(E))$. Since we only replace the labels of some multiactions by
a bijection, we preserve the initial numbering: $Enu(t)=Enu_E(t),\ t\in T_E$.

\item $Box_{\rm dtsi}(E\rs a)=\Theta_{\!\!\rs a}(Box_{\rm dtsi}(E))$. Since we remove all transitions labeled with
multiactions containing $a$ or $\hat{a}$, the remaining transitions numbering is not changed:
$Enu(t)=Enu_E(t),\\
t\in T_E,\ a,\hat{a}\not\in\alpha ,\ \Lambda_E(t)=\varrho_{(\alpha ,\kappa )}$.

\item $Box_{\rm dtsi}(E\sy a)=\Theta_{\!\!\sy a}(Box_{\rm dtsi}(E))$. Note that $\forall v,w\in T_E$, such that
$\Lambda_E(v)=\varrho_{(\alpha ,\kappa )},\\
\Lambda_E(w)=\varrho_{(\beta ,\lambda )}$ and $a\in\alpha ,\ \hat{a}\in\beta$, the new transition $t$ resulting
from synchronization of $v$ and $w$ has the label
$\Lambda (t)=\varrho_{(\alpha\oplus_a\beta ,\kappa\cdot\lambda )}$ if $t$ is a stochastic transition; or
$\Lambda (t)=\varrho_{(\alpha\oplus_a\beta ,\natural_{l+m})}$ if $t$ is an immediate one ($\kappa =\natural_l,\
\lambda =\natural_m,\ l,m\in\reals_{>0}$); and the numbering $Enu(t)\!=\!(Enu_E(v))(Enu_E(w))$. Thus, the
enumeration function is defined as $Enu(t)=\left\{
\begin{array}{ll}
Enu_E(t), & \mbox{if }t\in T_E;\\
(Enu_E(v))(Enu_E(w)), & \mbox{if }t\mbox{ results}\\
 & \hspace{-32mm}\mbox{from synchronization of }v\mbox{ and }w.
\end{array}
\right.$

According to the definition of $\varrho_{\!\!\sy a}$, the synchronization is only possible when all the transitions
in the set are stochastic or
all of them are immediate. If we synchronize the same set of transitions in different orders, we obtain several
resulting transitions with the same label and probability or weight, but with the different numberings having the
same content. Then we only consider a single transition from the resulting ones in the plain dtsi-box to avoid
introducing redundant transitions.

For example, if the transitions $t$ and $u$ are generated by synchronizing $v$ and $w$ in different orders, we have
$\Lambda (t)=\varrho_{(\alpha\oplus_a\beta ,\kappa\cdot\lambda )}=\Lambda (u)$ for stochastic transitions or
$\Lambda (t)=\varrho_{(\alpha\oplus_a\beta ,\natural_{l+m})}=\Lambda (u)$ for immediate ones ($\kappa =\natural_l,\
\lambda =\natural_m,\ l,m\in\reals_{>0}$), but $Enu(t)=(Enu_E(v))(Enu_E(w))\neq (Enu_E(w))(Enu_E(v))=Enu(u)$
while $Cont(Enu(t))=Cont(Enu(v))\cup Cont(Enu(w))=Cont(Enu(u))$. Then only one transition $t$ (or, symmetrically,
$u$) will appear in $Box_{\rm dtsi}(E\sy a)$.

\item $Box_{\rm dtsi}([E*F*K])=\Theta_{[\,*\,*\,]}(Box_{\rm dtsi}(E),Box_{\rm dtsi}(F),Box_{\rm dtsi}(K))$. Since we do
not introduce new transitions, we preserve the initial numbering:
$Enu(t)=\left\{
\begin{array}{ll}
Enu_E(t), & \mbox{if }t\in T_E;\\
Enu_F(t), & \mbox{if }t\in T_F;\\
Enu_K(t), & \mbox{if }t\in T_K.
\end{array}
\right.$

\end{itemize}

We now can formally define the denotational semantics as a homomorphism.

\begin{definition}
Let $(\alpha ,\kappa )\in\mathcal{SIL},\ a\in Act$ and $E,F,K\in RegStatExpr$. The {\em denotational semantics} of
dtsiPBC is a mapping $Box_{\rm dtsi}$ from $RegStatExpr$ into the domain of plain dtsi-boxes defined as~follows:
\begin{enumerate}

\item $Box_{\rm dtsi}((\alpha ,\kappa )_{\iota})=N_{(\alpha ,\kappa )_{\iota}}$;

\item $Box_{\rm dtsi}(E\circ F)=\Theta_{\circ}(Box_{\rm dtsi}(E),Box_{\rm dtsi}(F)),\ \circ\in\{;,\cho ,\|\}$;

\item $Box_{\rm dtsi}(E[f])=\Theta_{[f]}(Box_{\rm dtsi}(E))$;

\item $Box_{\rm dtsi}(E\circ a)=\Theta_{\circ a}(Box_{\rm dtsi}(E)),\ \circ\in\{\!\!\rs\!\!,\!\!\sy\!\!\}$;

\item $Box_{\rm dtsi}([E*F*K])=\Theta_{[\,*\,*\,]}(Box_{\rm dtsi}(E),Box_{\rm dtsi}(F),Box_{\rm dtsi}(K))$.

\end{enumerate}
\end{definition}

The dtsi-boxes of dynamic expressions can be defined. For $E\in RegStatExpr$, let $Box_{\rm dtsi}(\overline{E})\!=\!
\overline{Box_{\rm dtsi}(E)}$ and $Box_{\rm dtsi}(\underline{E})=\underline{Box_{\rm dtsi}(E)}$. This definition is
compositional in the sense that, for any arbitrary dynamic expression, we may decompose it in some inner dynamic and
static expressions, for which we may apply the definition, thus obtaining the corresponding plain dtsi-boxes, which can
be joined according to the term structure (by definition of $Box_{\rm dtsi}$), the resulting plain box being marked in
the places that were marked in the argument nets.

\begin{theorem}
For any static expression $E,\ Box_{\rm dtsi}(\overline{E})$ is safe and clean.
\label{safeclean.the}
\end{theorem}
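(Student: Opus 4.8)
The plan is to proceed by structural induction on the static expression $E$, following exactly the compositional definition of $Box_{\rm dtsi}$. The base case is $E=(\alpha,\kappa)_\iota$, where $Box_{\rm dtsi}(\overline{E})=\overline{N_{(\alpha,\kappa)_\iota}}$ is the marked plain dtsi-box from Figure \ref{dtsiboxopsmrw.fig} with a single entry place, a single exit place, and one transition between them; safeness and cleanness are immediate by inspection, since the only reachable markings are $\{{^\circ}N\}$ and $\{N^\circ\}$. For the inductive step, I would assume that $Box_{\rm dtsi}(\overline{E})$, $Box_{\rm dtsi}(\overline{F})$, $Box_{\rm dtsi}(\overline{K})$ are safe and clean, and then treat each operator $;$, $\cho$, $\|$, $[f]$, $\rs a$, $\sy a$, $[\,*\,*\,]$ in turn, showing that applying the corresponding operator box $\Theta$ preserves both properties.

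The key observation that makes the induction go through is that the net-refinement construction here is identical to the one used in plain PBC \cite{BKo95,BDK01}, and the structural properties of safeness and cleanness depend only on the underlying net structure (places, transitions, arc weights, entry/exit interface), not on the stochastic or immediate annotations, which are carried passively by the relabeling relations. So for each operator I would invoke the known PBC result that the operator boxes preserve safeness and cleanness when the syntactic side conditions are respected, and check that nothing in the stochastic/immediate refinement changes this. The operators $[f]$ and $\rs a$ are essentially trivial: $[f]$ only renames multiaction labels and $\rs a$ only deletes transitions, so the reachable markings of $\overline{E}[f]$ and $\overline{E}\rs a$ are in bijection with (a subset of) those of $\overline{E}$, and safeness/cleanness transfer directly. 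For $;$, $\cho$, $\|$ one uses the standard fact that gluing safe clean boxes at their entry/exit interfaces via these operator boxes yields a safe clean box; the key point for $\cho$ and $\|$ is that in a reachable marking the tokens live in at most one argument subnet (for $\cho$) or the markings of the two parallel subnets evolve independently from a single initial token each (for $\|$), so no place accumulates more than one token.

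The two steps I expect to need the most care are synchronization $\sy a$ and iteration $[\,*\,*\,]$. For $\sy a$: the operator box adds new transitions obtained by synchronizing pairs $v,w$ with $a\in\alpha$, $\hat a\in\beta$, but crucially it does not add or merge any places — the new transition $t$ has precondition ${^\bullet}v+{^\bullet}w$ and postcondition $v^\bullet+w^\bullet$ in the existing place set — so the reachability set of $\overline{E\sy a}$ is contained in that of $\overline{E}$ (firing $t$ has the same net effect on the marking as firing $v$ then $w$ would, were they concurrently enabled), hence safeness and cleanness are inherited; one must also note, as already discussed in the text, that synchronizing the same set in different orders produces transitions with numberings of equal content, only one of which is kept, so the transition set stays finite and well-defined. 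For iteration $[E*F*K]$: here I would rely on the discussion already given in the excerpt — the three-transition operator box can in general produce $2$-bounded nets when the body admits top-level concurrency, but the regularity restriction on static expressions (no concurrency at the highest level of the second argument, enforced by the grammar for $D$ via $RegStatExpr$) rules out exactly that situation, so the resulting net is safe, and this is precisely the point \cite{Tar14} established; cleanness of the iteration box then follows as in plain PBC. The main obstacle is therefore making the iteration argument rigorous: one must verify that the syntactic regularity condition really does prevent the bad marking where the body's entry place and an internal place are simultaneously marked, which amounts to checking that after the initialization transition fires, the body subnet always returns all its tokens to a single entry/exit place before the looping transition can re-fire — a property guaranteed by the body being built without top-level $\|$.
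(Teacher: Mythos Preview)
Your proposal is correct and takes essentially the same approach as the paper. The paper's proof is a two-sentence reduction to the PBC results in \cite{BKo95,BDK01}, resting exactly on your key observation that the net-refinement construction is structurally identical to PBC and that safeness and cleanness depend only on the underlying net structure, not on the stochastic or immediate annotations; your write-up simply unpacks the structural induction that the paper leaves implicit in the citation.
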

\begin{proof}
The structure of the net is obtained as in PBC \cite{BKo95,BDK01}, combining both refinement and relabeling. Hence, the
dtsi-boxes thus obtained will be safe and clean.
\end{proof}

Let $\simeq$ denote isomorphism between transition systems and reachability graphs that binds their initial states. The
names of transitions of the dtsi-box corresponding to a static expression could be identified with the enumerated
activities of the latter.

\begin{theorem}
For any static expression $E,\ TS(\overline{E})\simeq RG(Box_{\rm dtsi}(\overline{E}))$.
\label{opdensem.the}
\end{theorem}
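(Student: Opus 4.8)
The plan is to prove the isomorphism by structural induction on the static expression $E$, exhibiting at each step a bijection between the states of $TS(\overline{E})$ (structural equivalence classes of dynamic expressions reachable from $\overline{E}$) and the markings in $RS(Box_{\rm dtsi}(\overline{E}))$ that preserves the labeled probabilistic transitions. The correspondence itself is the natural one: a dynamic expression $G$ with underlying static expression $E$ is mapped to the marking of $Box_{\rm dtsi}(E)$ obtained by putting tokens exactly in the places that correspond to the subexpressions overlined/underlined in $G$; conversely, a reachable marking of the box determines which components are ``active'' and hence a dynamic expression up to $\approx$. I would first recall from PBC theory \cite{BKo95,BDK01} that for the purely structural (untimed) part this map is well-defined and bijective --- distinct structural equivalence classes go to distinct markings and every reachable marking arises this way --- since the inaction rules correspond precisely to the (token-free) rearrangements that leave a PBC box marking fixed, and the operator boxes were designed for exactly this purpose.

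The induction base is the plain box $N_{(\alpha,\kappa)_\iota}$: here $DR(\overline{(\alpha,\kappa)_\iota})=\{[\overline{(\alpha,\kappa)_\iota}]_\approx,[\underline{(\alpha,\kappa)_\iota}]_\approx\}$ maps to $\{{}^\circ N,N^\circ\}$, and rule {\bf B} matches the single transition $t_\iota$ firing, with the same label $(\alpha,\kappa)_\iota$; the empty-loop transition on the tangible entry marking (if $\kappa\in(0;1)$) matches $PF(\emptyset,{}^\circ N)$. For the inductive step I would take each operator in turn ($E;F$, $E\cho F$, $E\| F$, $E[f]$, $E\rs a$, $E\sy a$, $[E*F*K]$) and check two things: (i) the action rules of Table~\ref{actrulesim.tab} governing that operator are in exact correspondence with the firing rule of LDTSIPNs applied through the corresponding operator box $\Theta$ --- this is where the preconditions $\neg init(G)\vee(init(G)\wedge tang(\overline{E}))$ in rules {\bf C}, {\bf I2}, {\bf I3} and the $tang(H)$ precondition in {\bf P1} must be shown to mirror exactly the enabledness condition ``$Ena(M)\subseteq{\it Ti}_N$ or $Ena(M)\subseteq{\it Ts}_N$'' (immediate transitions having priority), and where the synchronization rules {\bf Sy2}/{\bf Sy2i} must be matched with the relabeling relation $\varrho_{\!\!\sy a}$ and the enumeration function $Enu$; and (ii) the numerical labels agree, i.e. $PT(\Upsilon,s)=PT(U,M)$ for the matched step $U$, which reduces to $Exec(s)$ and $\{U\mid{}^\bullet U\subseteq M\}$ being in bijection (preserving $PF$ summands) --- this uses that $Box_{\rm dtsi}(\overline{E})$ is safe (Theorem~\ref{safeclean.the}), so steps are genuine sets of transitions and no self-concurrency arises, making $PF(\Upsilon,s)$ and $PF(U,M)$ literally the same product/sum.

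The main obstacle, and the step deserving the most care, is the handling of synchronization together with the interaction of tangibility preconditions across nested operators. On the algebraic side, $Can(G)$ closes up multisets generated by $\oplus_a$ with numberings tracked by $Cont$; on the net side, $\varrho_{\!\!\sy a}$ is the least relabeling closed under the analogous rule, and redundant transitions (obtained by synchronizing in different orders, with the same content but different numberings) are identified. I must verify that ``only consider a single one'' on both sides is done consistently so that the bijection is clean, and that the enumeration function $Enu$ indeed makes transition names of the box coincide with the enumerated activities, as asserted before the theorem statement. The other delicate point is that structural equivalence $\approx$ can move bars across $\|$ inside iteration bodies etc.; here the regularity restriction (no concurrency at the top level of the second iteration argument) is what keeps the box safe and keeps the state--marking map a bijection rather than merely a surjection, so I would invoke \cite{Tar14} and \cite{BDK01} at that point. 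Everything else --- the cases for $;$, $\cho$, $[f]$, $\rs a$, and the bookkeeping that $DR(\overline{E})$ is exactly the image of $RS(Box_{\rm dtsi}(\overline{E}))$ --- is a routine transcription of the corresponding PBC argument augmented with the probability/weight labels, which match by construction of $\Omega_N$ from $\kappa$.
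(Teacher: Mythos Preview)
Your proposal is correct and follows essentially the same approach as the paper: invoke the PBC isomorphism for the qualitative (state--marking) correspondence and then observe that the quantitative labels agree because the probability/weight parts of activities coincide with $\Omega_N$ on the corresponding transitions and the functions $PF$, $PT$ are defined analogously on both sides. The paper's own proof is far terser---it simply cites \cite{BKo95,BDK01} for the structural part and gives a two-sentence justification for the probabilities---whereas you spell out the structural induction and the delicate points (tangibility preconditions mirroring the immediate-priority enabling rule, the $Enu$/$Cont$ bookkeeping for synchronization, safety via regularity) that the cited PBC argument would have to contain when adapted to the stochastic/immediate setting.
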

\begin{proof}
As for the qualitative behaviour, we have the same isomorphism as in PBC \cite{BKo95,BDK01}.
The quantitative behaviour is the same by the following reasons. First, the activities of an expression have the
probability or weight parts coinciding with the probabilities or weights of the transitions belonging to the
corresponding dtsi-box. Second, we use analogous probability or weight functions to construct the corresponding
transition systems and reachability graphs.
\end{proof}

\begin{example}
Let $E$ be from Example \ref{ts.exm}. In Figure \ref{tsboxrgnewrw.fig}, the marked dtsi-box $N=Box_{\rm
dtsi}(\overline{E})$ and its reachability graph $RG(N)$ are presented. It is easy to see that $TS(\overline{E})$ and
$RG(N)$ are isomorphic.
\label{boxrg.exm}
\end{example}

\begin{figure}
\begin{center}
\includegraphics[scale=0.9]{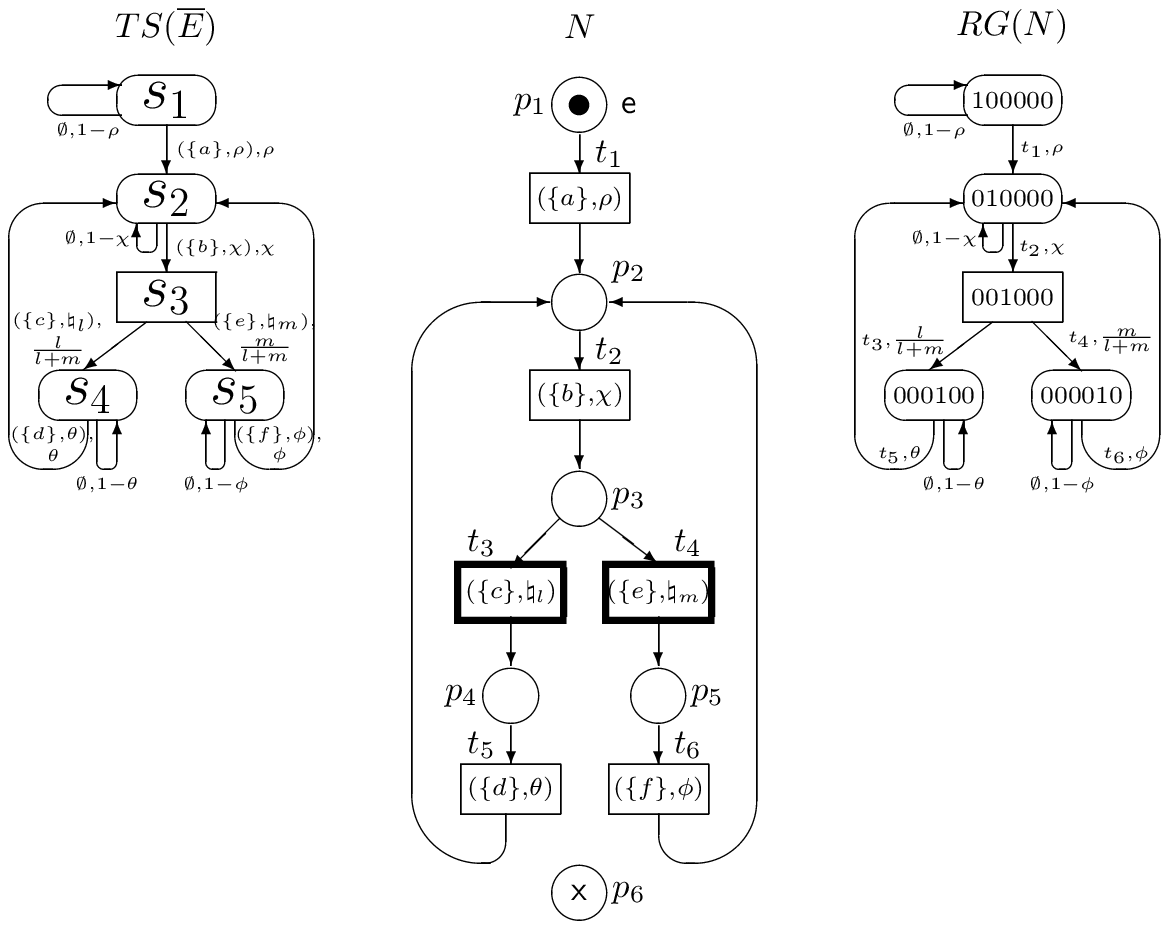}
\end{center}
\vspace{-6mm}
\caption{The transition system of $\overline{E}$, marked dtsi-box $N={\it Box}_{{\rm dtsi}}(\overline{E})$ and its
reachability graph for $E=[(\{a\},\rho )*\protect\newline
((\{b\},\chi );(((\{c\},\natural_l);(\{d\},\theta ))\cho ((\{e\},\natural_m);(\{f\},\phi ))))*{\sf Stop}]$.}
\label{tsboxrgnewrw.fig}
\end{figure}

\section{Performance evaluation}
\label{perfeval.sec}

In this section we demonstrate how Markov chains corresponding to the expressions and dtsi-boxes can be constructed and
then used for performance evaluation.

\subsection{Analysis of the underlying SMC}

For a dynamic expression $G$, a discrete random variable is associated with every tangible state $s\in DR_{\rm T}(G)$.
The variable captures a residence time in the state. One can interpret staying in a state at the next discrete time
moment as a failure and leaving it as a success of some trial series. It is easy to see that the random variables are
geometrically distributed with the parameter $1-PM(s,s)$, since the probability to stay in $s$ for $k-1$ time moments
and leave it at the moment $k\geq 1$ is $PM(s,s)^{k-1}(1-PM(s,s))$ (the residence time is $k$ in this case, and this
formula defines the probability mass function (PMF) of residence time in $s$). Hence, the probability distribution
function (PDF) of residence time in $s$ is $1-PM(s,s)^k\ (k\geq 0)$ (the probability that the residence time in $s$ is
less than or equal to $k$). The mean value formula for the geometrical distribution allows us to calculate the average
sojourn time in $s$ as $\frac{1}{1-PM(s,s)}$. Clearly, the average sojourn time in a vanishing state is zero. Let $s\in
DR(G)$.

The {\em average sojourn time in the state $s$} is
$${\it SJ}(s)=
\left\{
\begin{array}{ll}
\frac{1}{1-PM(s,s)}, & \mbox{if }s\in DR_{\rm T}(G);\\
0, & \mbox{if }s\in DR_{\rm V}(G).
\end{array}
\right.$$
The {\em average sojourn time vector} ${\it SJ}$ of $G$ has the elements ${\it SJ}(s),\ s\in DR(G)$.

The {\em sojourn time variance in the state $s$} is
$${\it VAR}(s)=
\left\{
\begin{array}{ll}
\frac{PM(s,s)}{(1-PM(s,s))^2}, & \mbox{if }s\in DR_{\rm T}(G);\\
0, & \mbox{if }s\in DR_{\rm V}(G).
\end{array}
\right.$$
The {\em sojourn time variance vector} ${\it VAR}$ of $G$ has the elements ${\it VAR}(s),\ s\in DR(G)$.

To evaluate performance of the system specified by a dynamic expression $G$, we should investigate the stochastic
process associated with it. The process is the underlying SMC \cite{Ros96,Kul09}, denoted by ${\it SMC}(G)$, which can
be analyzed by extracting from it the embedded (absorbing) discrete time Markov chain (EDTMC) corresponding to $G$,
denoted by ${\it EDTMC}(G)$. The construction of the latter is analogous to that applied in GSPNs
\cite{Mar90,Bal01,Bal07}, DTDSPNs \cite{ZFH01} and DDSPNs \cite{ZCH97}. ${\it EDTMC}(G)$ only describes the state
changes of ${\it SMC}(G)$ while ignoring its time characteristics. Thus, to construct the EDTMC, we should abstract
from all time aspects of behaviour of the SMC, i.e. from the sojourn time in its states. The (local) sojourn time in
every state of the EDTMC is equal to one discrete time unit. Each SMC is fully described by the EDTMC and the state
sojourn time distributions (the latter can be specified by the vector of PDFs of residence time in the states)
\cite{Hav01}.

Let $G$ be a dynamic expression and $s,\tilde{s}\in DR(G)$. The transition system $TS(G)$ can have self-loops from a
state to itself with a non-zero probability.
Clearly, the current state remains unchanged in this case.

Let $s\rightarrow s$. The {\em probability to stay in $s$ due to $k\ (k\geq 1)$ self-loops} is $PM(s,s)^k$.

Let $s\rightarrow\tilde{s}$ and $s\neq\tilde{s}$. The {\em probability to move from $s$ to $\tilde{s}$ by
executing any set of activities after possible self-loops} is
$$\begin{array}{c}
PM^*(s,\tilde{s})=\left\{
\begin{array}{ll}
PM(s,\tilde{s})\sum_{k=0}^{\infty}PM(s,s)^k=\frac{PM(s,\tilde{s})}{1-PM(s,s)}, & \mbox{if }s\rightarrow s;\\
PM(s,\tilde{s}), & \mbox{otherwise};
\end{array}
\right\}={\it SL}(s)PM(s,\tilde{s}),\\
\mbox{ where }{\it SL}(s)=\left\{
\begin{array}{ll}
\frac{1}{1-PM(s,s)}, & \mbox{if }s\rightarrow s;\\
1, & \mbox{otherwise}.
\end{array}
\right.
\end{array}$$
Here ${\it SL}(s)$ is the {\em self-loops abstraction factor in the state $s$}. The {\em self-loops
abstraction vector} of $G$, denoted by ${\it SL}$, has the elements ${\it SL}(s),\ s\in DR(G)$. The value $k=0$ in the
summation above corresponds to the case when no self-loops occur. Note that $\forall s\in DR_{\rm T}(G),\ {\it
SL}(s)=\frac{1}{1-PM(s,s)}={\it SJ}(s)$, hence, $\forall s\in DR_{\rm T}(G),\ PM^*(s,\tilde{s})={\it
SJ}(s)PM(s,\tilde{s})$, since we always have the empty loop (the self-loop) $s\stackrel{\emptyset}{\rightarrow}s$ from
every tangible state $s$. Empty loops are not possible from vanishing states, hence, $\forall s\in DR_{\rm V}(G),\
PM^*(s,\tilde{s})=\frac{PM(s,\tilde{s})}{1-PM(s,s)}$, when there are non-empty self-loops (produced by iteration) from
$s$, or $PM^*(s,\tilde{s})=PM(s,\tilde{s})$, when there are no self-loops from $s$.

Notice that $PM^*(s,\tilde{s})$ defines a probability distribution, since $\forall s\in DR(G)$, such that $s$ is not a
terminal state, i.e. there are transitions to different states after possible self-loops from it, we have\\
$\sum_{\{\tilde{s}\mid s\rightarrow\tilde{s},\ s\neq\tilde{s}\}}PM^*(s,\tilde{s})=
\frac{1}{1-PM(s,s)}\sum_{\{\tilde{s}\mid s\rightarrow\tilde{s},\ s\neq\tilde{s}\}}PM(s,\tilde{s})=
\frac{1}{1-PM(s,s)}(1-PM(s,s))=1$.

\begin{definition}
Let $G$ be a dynamic expression. The {\em embedded (absorbing) discrete time Markov chain (EDTMC)} of $G$, denoted by
${\it EDTMC}(G)$, has the state space $DR(G)$, the initial state $[G]_\approx$ and the transitions
$s\doublera_\mathcal{P}\tilde{s}$ if $s\rightarrow\tilde{s}$ and $s\neq\tilde{s}$, where
$\mathcal{P}=PM^*(s,\tilde{s})$.

The {\em underlying SMC} of $G$, denoted by ${\it SMC}(G)$, has the EDTMC ${\it EDTMC}(G)$ and the sojourn time in
every $s\in DR_{\rm T}(G)$ is geometrically distributed with the parameter $1-PM(s,s)$ while the sojourn time in every
$s\in DR_{\rm V}(G)$ is zero.
\end{definition}

Let $G$ be a dynamic expression. The elements $\mathcal{P}_{ij}^*\ (1\leq i,j\leq n=|DR(G)|)$ of the (one-step)
transition probability matrix (TPM) ${\bf P}^*$ for ${\it EDTMC}(G)$ are
$\mathcal{P}_{ij}^*=\left\{
\begin{array}{ll}
PM^*(s_i,s_j), & \mbox{if }s_i\rightarrow s_j,\ s_i\neq s_j;\\
0, & \mbox{otherwise}.
\end{array}
\right.$

\noindent The transient ($k$-step, $k\in\naturals$) PMF $\psi^*[k]=(\psi^*[k](s_1),\ldots ,\psi^*[k](s_n))$ for ${\it
EDTMC}(G)$ is calculated as
$$\psi^*[k]=\psi^*[0]({\bf P}^*)^k,$$
where $\psi^*[0]\!=\!(\psi^*[0](s_1),\ldots ,\psi^*[0](s_n))$ is the initial PMF defined as
$\psi^*[0](s_i)\!=\!\left\{
\begin{array}{ll}
1, & \mbox{if }s_i=[G]_\approx ;\\
0, & \mbox{otherwise}.
\end{array}
\right.$\\
Note also that $\psi^*[k+1]=\psi^*[k]{\bf P}^*\ (k\in\naturals)$.

\noindent The steady-state PMF $\psi^*=(\psi^*(s_1),\ldots ,\psi^*(s_n))$ for ${\it EDTMC}(G)$ is a solution of the
equation system
$$\left\{
\begin{array}{l}
\psi^*({\bf P}^*-{\bf I})={\bf 0}\\
\psi^*{\bf 1}^{\rm T}=1
\end{array}
\right.,$$
where ${\bf I}$ is the identity matrix of order $n$ and ${\bf 0}$ (${\bf 1}$) is a row vector of $n$ values $0$ ($1$).

Note that the vector $\psi^*$ exists and is unique if ${\it EDTMC}(G)$ is ergodic. Then ${\it EDTMC}(G)$ has a single
steady state, and we have $\psi^*=\lim_{k\to\infty}\psi^*[k]$.

The steady-state PMF for the underlying semi-Markov chain ${\it SMC}(G)$ is calculated via multiplication of every
$\psi^*(s_i)\ (1\leq i\leq n)$ by the average sojourn time ${\it SJ}(s_i)$ in the state $s_i$, after which we normalize
the resulting values. Remember that for a vanishing state $s\in DR_{\rm V}(G)$ we have ${\it SJ}(s)=0$.

Thus, the steady-state PMF $\varphi =(\varphi (s_1),\ldots ,\varphi (s_n))$ for ${\it SMC}(G)$ is
$$\varphi (s_i)=\left\{
\begin{array}{ll}
\frac{\displaystyle\psi^*(s_i){\it SJ}(s_i)}{\displaystyle\sum_{j=1}^n\psi^*(s_j){\it SJ}(s_j)}, & \mbox{if }s_i\in
DR_{\rm T}(G);\\
0, & \mbox{if }s_i\in DR_{\rm V}(G).
\end{array}
\right.$$
Thus, to calculate $\varphi$, we apply abstraction from self-loops to get ${\bf P}^*$ and then $\psi^*$, followed by
weighting by ${\it SJ}$ and normalization. ${\it EDTMC}(G)$ has no self-loops, unlike ${\it SMC}(G)$, hence, the
behaviour of ${\it EDTMC}(G)$ stabilizes quicker than that of ${\it SMC}(G)$ (if each of them has a single steady
state), since ${\bf P}^*$ has only zero elements at the main diagonal.

\begin{example}
Let $E$ be from Example \ref{ts.exm}. In Figure \ref{exprboxsdtmc.fig}, the underlying SMC ${\it SMC}(\overline{E})$ is
presented. The average sojourn times in the states of the underlying SMC are written next to them in bold font. The
average sojourn time vector of $\overline{E}$ is
${\it SJ}=\left(\frac{1}{\rho},\frac{1}{\chi},0,\frac{1}{\theta},\frac{1}{\phi}\right)$.\\
The sojourn time variance vector of $\overline{E}$ is
${\it VAR}=\left(\frac{1-\rho}{\rho^2},\frac{1-\chi}{\chi^2},0,\frac{1-\theta}{\theta^2},
\frac{1-\phi}{\phi^2}\right)$.\\
The TPM for ${\it EDTMC}(\overline{E})$ is
${\bf P}^*=\left(\begin{array}{ccccc}
0 & 1 & 0 & 0 & 0\\
0 & 0 & 1 & 0 & 0\\
0 & 0 & 0 & \frac{l}{l+m} & \frac{m}{l+m}\\
0 & 1 & 0 & 0 & 0\\
0 & 1 & 0 & 0 & 0
\end{array}\right)$.\\
The steady-state PMF for ${\it EDTMC}(\overline{E})$ is
$\psi^*=\left(0,\frac{1}{3},\frac{1}{3},\frac{l}{3(l+m)},\frac{m}{3(l+m)}\right)$.\\
The steady-state PMF $\psi^*$ weighted by ${\it SJ}$ is
$\left(0,\frac{1}{3\chi},0,\frac{l}{3\theta (l+m)},\frac{m}{3\phi (l+m)}\right)$.\\
It remains to normalize the steady-state weighted PMF, dividing it by the sum of its components\\
$\psi^*{\it SJ}^{\rm T}=\frac{\theta\phi (l+m)+\chi (\phi l+\theta m)}{3\chi\theta\phi (l+m)}$.\\
The steady-state PMF for ${\it SMC}(\overline{E})$ is $\varphi=\frac{1}{\theta\phi (l+m)+\chi (\phi l+\theta
m)}(0,\theta\phi (l+m),0,\chi\phi l,\chi\theta m)$.
\label{exprsmc.exm}
\end{example}

Let $G$ be a dynamic expression and $s,\tilde{s}\in DR(G),\ S,\widetilde{S}\subseteq DR(G)$. The following standard
{\em performance indices (measures)} can be calculated based on the steady-state PMF $\varphi$ for ${\it SMC}(G)$
and the average sojourn time vector ${\it SJ}$ of $G$ \cite{MAS85,Kat96}.
\begin{itemize}

\item The {\em average recurrence (return) time in the state $s$} is $\frac{1}{\varphi (s)}$.

\item The {\em fraction of residence time in the state $s$} is $\varphi (s)$.

\item The {\em fraction of residence time in the set of states $S$} or the {\em probability of the event determined by
a condition that is true for all states from $S$} is $\sum_{s\in S}\varphi (s)$.

\item The {\em relative fraction of residence time in the set of states $S$ w.r.t. that in
$\widetilde{S}$} is $\frac{\sum_{s\in S}\varphi (s)}{\sum_{\tilde{s}\in\widetilde{S}}\varphi (\tilde{s})}$.

\item The {\em rate of leaving the state $s$} is $\frac{\varphi (s)}{{\it SJ}(s)}$.

\item The {\em steady-state probability to perform a step with a multiset of activities $\Xi$} is\\
$\sum_{s\in DR(G)}\varphi (s)\sum_{\{\Upsilon\mid\Xi\subseteq\Upsilon\}}PT(\Upsilon ,s)$.

\item The {\em probability of the event determined by a reward function $r$ on the states} is $\sum_{s\in
DR(G)}\varphi (s)r(s)$, where $\forall s\in DR(G),\ 0\leq r(s)\leq 1$.

\end{itemize}

Let $N=(P_N,T_N,W_N,\Omega_N,\mathcal{L}_N,M_N)$ be a LDTSIPN and $M,\widetilde{M}\in\naturals_{\rm fin}^{P_N}$. Then
the average sojourn time ${\it SJ}(M)$, the sojourn time variance ${\it VAR}(M)$, the probabilities
$PM^*(M,\widetilde{M})$, the transition relation $M\doublera_\mathcal{P}\widetilde{M}$, the EDTMC ${\it EDTMC}(N)$, the
underlying SMC ${\it SMC}(N)$ and the steady-state PMF for it are defined like the corresponding notions for dynamic
expressions. Since every marked plain dtsi-box could be interpreted as the LDTSIPN, we can evaluate performance with
the LDTSIPNs corresponding to dtsi-boxes and then transfer the results to the latter.

Let $\simeq$ denote isomorphism between SMCs that binds their initial states, where two SMCs are isomorphic if their
EDTMCs are so and the sojourn times in the isomorphic states are identically distributed.

\begin{proposition}
For any static expression $E,\ {\it SMC}(\overline{E})\simeq{\it SMC}(Box_{\rm dtsi}(\overline{E}))$.
\label{smcs.pro}
\end{proposition}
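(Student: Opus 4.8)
The plan is to derive the claim from Theorem~\ref{opdensem.the}. Write $N=Box_{\rm dtsi}(\overline{E})$. That theorem supplies an isomorphism $\beta\colon TS(\overline{E})\simeq RG(N)$, i.e.\ a bijection $\beta\colon DR(\overline{E})\to RS(N)$ with $\beta([\overline{E}]_\approx)=M_N$ and $s\stackrel{\Upsilon}{\rightarrow}_\mathcal{P}\tilde s$ in $TS(\overline{E})$ iff $\beta(s)\stackrel{U}{\rightarrow}_\mathcal{P}\beta(\tilde s)$ in $RG(N)$, where $U$ is the set of transitions of $N$ corresponding to the set of enumerated activities $\Upsilon$ under the identification of dtsi-box transitions with the activities of $E$. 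The key point is that the whole construction of $SMC(\cdot)$ — the tangible/vanishing partition of states, the probabilities $PM(\cdot,\cdot)$ and $PM^*(\cdot,\cdot)$, the self-loops abstraction factor $SL$, the average sojourn times $SJ$, and thus both the embedded chain and the residence-time distributions — depends only on the underlying labeled probabilistic transition system. I would therefore transport each of these data along $\beta$.

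First I would verify that $\beta$ preserves tangibility. A state $s$ is tangible iff $Exec(s)\subseteq\naturals_{\rm fin}^\mathcal{SL}$; since $\beta$ preserves transition labels and the identification of transitions with activities sends stochastic transitions to stochastic multiactions and immediate transitions to immediate multiactions, the label multisets available at $s$ and at $\beta(s)$ coincide, so $s\in DR_{\rm T}(\overline{E})\Leftrightarrow\beta(s)\in RS_{\rm T}(N)$ and likewise for vanishing states. Next, as $PM(s,\tilde s)=\sum_{\{\Upsilon\mid s\stackrel{\Upsilon}{\rightarrow}\tilde s\}}PT(\Upsilon,s)$ sums the probability labels of the $s\to\tilde s$ transitions, and $\beta$ is a label- and probability-preserving bijection on transitions, we get $PM(s,\tilde s)=PM(\beta(s),\beta(\tilde s))$ for all $s,\tilde s$, in particular $PM(s,s)=PM(\beta(s),\beta(s))$. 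Plugging this into the definitions of $SJ$, $VAR$, $SL$ and $PM^*$ yields $SJ(s)=SJ(\beta(s))$, $VAR(s)=VAR(\beta(s))$, $SL(s)=SL(\beta(s))$ and $PM^*(s,\tilde s)=PM^*(\beta(s),\beta(\tilde s))$.

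It then follows that $\beta$ is an isomorphism of the embedded chains: it is a bijection of state spaces taking $[\overline{E}]_\approx$ to $M_N$ and carrying the transitions $s\doublera_\mathcal{P}\tilde s$ (with $s\neq\tilde s$ and $\mathcal{P}=PM^*(s,\tilde s)$) of $EDTMC(\overline{E})$ exactly to those of $EDTMC(N)$; hence $EDTMC(\overline{E})\simeq EDTMC(N)$. Furthermore the sojourn time in a tangible $s$ is geometric with parameter $1-PM(s,s)=1-PM(\beta(s),\beta(s))$, and in a vanishing $s$ it is the constant $0$; so the sojourn times in $\beta$-corresponding states are identically distributed. By the definition of isomorphism of SMCs this gives $SMC(\overline{E})\simeq SMC(N)$.

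The only genuinely delicate point is the bookkeeping behind the identification of dtsi-box transitions with enumerated activities: one has to be sure this is precisely the identification under which Theorem~\ref{opdensem.the} holds, so that the label-indexed quantities ($PF$, $PT$, and the tangible/vanishing classification) agree literally on the two sides. Once this is fixed, everything else is a routine transport along $\beta$, and the argument could in fact be phrased purely at the level of labeled probabilistic transition systems, observing that $SMC(\cdot)$ factors through the underlying transition system and that isomorphic transition systems have isomorphic SMCs.
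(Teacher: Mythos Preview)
Your proposal is correct and follows essentially the same approach as the paper: invoke Theorem~\ref{opdensem.the} to get the transition-system/reachability-graph isomorphism, then observe that all the SMC data ($PM$, $PM^*$, $SJ$, the tangible/vanishing split, and hence the EDTMC and sojourn-time distributions) are computed from the transition system alone and therefore transport along the isomorphism. The paper's own proof is a two-sentence sketch to this effect; you have simply spelled out the routine verifications it leaves implicit, and you correctly flag that the identification of dtsi-box transitions with enumerated activities (stated in the paper just before Theorem~\ref{opdensem.the}) is what makes the labels match on the two sides.
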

\begin{proof}
By Theorem \ref{opdensem.the}, definitions of underlying SMCs for dynamic expressions and LDTSIPNs, and by the
following. First, for the associated SMCs, the average sojourn time in the states is the same, since it is defined via
the analogous probability functions. Second, the transition probabilities of the associated SMCs are the sums of those
belonging to transition systems or reachability graphs.
\end{proof}

\begin{example}
Let $E$ be from Example \ref{ts.exm}. In Figure \ref{exprboxsdtmc.fig}, the underlying SMC ${\it SMC}(N)$ is presented.
Clearly, ${\it SMC}(\overline{E})$ and ${\it SMC}(N)$ are isomorphic.
\label{boxsmc.exm}
\end{example}

\subsection{Analysis of the DTMC}

Let us consider an alternative solution method, studying the DTMCs of expressions ba\-sed on the state change
probabilities $PM(s,\tilde{s})$.

\begin{definition}
Let $G$ be a dynamic expression. The {\em discrete time Markov chain (DTMC)} of $G$, denoted by ${\it DTMC}(G)$, has
the state space $DR(G)$, the initial state $[G]_\approx$ and the transitions $s\rightarrow_\mathcal{P}\tilde{s}$, where
$\mathcal{P}=PM(s,\tilde{s})$.
\end{definition}

\noindent One can see that ${\it EDTMC}(G)$ is constructed from ${\it DTMC}(G)$ as follows. For each state of ${\it
DTMC}(G)$, we remove a possible self-loop associated with it and then normalize the probabilities of the remaining
transitions from the state. Thus, ${\it EDTMC}(G)$ and ${\it DTMC}(G)$ differ only by existence of self-loops and
magnitudes of the probabilities of the remaining transitions. Hence, ${\it EDTMC}(G)$ and ${\it DTMC}(G)$ have the same
communication classes of states and ${\it EDTMC}(G)$ is irreducible iff ${\it DTMC}(G)$ is so. Since both ${\it
EDTMC}(G)$ and ${\it DTMC}(G)$ are finite, they are positive recurrent. Thus, in case of irreducibility, each of them
has a single stationary PMF. Note that ${\it EDTMC}(G)$ and/or ${\it DTMC}(G)$ may be periodic, thus having a unique
stationary distribution, but no steady-state (limiting) one. For example, it may happen that ${\it EDTMC}(G)$ is
periodic while ${\it DTMC}(G)$ is aperiodic due to self-loops associated with some states of the latter. The states of
${\it SMC}(G)$ are classified using ${\it EDTMC}(G)$, hence, ${\it SMC}(G)$ is irreducible (positive recurrent,
aperiodic) iff ${\it EDTMC}(G)$ is so.

Let $G$ be a dynamic expression. The elements $\mathcal{P}_{ij}\ (1\leq i,j\leq n=|DR(G)|)$ of (one-step) transition
probability matrix (TPM) ${\bf P}$ for ${\it DTMC}(G)$ are defined as
$\mathcal{P}_{ij}=\left\{
\begin{array}{ll}
PM(s_i,s_j), & \mbox{if }s_i\rightarrow s_j;\\
0, & \mbox{otherwise}.
\end{array}
\right.$

The steady-state PMF $\psi$ for ${\it DTMC}(G)$ is defined like the corresponding notion for ${\it EDTMC}(G)$. Let us
determine a relationship between steady-state PMFs for ${\it DTMC}(G)$ and ${\it EDTMC}(G)$. The theorem below proposes
the required equation.

Let us introduce a helpful notation. For a vector $v=(v_1,\ldots ,v_n)$, let ${\bf Diag}(v)$ be a diagonal matrix of
order $n$ with the elements $Diag_{ij}(v)\ (1\leq i,j\leq n)$ defined as $Diag_{ij}(v)=\left\{
\begin{array}{ll}
v_i, & \mbox{if }i=j;\\
0, & \mbox{otherwise}.
\end{array}
\right.$

\begin{theorem}
Let $G$ be a dynamic expression and ${\it SL}$ be its self-loops abstraction vector. Then the steady-state PMFs
$\psi$ for ${\it DTMC}(G)$ and $\psi^*$ for ${\it EDTMC}(G)$ are related as follows: $\forall s\in DR(G)$,
$$\psi (s)=\frac{\psi^*(s){\it SL}(s)}{\displaystyle\sum_{\tilde{s}\in DR(G)}\psi^*(\tilde{s}){\it SL}(\tilde{s})}.$$
\label{pmfsim.the}
\end{theorem}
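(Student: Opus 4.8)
The plan is to relate the two stationary distributions through the structural connection between $\mathit{DTMC}(G)$ and $\mathit{EDTMC}(G)$ established in the text: $\mathbf{P}^*$ is obtained from $\mathbf{P}$ by deleting self-loops and renormalizing. Concretely, write $\mathbf{P} = \mathbf{D} + \mathbf{R}$, where $\mathbf{D} = \mathbf{Diag}(p_1,\dots,p_n)$ with $p_i = PM(s_i,s_i)$ collects the self-loop probabilities and $\mathbf{R}$ has zero diagonal. Then the relation $PM^*(s,\tilde s) = SL(s)\,PM(s,\tilde s)$ for $s\neq\tilde s$ (proved in the excerpt, using $SL(s)=\frac{1}{1-PM(s,s)}$ when $s\to s$ and $SL(s)=1$ otherwise — noting $1-p_i = SL(s_i)^{-1}$ whenever $p_i>0$, and $p_i=0$ gives $SL(s_i)=1$ consistently) says precisely that
$$\mathbf{P}^* = \mathbf{Diag}(SL)\,\mathbf{R} = \mathbf{Diag}(SL)(\mathbf{P}-\mathbf{D}).$$
Since $\mathbf{D} = \mathbf{I} - \mathbf{Diag}(SL)^{-1}$ (as $1-p_i = SL(s_i)^{-1}$ holds in all cases), this rearranges to $\mathbf{P}^* = \mathbf{Diag}(SL)\mathbf{P} - \mathbf{Diag}(SL) + \mathbf{I}$, i.e. $\mathbf{P}^* - \mathbf{I} = \mathbf{Diag}(SL)(\mathbf{P}-\mathbf{I})$.

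Next I would verify that the claimed vector solves the stationary equation for $\mathbf{P}^*$. Let $\psi$ be the stationary PMF of $\mathit{DTMC}(G)$, so $\psi(\mathbf{P}-\mathbf{I}) = \mathbf{0}$, and define $\hat\psi$ by $\hat\psi(s) = \psi(s)/SL(s)$, i.e. $\hat\psi = \psi\,\mathbf{Diag}(SL)^{-1}$. Then
$$\hat\psi(\mathbf{P}^*-\mathbf{I}) = \psi\,\mathbf{Diag}(SL)^{-1}\mathbf{Diag}(SL)(\mathbf{P}-\mathbf{I}) = \psi(\mathbf{P}-\mathbf{I}) = \mathbf{0},$$
so $\hat\psi$ is (up to normalization) a stationary vector for $\mathbf{P}^*$. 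Normalizing, $\psi^*(s) = \hat\psi(s)/\sum_{\tilde s}\hat\psi(\tilde s) = \frac{\psi(s)/SL(s)}{\sum_{\tilde s}\psi(\tilde s)/SL(\tilde s)}$. Inverting this relation (solving for $\psi$ in terms of $\psi^*$) gives exactly $\psi(s) = \frac{\psi^*(s)SL(s)}{\sum_{\tilde s}\psi^*(\tilde s)SL(\tilde s)}$, which is the claim. The inversion is routine: from $\psi^*(s) = c\,\psi(s)/SL(s)$ with $c = 1/\sum\psi/SL$ a constant, we get $\psi(s) = c^{-1}\psi^*(s)SL(s)$, and imposing $\sum_s\psi(s)=1$ fixes $c^{-1} = 1/\sum_{\tilde s}\psi^*(\tilde s)SL(\tilde s)$.

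The one point requiring care — and the \emph{main obstacle} — is well-definedness and uniqueness: the statement implicitly assumes both chains have (unique) stationary PMFs. As the excerpt notes, $\mathit{EDTMC}(G)$ and $\mathit{DTMC}(G)$ have the same communication classes, one is irreducible iff the other is, and both, being finite, are positive recurrent; so under irreducibility each has a unique stationary PMF and the argument above pins down the proportionality constant unambiguously. I would state this hypothesis explicitly (or observe that when the stationary PMF is non-unique, the displayed identity holds componentwise for any matched pair obtained by the construction above). I would also remark that all denominators are nonzero: $SL(s)\geq 1 > 0$ for every $s$, and $\sum_{\tilde s}\psi^*(\tilde s)SL(\tilde s) > 0$ since $\psi^*$ is a PMF with at least one positive entry. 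Everything else is the linear-algebra manipulation sketched above, using only the already-proved identity $PM^*(s,\tilde s) = SL(s)PM(s,\tilde s)$ and the definition of $SL$.
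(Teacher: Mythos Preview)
Your proposal is correct and uses essentially the same approach as the paper: both arguments rest on the matrix identity $\mathbf{P}^*-\mathbf{I}=\mathbf{Diag}(SL)(\mathbf{P}-\mathbf{I})$ (the paper writes this as $\mathbf{Diag}(SL)\,\mathbf{Diag}(PSL)+\mathbf{I}=\mathbf{Diag}(SL)$, which is the same thing), and then transport a stationary vector of one chain to the other by scaling with $\mathbf{Diag}(SL)^{\pm 1}$ and normalizing. The only cosmetic difference is direction: the paper starts from $\psi^*$, multiplies by $\mathbf{Diag}(SL)$, checks the result is stationary for $\mathbf{P}$, and normalizes to get $\psi$ directly; you start from $\psi$, divide by $SL$ to land on $\psi^*$, and then invert the relation---an extra step that could be avoided by running the argument the other way.
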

\begin{proof}
Let ${\it PSL}$ be a vector with the elements ${\it PSL}(s)\!=\!\left\{\!
\begin{array}{ll}
PM(s,s), & \mbox{if }s\rightarrow s;\\
0, & \mbox{otherwise}.
\end{array}\right.$
By definition of $PM^*(s,\tilde{s})$, we have ${\bf P}^*={\bf Diag}({\it SL})({\bf P}-{\bf Diag}({\it PSL}))$. Further,
$\psi^*({\bf P}^*-{\bf I})={\bf 0}$ and $\psi^*{\bf P}^*=\psi^*$. After replacement of ${\bf P}^*$ by ${\bf Diag}({\it
SL})({\bf P}-{\bf Diag}({\it PSL}))$ we obtain $\psi^*{\bf Diag}({\it SL})({\bf P}-{\bf Diag}({\it PSL}))=\psi^*$ and\
$\psi^*{\bf Diag}({\it SL}){\bf P}=\psi^*({\bf Diag}({\it SL}){\bf Diag}({\it PSL})+{\bf I})$. Note that $\forall s\in
DR(G)$, we have\\
${\it SL}(s){\it PSL}(s)+1=
\left\{\begin{array}{ll}
{\it SL}(s)PM(s,s)+1=\frac{PM(s,s)}{1-PM(s,s)}+1=\frac{1}{1-PM(s,s)}, & \mbox{if }s\rightarrow s;\\
{\it SL}(s)\cdot 0+1=1, & \mbox{otherwise};
\end{array}\right\}={\it SL}(s)$. Hence, ${\bf Diag}({\it SL}){\bf Diag}({\it PSL})+{\bf I}={\bf Diag}({\it SL})$.
Thus, $\psi^*{\bf Diag}({\it SL}){\bf P}=\psi^*{\bf Diag}({\it SL})$. Then, for $v=\psi^*{\bf Diag}({\it SL})$, we have
$v{\bf P}=v$ and $v({\bf P}-{\bf I})={\bf 0}$. In order to calculate $\psi$ on the basis of $v$, we must normalize it,
dividing its elements by their sum, since we should have $\psi{\bf 1}^{\rm T}=1$ as a result: $\psi=\frac{1}{v{\bf
1}^{\rm T}}v=\frac{1}{\psi^*{\bf Diag}({\it SL}){\bf 1}^{\rm T}}\psi^*{\bf Diag}({\it SL})$. Thus, the elements of
$\psi$ are calculated as follows: $\forall s\in DR(G),\ \psi (s)=\frac{\psi^*(s){\it SL}(s)}{\sum_{\tilde{s}\in
DR(G)}\psi^*(\tilde{s}){\it SL}(\tilde{s})}$. Then $\psi$ is a solution of the equation system $\left\{\begin{array}{l}
\psi({\bf P}-{\bf I})={\bf 0}\\
\psi{\bf 1}^{\rm T}=1
\end{array}
\right.$,
hence, it is the steady-state PMF for ${\it DTMC}(G)$.
\end{proof}

The next proposition relates the steady-state PMFs for ${\it SMC}(G)$ and ${\it DTMC}(G)$.

\begin{proposition}
Let $G$ be a dynamic expression, $\varphi$ be the steady-state PMF for ${\it SMC}(G)$ and $\psi$ be the
steady-state PMF for ${\it DTMC}(G)$. Then $\forall s\in DR(G)$,
$$\varphi (s)=\left\{
\begin{array}{ll}
\frac{\displaystyle\psi (s)}{\displaystyle\sum_{\tilde{s}\in DR_{\rm T}(G)}\psi (\tilde{s})}, & \mbox{if }s\in
DR_{\rm T}(G);\\
0, & \mbox{if }s\in DR_{\rm V}(G).
\end{array}
\right.$$
\label{pmfsmc.pro}
\end{proposition}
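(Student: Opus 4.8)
The plan is to chain together two formulas already established in the text: the definition of the steady-state PMF $\varphi$ for ${\it SMC}(G)$ in terms of $\psi^*$ and the average sojourn time vector ${\it SJ}$ (stated just before Example \ref{exprsmc.exm}), and the relation of Theorem \ref{pmfsim.the} expressing the steady-state PMF $\psi$ for ${\it DTMC}(G)$ in terms of $\psi^*$ and the self-loops abstraction vector ${\it SL}$. The link between the two is the observation, already recorded in the text, that ${\it SL}(s)={\it SJ}(s)$ for every tangible state $s$ (because the empty self-loop $s\stackrel{\emptyset}{\rightarrow}s$ is always available, so ${\it SL}(s)=\frac{1}{1-PM(s,s)}={\it SJ}(s)$), together with the fact that ${\it SJ}(s)=0$ for every vanishing state $s$.

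First I would rewrite the normalizer occurring in the definition of $\varphi$. Since ${\it SJ}$ vanishes on $DR_{\rm V}(G)$, we get $\sum_{j=1}^n\psi^*(s_j){\it SJ}(s_j)=\sum_{\tilde s\in DR_{\rm T}(G)}\psi^*(\tilde s){\it SJ}(\tilde s)=\sum_{\tilde s\in DR_{\rm T}(G)}\psi^*(\tilde s){\it SL}(\tilde s)$, so that for $s\in DR_{\rm T}(G)$ the definition of $\varphi$ reads $\varphi(s)=\frac{\psi^*(s){\it SL}(s)}{\sum_{\tilde s\in DR_{\rm T}(G)}\psi^*(\tilde s){\it SL}(\tilde s)}$. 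Next I would take the identity $\psi(s)=\frac{\psi^*(s){\it SL}(s)}{\sum_{u\in DR(G)}\psi^*(u){\it SL}(u)}$ of Theorem \ref{pmfsim.the} and sum it over $DR_{\rm T}(G)$, obtaining $\sum_{\tilde s\in DR_{\rm T}(G)}\psi(\tilde s)=\frac{\sum_{\tilde s\in DR_{\rm T}(G)}\psi^*(\tilde s){\it SL}(\tilde s)}{\sum_{u\in DR(G)}\psi^*(u){\it SL}(u)}$. Dividing $\psi(s)$ by this sum, the common denominator $\sum_{u\in DR(G)}\psi^*(u){\it SL}(u)$ cancels and what remains is precisely $\frac{\psi^*(s){\it SL}(s)}{\sum_{\tilde s\in DR_{\rm T}(G)}\psi^*(\tilde s){\it SL}(\tilde s)}$, i.e. $\varphi(s)$ by the previous step. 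For $s\in DR_{\rm V}(G)$ both sides equal $0$ by the definition of $\varphi$, so the claimed equality holds there as well.

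The argument is a pure chain of substitutions, so there is no genuine obstacle; the only points needing a word of care are that the denominators involved are nonzero — $\sum_{u\in DR(G)}\psi^*(u){\it SL}(u)\geq 1$ since $\psi^*$ is a probability vector and ${\it SL}\geq 1$ componentwise, while $\sum_{\tilde s\in DR_{\rm T}(G)}\psi^*(\tilde s){\it SL}(\tilde s)>0$ because the reachable behaviour contains tangible states carrying positive stationary probability — and that $\psi^*$, hence $\psi$ by Theorem \ref{pmfsim.the}, is well defined, which is exactly the ergodicity assumption under which the statement is made.
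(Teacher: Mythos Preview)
Your proposal is correct and follows essentially the same route as the paper's proof: both combine the definition of $\varphi$ in terms of $\psi^*$ and ${\it SJ}$ with Theorem \ref{pmfsim.the}, using the identities ${\it SL}(s)={\it SJ}(s)$ on $DR_{\rm T}(G)$ and ${\it SJ}(s)=0$ on $DR_{\rm V}(G)$ to match the normalizers. The only difference is cosmetic ordering --- you rewrite $\varphi$ first and then form the quotient of $\psi$, whereas the paper starts from $\psi(s)/\sum_{\tilde s\in DR_{\rm T}(G)}\psi(\tilde s)$ and simplifies down to $\varphi(s)$ --- and your added remarks on the nonvanishing of the denominators are a welcome bit of extra care not spelled out in the paper.
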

\begin{proof}
Let $s\in DR_{\rm T}(G)$. Remember that $\forall s\in DR_{\rm T}(G),\ {\it SL}(s)={\it SJ}(s)$ and $\forall s\in
DR_{\rm V}(G),\ {\it SJ}(s)=0$. Then, by Theorem \ref{pmfsim.the},
$\frac{\psi (s)}{\sum_{\tilde{s}\in DR_{\rm T}(G)}\psi (\tilde{s})}=\frac{\frac{\psi^*(s){\it
SL}(s)}{\sum_{\tilde{s}\in DR(G)}\psi^*(\tilde{s}){\it SL}(\tilde{s})}}{\sum_{\tilde{s}\in DR_{\rm
T}(G)}\left(\frac{\psi^*(\tilde{s}){\it SL}(\tilde{s})}{\sum_{\breve{s}\in DR(G)}\psi^*(\breve{s}){\it
SL}(\breve{s})}\right)}=\frac{\psi^*(s){\it SL}(s)}{\sum_{\tilde{s}\in
DR(G)}\psi^*(\tilde{s}){\it SL}(\tilde{s})}\cdot\\
\frac{\sum_{\breve{s}\in DR(G)}\psi^*(\breve{s}){\it SL}(\breve{s})}{\sum_{\tilde{s}\in DR_{\rm T}(G)}\psi^*(\tilde{s})
{\it SL}(\tilde{s})}\!=\!\frac{\psi^*(s){\it SL}(s)}{\sum_{\tilde{s}\in DR_{\rm T}(G)}\psi^*(\tilde{s}){\it
SL}(\tilde{s})}\!=\!\frac{\psi^*(s){\it SJ}(s)}{\sum_{\tilde{s}\in DR_{\rm T}(G)}\psi^*(\tilde{s}){\it
SJ}(\tilde{s})}\!=\!\frac{\psi^*(s){\it SJ}(s)}{\sum_{\tilde{s}\in DR(G)}\psi^*(\tilde{s}){\it
SJ}(\tilde{s})}\!=\!\varphi (s)$.
\end{proof}

Thus, to calculate $\varphi$, one can only apply normalization to some elements of $\psi$ (corresponding to the
tangible states), instead of abstracting from self-loops to get ${\bf P}^*$ and then $\psi^*$, followed by weighting
by ${\it SJ}$ and normalization. Hence, using ${\it DTMC}(G)$ instead of ${\it EDTMC}(G)$ allows one to avoid
multistage analysis, but the payment for it is more time-consuming numerical and more complex analytical calculation of
$\psi$ w.r.t. $\psi^*$. The reason is that ${\it DTMC}(G)$ has self-loops, unlike ${\it EDTMC}(G)$, hence, the
behaviour of ${\it DTMC}(G)$ stabilizes slower than that of ${\it EDTMC}(G)$ (if each of them has a single steady
state) and ${\bf P}$ is
denser matrix than ${\bf P}^*$, since ${\bf P}$ may additionally have non-zero elements at the main diagonal.
Nevertheless, Proposition \ref{pmfsmc.pro} is very important, since the relationship between $\varphi$ and $\psi$ it
discovers will be used in Section \ref{stationary.sec} to prove preservation of the stationary behaviour by a
stochastic equivalence.

\begin{example}
Let $E$ be from Example \ref{ts.exm}. In Figure \ref{exprboxsdtmc.fig}, the DTMC ${\it DTMC}(\overline{E})$ is
presented.\\
The TPM for ${\it DTMC}(\overline{E})$ is
${\bf P}=\left(\begin{array}{ccccc}
1-\rho & \rho & 0 & 0 & 0\\
0 & 1-\chi & \chi & 0 & 0\\
0 & 0 & 0 & \frac{l}{l+m} & \frac{m}{l+m}\\
0 & \theta & 0 & 1-\theta & 0\\
0 & \phi & 0 & 0 & 1-\phi
\end{array}\right)$.\\
The steady-state PMF for ${\it DTMC}(\overline{E})$ is $\psi\!=\!\frac{1}{\theta\phi(1+\chi )(l+m)+\chi (\phi
l+\theta m)}(0,\!\theta\phi (l+m),\!\chi\theta\phi (l+m),\!\chi\phi l,\!\chi\theta m)$.\\
Since $DR_{\rm T}(\overline{E})=\{s_1,s_2,s_4,s_5\}$ and $DR_{\rm V}(\overline{E})=\{s_3\}$, we have\\
$\sum_{\tilde{s}\in DR_{\rm T}(\overline{E})}\psi (\tilde{s})=\psi (s_1)+\psi (s_2)+\psi (s_4)+\psi (s_5)=
\frac{\theta\phi(l+m)+\chi (\phi l+\theta m)}{\theta\phi(1+\chi )(l+m)+\chi (\phi l+\theta m)}$. By Proposition
\ref{pmfsmc.pro},\\
$\begin{array}{l}

\varphi (s_1)=0\cdot\frac{\theta\phi(1+\chi )(l+m)+\chi (\phi l+\theta m)}{\theta\phi(l+m)+\chi (\phi l+\theta
m)}=0,\\[1mm]

\varphi (s_2)=\frac{\theta\phi (l+m)}{\theta\phi(1+\chi )(l+m)+\chi (\phi l+\theta m)}\cdot
\frac{\theta\phi(1+\chi )(l+m)+\chi (\phi l+\theta m)}{\theta\phi(l+m)+\chi (\phi l+\theta m)}=
\frac{\theta\phi (l+m)}{\theta\phi(l+m)+\chi (\phi l+\theta m)},\\[1mm]

\varphi (s_3)=0,\\[1mm]

\varphi (s_4)=\frac{\chi\phi l}{\theta\phi(1+\chi )(l+m)+\chi (\phi l+\theta m)}\cdot
\frac{\theta\phi(1+\chi )(l+m)+\chi (\phi l+\theta m)}{\theta\phi(l+m)+\chi (\phi l+\theta m)}=
\frac{\chi\phi l}{\theta\phi(l+m)+\chi (\phi l+\theta m)},\\[1mm]

\varphi (s_5)=\frac{\chi\theta m}{\theta\phi(1+\chi )(l+m)+\chi (\phi l+\theta m)}\cdot
\frac{\theta\phi(1+\chi )(l+m)+\chi (\phi l+\theta m)}{\theta\phi(l+m)+\chi (\phi l+\theta m)}=
\frac{\chi\theta m}{\theta\phi(l+m)+\chi (\phi l+\theta m)}.

\end{array}$\\
Thus, the steady-state PMF for ${\it SMC}(\overline{E})$ is $\varphi=\frac{1}{\theta\phi (l+m)+\chi (\phi l+\theta
m)}(0,\theta\phi (l+m),0,\chi\phi l,\chi\theta m)$. This coincides with the result obtained in Example
\ref{exprsmc.exm} with the use of $\psi^*$ and ${\it SJ}$.
\label{exprdtmc.exm}
\end{example}

\begin{figure}
\begin{center}
\includegraphics[scale=0.9]{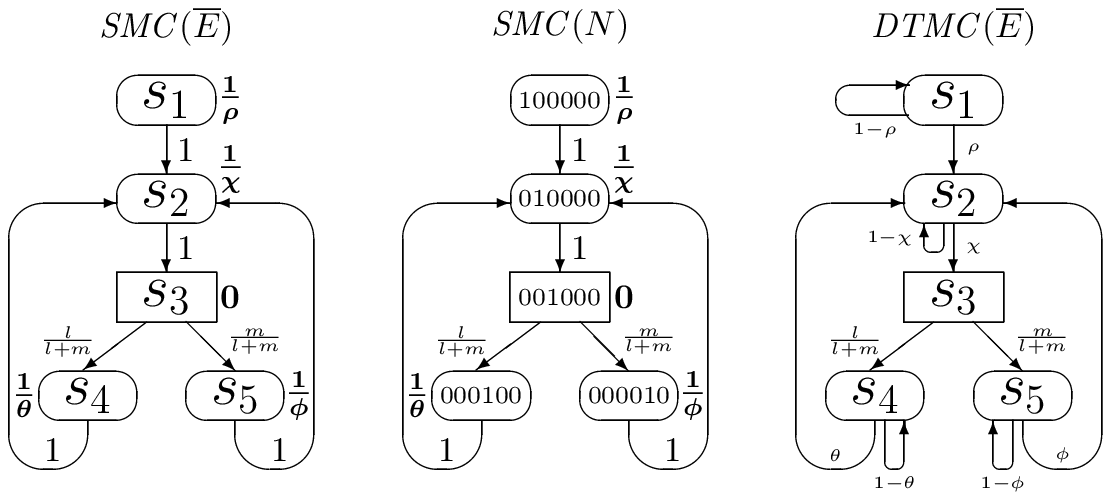}
\end{center}
\vspace{-7mm}
\caption{The underlying SMCs of $\overline{E}$ and $N=Box_{\rm dtsi}(\overline{E})$ and DTMC of $\overline{E}$ for
$E=[(\{a\},\rho )*((\{b\},\chi );\protect\newline
(((\{c\},\natural_l);(\{d\},\theta ))\cho ((\{e\},\natural_m);(\{f\},\phi ))))*{\sf Stop}]$.}
\label{exprboxsdtmc.fig}
\end{figure}

\section{Stochastic equivalences}
\label{stocheqs.sec}

Consider the expressions $E=(\{a\},\frac{1}{2})$ and $E'=(\{a\},\frac{1}{3})_1\cho (\{a\},\frac{1}{3})_2$, for which
$\overline{E}\neq_{\rm ts}\overline{E'}$, since $TS(\overline{E})$ has only one transition from the initial to the
final state (with probability $\frac{1}{2}$) while $TS(\overline{E'})$ has two such ones (with probabilities
$\frac{1}{4}$). On the other hand, all the mentioned transitions are labeled by activities with the same multiaction
part $\{a\}$. Next, the overall probabilities of the mentioned transitions of $TS(\overline{E})$ and
$TS(\overline{E'})$ coincide: $\frac{1}{2}=\frac{1}{4}+\frac{1}{4}$. Further, $TS(\overline{E})$ (as well as
$TS(\overline{E'})$) has one empty loop transition from the initial state to itself with probability $\frac{1}{2}$ and
one empty loop transition from the final state to itself with probability $1$. The empty loop transitions are labeled
by the empty set of activities. For calculating the transition probabilities of $TS(\overline{E'})$, take $\rho =\chi
=\frac{1}{3}$ in Example \ref{trprob.exm}. Then you will see that the probability parts $\frac{1}{3}$ and $\frac{1}{3}$
of the activities $(\{a\},\frac{1}{3})_1$ and $(\{a\},\frac{1}{3})_2$ are ``splitted'' among probabilities
$\frac{1}{4}$ and $\frac{1}{4}$ of the corresponding transitions and the probability $\frac{1}{2}$ of the empty loop
transition. Unlike $=_{\rm ts}$, most of the probabilistic and stochastic equivalences proposed in the literature do
not differentiate between the processes such as those specified by $E$ and $E'$. In Figure \ref{exmsteqimm.fig}(a), the
marked dtsi-boxes corresponding to the dynamic expressions $\overline{E}$ and $\overline{E'}$ are presented, i.e.
$N=Box_{dtsi}(\overline{E})$ and $N'=Box_{dtsi}(\overline{E'})$.

Since the semantic equivalence $=_{\rm ts}$ is too discriminating in many cases, we need weaker equivalence notions.
These equivalences should possess the following necessary properties. First, any two equivalent processes must have the
same sequences of multisets of multiactions, which are the multiaction parts of the activities executed in steps
starting from the initial states of the processes. Second, for every such sequence, its execution probabilities within
both processes must coincide. Third, the desired equivalence should preserve the branching structure of computations,
i.e. the points of choice of an external observer between several extensions of a particular computation should be
taken into account. In this section, we define one such notion: step stochastic bisimulation equivalence.

\subsection{Step stochastic bisimulation equivalence}

Bisimulation equivalences respect the particular points of choice in the behaviour of a system. To define stochastic
bisimulation equivalences, we
consider a bisimulation as an {\em equivalence} relation that partitions the states of the {\em union} of the
transition systems $TS(G)$ and $TS(G')$ of two dynamic expressions $G$ and $G'$ to be compared. For $G$ and $G'$ to be
bisimulation equivalent, the initial states $[G]_\approx$ and $[G']_\approx$ of their transition systems should be
related by a bisimulation having the following transfer property: if two states are related then in each of them the
same multisets of multiactions can occur, leading with the iden\-ti\-cal overall probability from each of the two
states to {\em the same equivalence class} for every such multiset.

Thus, we follow the approaches of \cite{JS90,LS91,HR94,Hil96,BGo98,Bern07,Bern15}, but we implement step semantics
instead of interleaving one considered in these papers. We use the generative probabilistic transition systems, like in
\cite{JS90}, in contrast to the reactive model, treated in \cite{LS91}, and we take transition probabilities instead of
transition rates from \cite{HR94,Hil96,BGo98,Bern07,Bern15}. Hence,
step stochastic bisimulation equivalence that define further is (in
a probability sense) comparable only with interleaving probabilistic bisimulation one from \cite{JS90}, and our
equivalence is obviously stronger.

In the definition below, we consider $\mathcal{L}(\Upsilon )\in\naturals_{\rm fin}^\mathcal{L}$ for
$\Upsilon\in\naturals_{\rm fin}^\mathcal{SIL}$, i.e. (possibly empty) multisets of multiactions. The multiactions can
be empty as well. In this case, $\mathcal{L}(\Upsilon )$ contains the elements $\emptyset$, but it is not empty itself.

Let $G$ be a dynamic expression and $\mathcal{H}\subseteq DR(G)$. For any $s\in DR(G)$ and $A\in\naturals_{\rm
fin}^\mathcal{L}$, we write $s\stackrel{A}{\rightarrow}_\mathcal{P}\mathcal{H}$, where
$\mathcal{P}=PM_A(s,\mathcal{H})$ is the {\em overall probability to move from $s$ into the set of states $\mathcal{H}$
via steps with the multiaction part $A$} defined as
$$PM_A(s,\mathcal{H})=\sum_{\{\Upsilon\mid\exists\tilde{s}\in\mathcal{H},\ s\stackrel{\Upsilon}{\rightarrow}\tilde{s},\
\mathcal{L}(\Upsilon )=A\}}PT(\Upsilon ,s).$$

We write $s\stackrel{A}{\rightarrow}\mathcal{H}$ if $\exists\mathcal{P},\
s\stackrel{A}{\rightarrow}_\mathcal{P}\mathcal{H}$. Further, we write $s\rightarrow_\mathcal{P}\mathcal{H}$ if $\exists
A,\ s\stackrel{A}{\rightarrow}\mathcal{H}$, where $\mathcal{P}=PM(s,\mathcal{H})$ is the {\em overall probability to
move from $s$ into the set of states $\mathcal{H}$ via any steps} defined as
$$PM(s,\mathcal{H})=\sum_{\{\Upsilon\mid\exists\tilde{s}\in\mathcal{H},\ s\stackrel{\Upsilon}{\rightarrow}\tilde{s}\}}
PT(\Upsilon ,s).$$

To introduce a stochastic bisimulation between dynamic expressions $G$ and $G'$, we should consider the ``composite''
set of states $DR(G)\cup DR(G')$, since we have to identify the probabilities to come from any two equivalent states
into the same ``composite'' equivalence class (w.r.t. the stochastic bisimulation). For $G\neq G'$, transitions
starting from the states of $DR(G)$ (or $DR(G')$) always lead to those from the same set, since $DR(G)\cap
DR(G')=\emptyset$, allowing us to ``mix'' the sets of states in the definition of stochastic bisimulation.

\begin{definition}
Let $G$ and $G'$ be dynamic expressions. An {\em equivalence} relation $\mathcal{R}\subseteq (DR(G)\cup DR(G'))^2$ is a
{\em step stochastic bisimulation} between $G$ and $G'$, denoted by $\mathcal{R}:G\bis_{\rm ss}G'$, if:
\begin{enumerate}

\item $([G]_\approx ,[G']_\approx )\in\mathcal{R}$.

\item $(s_1,s_2)\in\mathcal{R}\ \Rightarrow\ \forall\mathcal{H}\in (DR(G)\cup DR(G'))/_\mathcal{R},\ \forall
A\in\naturals_{\rm fin}^\mathcal{L},\ s_1\stackrel{A}{\rightarrow}_\mathcal{P}\mathcal{H}\ \Leftrightarrow\
s_2\stackrel{A}{\rightarrow}_\mathcal{P}\mathcal{H}$.

\end{enumerate}
Two dynamic expressions $G$ and $G'$ are {\em step stochastic bisimulation equivalent}, denoted by $G\bis_{\rm ss}G'$,
if $\exists\mathcal{R}:G\bis_{\rm ss}G'$.
\end{definition}

The following proposition states that every step stochastic bisimulation binds tangible states only with tangible ones
and the same is valid for vanishing states.

\begin{proposition}
Let $G$ and $G'$ be dynamic expressions and $\mathcal{R}:G\bis_{\rm ss}G'$. Then
$$\mathcal{R}\subseteq (DR_{\rm T}(G)\cup DR_{\rm T}(G'))^2\uplus (DR_{\rm V}(G)\cup DR_{\rm V}(G'))^2.$$
\label{bissplit.pro}
\end{proposition}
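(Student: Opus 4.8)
The plan is to show that if $(s_1,s_2) \in \mathcal{R}$ then $s_1$ and $s_2$ are either both tangible or both vanishing; together with Proposition transfer property this will pin down $\mathcal{R}$ to the claimed union of squares. The key observation is that tangibility of a state $s$ is detectable purely from the set $\mathcal{L}(Exec(s))$ of multiaction parts of executable sets: $s$ is tangible iff $\emptyset \in Exec(s)$ (the empty loop rule {\bf El} applies), equivalently iff $PT(\emptyset, s) > 0$; and $s$ is vanishing iff $Exec(s) \subseteq \naturals_{\rm fin}^{\mathcal{IL}} \setminus \{\emptyset\}$, in which case no empty step is ever available. So the first step is to record this dichotomy: from any tangible state the empty step $s \stackrel{\emptyset}{\rightarrow} s$ is always possible (with positive probability $PT(\emptyset,s)$), whereas from any vanishing state it is impossible.

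Next I would argue by contradiction. Suppose $(s_1, s_2) \in \mathcal{R}$ with $s_1 \in DR_{\rm T}$ and $s_2 \in DR_{\rm V}$ (the two expressions may be $G$ or $G'$; the argument is symmetric). Consider $A = \emptyset \in \naturals_{\rm fin}^{\mathcal{L}}$ (the empty multiset of multiactions — note $\mathcal{L}(\emptyset) = \emptyset$, and the empty step executes the empty set of activities, whose multiaction part is the empty multiset). Let $\mathcal{H} = [s_1]_{\mathcal{R}}$ be the $\mathcal{R}$-class of $s_1$; since $\mathcal{R}$ is an equivalence, $s_2 \in \mathcal{H}$ as well. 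From the tangible state $s_1$ we have $s_1 \stackrel{\emptyset}{\rightarrow} s_1$ with $s_1 \in \mathcal{H}$, so $PM_\emptyset(s_1, \mathcal{H}) \geq PT(\emptyset, s_1) > 0$, i.e.\ $s_1 \stackrel{\emptyset}{\rightarrow}_{\mathcal{P}} \mathcal{H}$ with $\mathcal{P} > 0$. By the transfer property of $\mathcal{R}$, we must have $s_2 \stackrel{\emptyset}{\rightarrow}_{\mathcal{P}} \mathcal{H}$ with the same $\mathcal{P} > 0$. But $PM_\emptyset(s_2, \mathcal{H})$ sums $PT(\Upsilon, s_2)$ over $\Upsilon$ with $\mathcal{L}(\Upsilon) = \emptyset$ and $s_2 \stackrel{\Upsilon}{\rightarrow} \tilde{s} \in \mathcal{H}$; since $s_2$ is vanishing, $Exec(s_2) \subseteq \naturals_{\rm fin}^{\mathcal{IL}} \setminus \{\emptyset\}$, and no nonempty multiset of immediate multiactions has empty multiaction part unless all its multiactions are $\emptyset$ — but $(\emptyset, \natural_l)$ is still a (nonempty) set of activities whose multiaction part multiset is $\{\emptyset\} \neq \emptyset$ as a multiset. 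Hence there is no $\Upsilon \in Exec(s_2)$ with $\mathcal{L}(\Upsilon) = \emptyset$ (the empty multiset), so $PM_\emptyset(s_2, \mathcal{H}) = 0$, contradicting $\mathcal{P} > 0$.

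This contradiction shows $\mathcal{R}$ never relates a tangible state to a vanishing one, so $\mathcal{R} \subseteq (DR_{\rm T}(G) \cup DR_{\rm T}(G'))^2 \uplus (DR_{\rm V}(G) \cup DR_{\rm V}(G'))^2$, as required. The only delicate point — the step I expect to require the most care — is the bookkeeping distinguishing the empty \emph{set of activities} $\emptyset$ from a nonempty set of activities whose multiaction parts are all the empty multiaction $\emptyset$: in the former case $\mathcal{L}(\Upsilon)$ is the empty multiset of multiactions, while in the latter it is a nonempty multiset containing copies of $\emptyset$. Since the paper's convention (stated just before the definition of step stochastic bisimulation) is precisely that $\mathcal{L}(\Upsilon)$ may contain the element $\emptyset$ without being empty, the argument goes through; one should simply be explicit that the witness multiaction part $A = \emptyset$ (empty multiset) is realizable from $s_1$ only via the genuine empty step, which is unavailable from vanishing $s_2$.
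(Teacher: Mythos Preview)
Your proof is correct and follows essentially the same route as the paper's: take $A=\emptyset$ in the transfer property and use that the empty loop $s\stackrel{\emptyset}{\rightarrow}s$ is available precisely from tangible states. Your extra care in distinguishing the empty multiset of multiactions from a nonempty multiset containing the empty multiaction is exactly the point the paper flags just before the bisimulation definition, and it is what makes the argument go through.
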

\begin{proof}
By definition of transition systems of expressions, for every tangible state, there is an empty loop from it, and no
empty loop transitions are possible from vanishing states. Further, $\mathcal{R}$ preserves empty loops. To verify
this, first take $A=\emptyset$ in its definition to get $\forall (s_1,s_2)\in\mathcal{R},\ \forall\mathcal{H}\in
(DR(G)\cup DR(G'))/_\mathcal{R},\ s_1\stackrel{\emptyset}{\rightarrow}_\mathcal{P}\mathcal{H}\ \Leftrightarrow\\
s_2\stackrel{\emptyset}{\rightarrow}_\mathcal{P}\mathcal{H}$, and then observe that the empty loop transition from a
state leads only to the same state.
\end{proof}

Let $\mathcal{R}_{\rm ss}(G,G')=\bigcup\{\mathcal{R}\mid\mathcal{R}:G\bis_{\rm ss}G'\}$ be the {\em union of all step
stochastic bisimulations} between $G$ and $G'$. The following proposition proves that $\mathcal{R}_{\rm ss}(G,G')$ is
also an {\em equivalence} and $\mathcal{R}_{\rm ss}(G,G'):G\bis_{\rm ss}G'$.

\begin{proposition}
Let $G$ and $G'$ be dynamic expressions and $G\bis_{\rm ss}G'$. Then $\mathcal{R}_{\rm ss}(G,G')$ is the largest step
stochastic bisimulation between $G$ and $G'$.
\label{largestbisim.pro}
\end{proposition}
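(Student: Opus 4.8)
The plan is to show two things: first, that $\mathcal{R}_{\rm ss}(G,G')$ is an equivalence relation, and second, that it is itself a step stochastic bisimulation between $G$ and $G'$; together with the fact that it contains every such bisimulation by construction, this gives that it is the largest one. The standard trick of taking the union of a family of bisimulations has a subtlety here: the union of equivalence relations need not be transitive, so one cannot simply assert that $\mathcal{R}_{\rm ss}(G,G')$ is an equivalence. I would handle this exactly as is customary for bisimulation on a disjoint union of systems, namely by passing to the \emph{transitive closure} (equivalently, the reflexive-symmetric-transitive closure) and checking that the transfer property is preserved under this closure.

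First I would argue that $\mathcal{R}_{\rm ss}(G,G')$ is reflexive and symmetric: reflexivity holds because the identity relation on $DR(G)\cup DR(G')$ is easily seen to be a step stochastic bisimulation (it contains $([G]_\approx,[G']_\approx)$ only if $G=G'$, so more precisely I would note that for any $\mathcal{R}:G\bis_{\rm ss}G'$ the relation $\mathcal{R}\cup\mathcal{R}^{-1}\cup\mathrm{Id}$ is again such a bisimulation, since equivalence closure of a single bisimulation is a bisimulation by a routine check); symmetry holds because $\mathcal{R}:G\bis_{\rm ss}G'$ iff $\mathcal{R}^{-1}:G'\bis_{\rm ss}G$ and the defining clause~2 is already stated symmetrically. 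For transitivity, I would take the transitive closure $\mathcal{R}^+$ of $\mathcal{R}_{\rm ss}(G,G')$ and show it is still a step stochastic bisimulation; since $\mathcal{R}^+$ then contains every $\mathcal{R}:G\bis_{\rm ss}G'$ and is an equivalence, it must equal $\mathcal{R}_{\rm ss}(G,G')$ by maximality of the union, so the union was transitive all along.

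The key step is verifying clause~2 for the closure. Suppose $(s_1,s_2)\in\mathcal{R}^+$, so there is a chain $s_1=r_0,r_1,\ldots,r_k=s_2$ with consecutive pairs in $\mathcal{R}_{\rm ss}(G,G')$, each pair lying in some bisimulation $\mathcal{R}_i:G\bis_{\rm ss}G'$. The decisive observation is that the partition induced by $\mathcal{R}^+$ refines\,---\,in fact, since each $\mathcal{R}_i\subseteq\mathcal{R}^+$, is coarser than each individual partition $(DR(G)\cup DR(G'))/_{\mathcal{R}_i}$; hence every $\mathcal{H}\in(DR(G)\cup DR(G'))/_{\mathcal{R}^+}$ is a disjoint union of $\mathcal{R}_i$-classes, and because $PM_A(s,\cdot)$ is additive over disjoint unions of state sets (it is a sum of $PT(\Upsilon,s)$ values, as in the definition of $PM_A$), the equality $r_{i-1}\stackrel{A}{\rightarrow}_\mathcal{P}\mathcal{H}\Leftrightarrow r_i\stackrel{A}{\rightarrow}_\mathcal{P}\mathcal{H}$ lifts from $\mathcal{R}_i$-classes to the $\mathcal{R}^+$-class $\mathcal{H}$. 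Chaining these equalities along $r_0,\ldots,r_k$ yields $s_1\stackrel{A}{\rightarrow}_\mathcal{P}\mathcal{H}\Leftrightarrow s_2\stackrel{A}{\rightarrow}_\mathcal{P}\mathcal{H}$, which is clause~2 for $\mathcal{R}^+$. Clause~1 is immediate since $([G]_\approx,[G']_\approx)$ already lies in $\mathcal{R}_{\rm ss}(G,G')\subseteq\mathcal{R}^+$.

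The main obstacle is precisely the additivity-plus-refinement argument needed to push the transfer property through the transitive closure; once one records that an $\mathcal{R}^+$-class decomposes into $\mathcal{R}_i$-classes and that $PM_A(s,\mathcal{H})=\sum_j PM_A(s,\mathcal{H}_j)$ for a partition $\mathcal{H}=\biguplus_j\mathcal{H}_j$, the rest is bookkeeping. I would conclude by remarking that $\mathcal{R}^+$ is a step stochastic bisimulation that contains all of them, hence equals $\mathcal{R}_{\rm ss}(G,G')$, which is therefore the largest one and, in particular, an equivalence relation.
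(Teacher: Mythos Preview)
Your proposal is correct and follows essentially the same approach as the paper: take the transitive closure of the union of all step stochastic bisimulations, observe that each class of the closure decomposes as a disjoint union of classes of any individual bisimulation $\mathcal{R}_j$, use additivity of $PM_A(s,\cdot)$ to lift the transfer property, and then chain along a path witnessing $(s_1,s_2)\in\mathcal{R}^+$ (the paper phrases this last step as an induction on the length $n$ of the composition $(\cup_j\mathcal{R}_j)^n$, which is the same thing). Your closing observation that $\mathcal{R}^+$ being a bisimulation forces $\mathcal{R}^+=\mathcal{R}_{\rm ss}(G,G')$, so the union was already transitive, matches the paper's conclusion as well; your digression about reflexivity via $\mathcal{R}\cup\mathcal{R}^{-1}\cup\mathrm{Id}$ is unnecessary since each $\mathcal{R}_j$ is already an equivalence by definition, but it does no harm.
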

\begin{proof}
See Appendix \ref{largestbisim.ssc}.
\end{proof}

In \cite{Bai96}, an algorithm for strong probabilistic bisimulation on labeled probabilistic transition systems (a
reformulation of probabilistic automata) was proposed with time complexity $O(n^2 m)$, where $n$ is the number of
states and $m$ is the number of transitions. In \cite{BEMC00}, a decision algorithm for strong probabilistic
bisimulation on generative labeled probabilistic transition systems was constructed with time complexity $O(m\log n)$
and space complexity $O(m+n)$. In \cite{CSe02}, a polynomial algorithm for strong probabilistic bisimulation on
probabilistic automata was presented. The mentioned algorithms for interleaving probabilistic bisimulation equivalence
can be adapted for $\bis_{ss}$ using the method from \cite{JM96}, applied to get the decidability results for step
bisimulation equivalence. The method respects that transition systems in interleaving and step semantics differ only by
availability of the additional transitions corresponding to parallel execution of activities in the latter (which is
our case).

\subsection{Interrelations of the stochastic equivalences}

We now compare the discrimination power of the stochastic equivalences.

\begin{theorem}
For dynamic expressions $G,\ G'$ the next {\em strict} implications hold:
$$G\approx G'\ \Rightarrow\ G=_{\rm ts}G'\ \Rightarrow\ G\bis_{\rm ss}G'.$$
\label{intsteqim.the}
\end{theorem}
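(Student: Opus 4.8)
The plan is to establish the two implications separately and then, for each, exhibit a simple example witnessing strictness. For the first implication $G\approx G'\ \Rightarrow\ G=_{\rm ts}G'$: if $G\approx G'$, then $[G]_\approx = [G']_\approx$, so the two dynamic expressions share the very same structural-equivalence class. Since $DR(G)$ is defined as the minimal set containing $[G]_\approx$ and closed under the action transitions $\stackrel{\Upsilon}{\rightarrow}$, and since $PT(\Upsilon,s)$, $PM(s,\tilde s)$ etc. are all defined purely in terms of the $\stackrel{\Upsilon}{\rightarrow}$ relation on equivalence classes, we get $DR(G) = DR(G')$ and the transition relations coincide on the nose. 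Taking $\beta$ to be the identity map on $DR(G) = DR(G')$ gives an isomorphism $\beta: TS(G)\simeq TS(G')$, hence $G=_{\rm ts}G'$. (I would state explicitly that the labels $L_G = 2^\mathcal{SIL}\times(0;1]$ and the transition set $\mathcal{T}_G$ are determined by $DR(G)$ and the $PT$-values, so nothing depends on the syntactic identity of $G$ beyond its class.)

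For the second implication $G=_{\rm ts}G'\ \Rightarrow\ G\bis_{\rm ss}G'$: suppose $\beta: TS(G)\simeq TS(G')$. I would define the relation $\mathcal{R} = \{(s,s')\in (DR(G)\cup DR(G'))^2 \mid \beta(s)=s' \text{ or } \beta^{-1}(s')=s \text{ or } s=s'\in DR(G)=DR(G')\text{-image pairs}\}$ — more carefully, let $\mathcal{R}$ be the equivalence relation generated by $\{(s,\beta(s))\mid s\in DR(G)\}$. Since $\beta$ is a bijection respecting initial states and transition labels together with their probabilities, each $\mathcal{R}$-class is of the form $\{s,\beta(s)\}$ (a two-element set, or a singleton when $G=G'$), and $\beta$ maps $\mathcal{R}$-classes in $DR(G)$ bijectively onto $\mathcal{R}$-classes meeting $DR(G')$. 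Then I would verify the transfer property: for $(s_1,s_2)\in\mathcal{R}$ with $s_2=\beta(s_1)$, any $\mathcal{H}\in(DR(G)\cup DR(G'))/_\mathcal{R}$, and any $A\in\naturals_{\rm fin}^\mathcal{L}$, the clause $s_1\stackrel{A}{\rightarrow}_\mathcal{P}\mathcal{H} \Leftrightarrow s_2\stackrel{A}{\rightarrow}_\mathcal{P}\mathcal{H}$ follows because $PM_A(s_1,\mathcal{H}) = \sum PT(\Upsilon,s_1)$ over $\Upsilon$ with $\mathcal{L}(\Upsilon)=A$ reaching $\mathcal{H}\cap DR(G)$, and $\beta$ sends each such transition $s_1\stackrel{\Upsilon}{\rightarrow}_{PT(\Upsilon,s_1)}\tilde s$ to $s_2\stackrel{\Upsilon}{\rightarrow}_{PT(\Upsilon,s_1)}\beta(\tilde s)$ with $\beta(\tilde s)$ in the same class $\mathcal{H}$ — the multiaction part $A=\mathcal{L}(\Upsilon)$ is preserved since $\beta$ preserves the full label $\Upsilon$. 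So $PM_A(s_1,\mathcal{H})=PM_A(s_2,\mathcal{H})$, giving $\mathcal{R}:G\bis_{\rm ss}G'$.

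For strictness I would reuse the running examples. The pair $G = \overline{E}$, $G' = \overline{E'}$ with $E=(\{a\},\frac12)$ and $E'=(\{a\},\frac13)_1\cho(\{a\},\frac13)_2$, discussed at the start of Section \ref{stocheqs.sec}, satisfies $\overline{E}\neq_{\rm ts}\overline{E'}$ (one transition of probability $\frac12$ versus two of probability $\frac14$), yet I claim $\overline{E}\bis_{\rm ss}\overline{E'}$ via the relation that glues the two initial states into one class and the two final states into another: from the glued initial class, the only non-empty step has multiaction part $\{a\}$ and overall probability $\frac12$ on both sides ($\frac14+\frac14=\frac12$), the empty step has probability $\frac12$ on both sides, and from the final class only the empty loop of probability $1$ is available — so the transfer condition holds. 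For strictness of the first implication it suffices to note that $\overline{E}\approx\overline{E}$ trivially but any two structurally inequivalent expressions with isomorphic transition systems (e.g. $(\{a\},\rho)$ and $(\{a\},\rho)\|\,(\{\},\frac12)\rs b$-type constructions, or more simply a relabeled copy) give $=_{\rm ts}$ without $\approx$; I would pick the cleanest such witness. The main obstacle is the bookkeeping in the second implication — specifically, making precise that the $\mathcal{R}$-classes on the disjoint union are exactly the $\beta$-orbits and that $PM_A$ is computed class-wise — but this is routine once the orbit structure of $\beta$ is spelled out; no genuinely hard estimate is involved.
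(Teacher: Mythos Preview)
Your approach is essentially the paper's: the first implication holds because $TS(G)$ is defined from $[G]_\approx$, and for the second you take the equivalence generated by the graph of the isomorphism $\beta$ as the bisimulation---the paper does exactly this (writing simply $\mathcal{R}=\{(s,\beta(s))\mid s\in DR(G)\}$ and leaving the equivalence-closure implicit). Your verification of the transfer property and your strictness witness for $=_{\rm ts}\Rightarrow\bis_{\rm ss}$ (the pair $(\{a\},\tfrac12)$ versus $(\{a\},\tfrac13)_1\cho(\{a\},\tfrac13)_2$) match the paper exactly.

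The one point where you are vaguer than the paper is the strictness witness for $\approx\Rightarrow=_{\rm ts}$. Your suggested candidates are not obviously correct: a parallel composition with an extra restricted component changes the state space, and a ``relabeled copy'' changes the activity labels, so neither yields an isomorphic transition system on the nose. The paper's witness is cleaner: take $E=(\{a\},\tfrac12);(\{\hat a\},\tfrac12)$ and $E'=E\sy a$. Since $a$ and $\hat a$ never occur concurrently in $E$, the operator $\sy a$ adds no new transitions and $TS(\overline{E})\simeq TS(\overline{E'})$, yet $\overline{E}\not\approx\overline{E'}$ because inaction rules cannot introduce or remove the outer $\sy a$. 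You should replace your placeholder with an example of this kind.
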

\begin{proof}
Let us check the validity of the implications.
\begin{itemize}

\item The implication $=_{\rm ts}\Rightarrow\bis_{\rm ss}$ is proved as follows. Let $\beta :G=_{\rm ts}G'$. Then it is
easy to see that $\mathcal{R}:G\bis_{\rm ss}G'$, where $\mathcal{R}=\{(s,\beta (s))\mid s\in DR(G)\}$.

\item The implication $\approx\Rightarrow =_{\rm ts}$ is valid, since the transition system of a dynamic formula is
defined based on its structural equivalence class.

\end{itemize}

Let us see that that the implications are strict, by the following counterexamples.
\begin{itemize}

\item[(a)] Let $E=(\{a\},\frac{1}{2})$ and $E'=(\{a\},\frac{1}{3})_1\cho (\{a\},\frac{1}{3})_2$. Then
$\overline{E}\bis_{\rm ss}\overline{E'}$, but $\overline{E}\neq_{\rm ts}\overline{E'}$, since $TS(\overline{E})$
has only one transition from the initial to the final state while $TS(\overline{E'})$ has two such ones.

\item[(b)] Let $E=(\{a\},\frac{1}{2});(\{\hat{a}\},\frac{1}{2})$ and $E'=((\{a\},\frac{1}{2});
(\{\hat{a}\},\frac{1}{2}))\sy a$. Then $\overline{E}=_{\rm ts}\overline{E'}$, but
$\overline{E}\not\approx\overline{E'}$, since $\overline{E}$ and $\overline{E'}$ cannot be reached from each other
by inaction rules.

\end{itemize}
\end{proof}

\begin{example}
In Figure \ref{exmsteqimm.fig}, the marked dtsi-boxes corresponding to the dynamic expressions from
examples of Theorem \ref{intsteqim.the} are presented, i.e. $N=Box_{\rm dtsi}(\overline{E})$ and $N'=Box_{\rm
dtsi}(\overline{E'})$ for each picture (a)--(b).
\end{example}

\begin{figure}
\begin{center}
\includegraphics[scale=0.9]{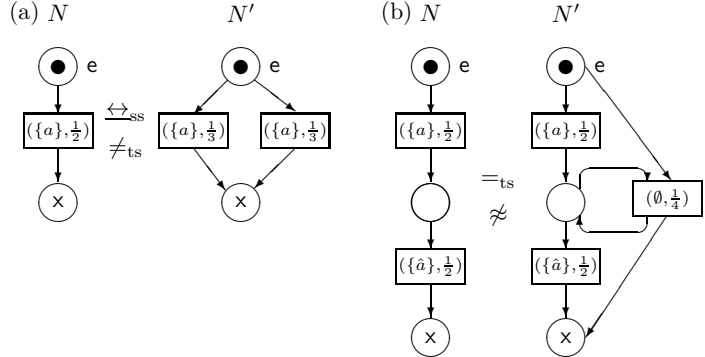}
\end{center}
\vspace{-7mm}
\caption{Dtsi-boxes of the dynamic expressions from equivalence examples of Theorem \ref{intsteqim.the}.}
\label{exmsteqimm.fig}
\end{figure}

\section{Reduction modulo equivalences}
\label{reduction.sec}

The proposed equivalences
can be used to reduce transition systems and SMCs of expressions (reachability graphs and SMCs of dtsi-boxes).
Reductions of graph-based models, like transition systems, reachability graphs and SMCs, result in those with less
states (the graph nodes). The goal of the reduction is to decrease the number of states in the semantic representation
of the modeled system while preserving its important qualitative and quantitative properties. Thus, the reduction
allows one to simplify the behavioural and performance analysis of systems.

An {\em autobisimulation} is a bisimulation between an expression and itself. For a dynamic expression $G$ and a step
stochastic autobisimulation on it $\mathcal{R}:G\bis_{\rm ss}G$, let $\mathcal{K}\in DR(G)/_\mathcal{R}$ and
$s_1,s_2\in\mathcal{K}$. We have $\forall\widetilde{\mathcal{K}}\in DR(G)/_\mathcal{R},\ \forall A\in\naturals_{\rm
fin}^\mathcal{L},\ s_1\stackrel{A}{\rightarrow}_\mathcal{P}\widetilde{\mathcal{K}}\ \Leftrightarrow\
s_2\stackrel{A}{\rightarrow}_\mathcal{P}\widetilde{\mathcal{K}}$. The previous equality is valid for all
$s_1,s_2\in\mathcal{K}$, hence, we can rewrite it as
$\mathcal{K}\stackrel{A}{\rightarrow}_\mathcal{P}\widetilde{\mathcal{K}}$, where
$\mathcal{P}=PM_A(\mathcal{K},\widetilde{\mathcal{K}})=PM_A(s_1,\widetilde{\mathcal{K}})=
PM_A(s_2,\widetilde{\mathcal{K}})$. We write $\mathcal{K}\stackrel{A}{\rightarrow}\widetilde{\mathcal{K}}$ if
$\exists\mathcal{P},\ \mathcal{K}\stackrel{A}{\rightarrow}_\mathcal{P}\widetilde{\mathcal{K}}$ and
$\mathcal{K}\rightarrow\widetilde{\mathcal{K}}$ if $\exists A,\
\mathcal{K}\stackrel{A}{\rightarrow}\widetilde{\mathcal{K}}$. The similar arguments allow us to write
$\mathcal{K}\rightarrow_\mathcal{P}\widetilde{\mathcal{K}}$, where
$\mathcal{P}=PM(\mathcal{K},\widetilde{\mathcal{K}})=PM(s_1,\widetilde{\mathcal{K}})=PM(s_2,\widetilde{\mathcal{K}})$.

By Proposition \ref{bissplit.pro}, $\mathcal{R}\subseteq (DR_{\rm T}(G))^2\uplus (DR_{\rm V}(G))^2$. Hence,
$\forall\mathcal{K}\in DR(G)/_\mathcal{R}$, all states from $\mathcal{K}$ are tangible if $\mathcal{K}\in DR_{\rm
T}(G)/_\mathcal{R}$, or vanishing if $\mathcal{K}\in DR_{\rm V}(G)/_\mathcal{R}$.

The {\em average sojourn time in the equivalence class (w.r.t. $\mathcal{R}$) of states $\mathcal{K}$} is
$${\it SJ}_\mathcal{R}(\mathcal{K})=
\left\{
\begin{array}{ll}
\frac{1}{1-PM(\mathcal{K},\mathcal{K})}, & \mbox{if }\mathcal{K}\in DR_{\rm T}(G)/_\mathcal{R};\\
0, & \mbox{if }\mathcal{K}\in DR_{\rm V}(G)/_\mathcal{R}.
\end{array}
\right.$$
The {\em average sojourn time vector for the equivalence classes (w.r.t. $\mathcal{R}$) of states} ${\it
SJ}_\mathcal{R}$ of $G$ has the elements ${\it SJ}_\mathcal{R}(\mathcal{K}),\ \mathcal{K}\in DR(G)/_\mathcal{R}$.

The {\em sojourn time variance in the equivalence class (w.r.t. $\mathcal{R}$) of states $\mathcal{K}$} is
$${\it VAR}_\mathcal{R}(\mathcal{K})=
\left\{
\begin{array}{ll}
\frac{PM(\mathcal{K},\mathcal{K})}{(1-PM(\mathcal{K},\mathcal{K}))^2}, & \mbox{if }\mathcal{K}\in
DR_{\rm T}(G)/_\mathcal{R};\\
0, & \mbox{if }\mathcal{K}\in DR_{\rm V}(G)/_\mathcal{R}.
\end{array}
\right.$$
The {\em sojourn time variance vector for the equivalence classes (w.r.t. $\mathcal{R}$) of states} ${\it
VAR}_\mathcal{R}$ of $G$ has the elements ${\it VAR}_\mathcal{R}(\mathcal{K}),\ \mathcal{K}\in DR(G)/_\mathcal{R}$.

Let $\mathcal{R}_{\rm ss}(G)=\bigcup\{\mathcal{R}\mid\mathcal{R}:G\bis_{\rm ss}G\}$ be the {\em union of all step
stochastic autobisimulations} on $G$. By Proposition \ref{largestbisim.pro}, $\mathcal{R}_{\rm ss}(G)$ is the largest
step stochastic autobisimulation on $G$. Based on the equivalence classes w.r.t. $\mathcal{R}_{\rm ss}(G)$, the
quotient (by $\bis_{\rm ss}$) transition systems and the quotient (by $\bis_{\rm ss}$) underlying SMCs of expressions
can be defined. The mentioned equivalence classes become the quotient states. The average sojourn time in a quotient
state is that in the corresponding equivalence class. Every quotient transition between two such composite states
represents all steps (having the same multiaction part in case of the transition system quotient) from the first state
to the second one.

\begin{definition}
Let $G$ be a dynamic expression. The {\em quotient (by $\bis_{\rm ss}$) (labeled probabilistic) transition system} of
$G$ is a quadruple $TS_{\bis_{\rm ss}}(G)=(S_{\bis_{\rm ss}},L_{\bis_{\rm ss}},\mathcal{T}_{\bis_{\rm ss}},s_{\bis_{\rm
ss}})$, where
\begin{itemize}

\item $S_{\bis_{\rm ss}}=DR(G)/_{\mathcal{R}_{\rm ss}(G)}$;

\item $L_{\bis_{\rm ss}}=\naturals_{\rm fin}^\mathcal{L}\times (0;1]$;

\item $\mathcal{T}_{\bis_{\rm ss}}=\{(\mathcal{K},(A,PM_A(\mathcal{K},\widetilde{\mathcal{K}})),
\widetilde{\mathcal{K}})\mid\mathcal{K}, \widetilde{\mathcal{K}}\in DR(G)/_{\mathcal{R}_{\rm ss}(G)},\
\mathcal{K}\stackrel{A}{\rightarrow}\widetilde{\mathcal{K}}\}$;

\item $s_{\bis_{\rm ss}}=[[G]_\approx]_{\mathcal{R}_{\rm ss}(G)}$.

\end{itemize}
\end{definition}

The transition $(\mathcal{K},(A,\mathcal{P}),\widetilde{\mathcal{K}})\in\mathcal{T}_{\bis_{\rm ss}}$ will be written as
$\mathcal{K}\stackrel{A}{\rightarrow}_\mathcal{P}\widetilde{\mathcal{K}}$.

\begin{example}
Let $F$ be an abstraction of the static expression $E$ from Example \ref{ts.exm}, with $c=e,\ d=f,\\
\theta =\phi$, i.e. $F=[(\{a\},\rho )*((\{b\},\chi );(((\{c\},\natural_l);(\{d\},\theta ))\cho
((\{c\},\natural_m);(\{d\},\theta ))))*{\sf Stop}]$. Then $DR(\overline{F})=\{s_1,s_2,s_3,s_4,s_5\}$ is obtained from
$DR(\overline{E})$ via substitution of the symbols $e,\ f,\ \phi$ by $c,\ d,\ \theta$, respectively, in the
specifications of the corresponding states from the latter set. We have $DR_T(\overline{F})=\{s_1,s_2,s_4,s_5\}$ and
$DR_V(\overline{F})=\{s_3\}$. Further, $DR(\overline{F})/_{\mathcal{R}_{\rm
ss}(\overline{F})}=\{\mathcal{K}_1,\mathcal{K}_2,\mathcal{K}_3, \mathcal{K}_4\}$, where $\mathcal{K}_1=\{s_1\},\
\mathcal{K}_2=\{s_2\},\ \mathcal{K}_3=\{s_3\},\ \mathcal{K}_4=\{s_4,s_5\}$. We also have
$DR_T(\overline{F})/_{\mathcal{R}_{\rm ss}(\overline{F})}=\{\mathcal{K}_1,\mathcal{K}_2,\mathcal{K}_4\}$ and
$DR_V(\overline{F})/_{\mathcal{R}_{\rm ss}(\overline{F})}=\{\mathcal{K}_3\}$. In Figure \ref{exprqtssdtmc.fig}, the
quotient transition system $TS_{\bis_{\rm ss}}(\overline{F})$ is presented.
\label{qts.exm}
\end{example}

The {\em quotient (by $\bis_{\rm ss}$) average sojourn time vector} of $G$ is defined as ${\it SJ}_{\bis_{\rm ss}}={\it
SJ}_{\mathcal{R}_{\rm ss}(G)}$. The {\em quotient (by $\bis_{\rm ss}$) sojourn time variance vector} of $G$ is defined
as ${\it VAR}_{\bis_{\rm ss}}={\it VAR}_{\mathcal{R}_{\rm ss}(G)}$.

Let $\mathcal{K}\rightarrow\widetilde{\mathcal{K}}$ and $\mathcal{K}\neq\widetilde{\mathcal{K}}$. The {\em probability
to move from $\mathcal{K}$ to $\widetilde{\mathcal{K}}$ by executing any set of activities after possible self-loops}
is
$$PM^*(\mathcal{K},\widetilde{\mathcal{K}})=\left\{
\begin{array}{ll}
PM(\mathcal{K},\widetilde{\mathcal{K}})\sum_{k=0}^{\infty}PM(\mathcal{K},\mathcal{K})^k=
\frac{PM(\mathcal{K},\widetilde{\mathcal{K}})}{1-PM(\mathcal{K},\mathcal{K})}, &
\mbox{if }\mathcal{K}\rightarrow\mathcal{K};\\
PM(\mathcal{K},\widetilde{\mathcal{K}}), & \mbox{otherwise}.
\end{array}
\right.$$
The value $k=0$ in the summation above corresponds to the case with no self-loops. Note that $\forall \mathcal{K}\in
DR_{\rm T}(G)/_{\mathcal{R}_{\rm ss}(G)},\ PM^*(\mathcal{K},\widetilde{\mathcal{K}})= {\it SJ}_{\bis_{\rm
ss}}(\mathcal{K})PM(\mathcal{K},\widetilde{\mathcal{K}})$, since we always have the empty loop (self-loop)
$\mathcal{K}\stackrel{\emptyset}{\rightarrow}\mathcal{K}$ from every equivalence class of tangible states
$\mathcal{K}$. Empty loops are not possible from equivalence classes of vanishing states, hence, $\forall
\mathcal{K}\in DR_{\rm V}(G)/_{\mathcal{R}_{\rm ss}(G)},\ PM^*(\mathcal{K},\widetilde{\mathcal{K}})=
\frac{PM(\mathcal{K},\widetilde{\mathcal{K}})}{1-PM(\mathcal{K},\mathcal{K})}$, when there are non-empty self-loops
(produced by iteration) from $\mathcal{K}$, or $PM^*(\mathcal{K},\widetilde{\mathcal{K}})=
PM(\mathcal{K},\widetilde{\mathcal{K}})$, when there are no self-loops from $\mathcal{K}$.

\begin{definition}
Let $G$ be a dynamic expression. The {\em quotient (by $\bis_{\rm ss}$) EDTMC} of $G$, denoted by\\
${\it EDTMC}_{\bis_{\rm ss}}(G)$, has the state space $DR(G)/_{\mathcal{R}_{\rm ss}(G)}$, the initial state
$[[G]_\approx]_{\mathcal{R}_{\rm ss}(G)}$ and the transitions\\
$\mathcal{K}\doublera_\mathcal{P}\widetilde{\mathcal{K}}$ if $\mathcal{K}\rightarrow\widetilde{\mathcal{K}}$ and
$\mathcal{K}\neq\widetilde{\mathcal{K}}$, where $\mathcal{P}=PM^*(\mathcal{K},\widetilde{\mathcal{K}})$. The {\em
quotient (by $\bis_{\rm ss}$) underlying SMC} of $G$, denoted by ${\it SMC}_{\bis_{\rm ss}}(G)$, has the EDTMC ${\it
EDTMC}_{\bis_{\rm ss}}(G)$ and the sojourn time in every $\mathcal{K}\in DR_{\rm T}(G)/_{\mathcal{R}_{\rm ss}(G)}$ is
geometrically distributed with the parameter $1-PM(\mathcal{K},\mathcal{K})$ while that in every $\mathcal{K}\in
DR_{\rm V}(G)/_{\mathcal{R}_{\rm ss}(G)}$ is zero.
\end{definition}

The steady-state PMFs $\psi_{\bis_{\rm ss}}^*$ for ${\it EDTMC}_{\bis_{\rm ss}}(G)$ and $\varphi_{\bis_{\rm
ss}}$ for ${\it SMC}_{\bis_{\rm ss}}(G)$ are defined like the corresponding notions $\psi^*$ for ${\it EDTMC}(G)$
and $\varphi$ for ${\it SMC}(G)$.

\begin{example}
Let $F$ be from Example \ref{qts.exm}. In Figure \ref{exprqtssdtmc.fig}, the quotient underlying SMC ${\it
SMC}_{\bis_{\rm ss}}(\overline{F})$ is presented.
\label{exprqsmc.exm}
\end{example}

The quotients of both transition systems and underlying SMCs are their minimal reductions modulo step stochastic
bisimulations. The quotients simplify analysis of system properties, preserved by $\bis_{\rm ss}$, since less states
should be examined for it. Such reduction method resembles that from \cite{AS92}, based on place bisimulation
equivalence for PNs, but the former method merges states, while the latter one merges places.

Moreover, there exist
algorithms
to construct the quotients of transition systems by an equivalence (like bisimulation one) \cite{PT87} and those of
(discrete or continuous time) Markov chains by ordinary lumping \cite{DHS03}.
These algorithms have time complexity $O(m\log n)$ and space complexity $O(m+n)$, where $n$ is the number of states and
$m$ is the number of transitions. As mentioned in \cite{WDH10}, the algorithm from \cite{DHS03} can be easily adjusted
to produce quotients of labeled probabilistic transition systems by the probabilistic bisimulation equivalence. In
\cite{WDH10}, the symbolic partition refinement algorithm on the state space of CTMCs was proposed. The algorithm can
be applied to DTMCs and labeled probabilistic transition systems. Such a symbolic lumping is memory efficient due to
compact representation of the state space partition. It is time efficient, since fast algorithm of the partition
representation and refinement is applied. In \cite{EHSTZ13}, a polynomial-time algorithm for minimizing behaviour of
probabilistic automata by probabilistic bisimulation equivalence was outlined that results in the canonical quotient
structures. One could adapt the above algorithms for our framework.

Let us define quotient (by $\bis_{\rm ss}$) DTMCs of expressions based on probabilities
$PM(\mathcal{K},\widetilde{\mathcal{K}})$.

\begin{definition}
Let $G$ be a dynamic expression. The {\em quotient (by $\bis_{\rm ss}$) DTMC} of $G$, denoted by\\
${\it DTMC}_{\bis_{\rm ss}}(G)$, has the state space $DR(G)/_{\mathcal{R}_{\rm ss}(G)}$, the initial state
$[[G]_\approx]_{\mathcal{R}_{\rm ss}(G)}$ and the transitions
$\mathcal{K}\rightarrow_\mathcal{P}\widetilde{\mathcal{K}}$, where
$\mathcal{P}=PM(\mathcal{K},\widetilde{\mathcal{K}})$.
\end{definition}

\noindent The steady-state PMF $\psi_{\bis_{\rm ss}}$ for ${\it DTMC}_{\bis_{\rm ss}}(G)$ is defined like the
corresponding notion $\psi$ for ${\it DTMC}(G)$.

\begin{example}
Let $F$ be from Example \ref{qts.exm}. In Figure \ref{exprqtssdtmc.fig}, the quotient DTMC
${\it DTMC}_{\bis_{\rm ss}}(\overline{F})$ is presented.
\label{exprqdtmc.exm}
\end{example}

\begin{figure}
\begin{center}
\includegraphics[scale=0.9]{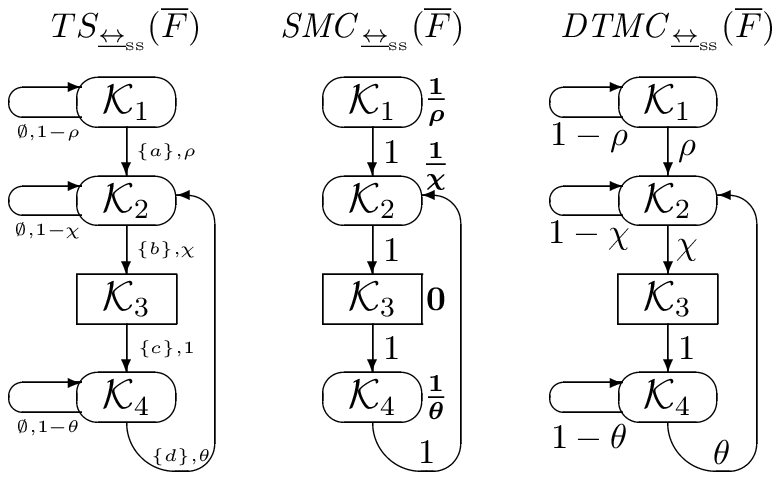}
\end{center}
\vspace{-6mm}
\caption{The quotient transition system, quotient underlying SMC and quotient DTMC of $\overline{F}$ for
$F=[(\{a\},\rho )*\protect\newline
((\{b\},\chi );(((\{c\},\natural_l);(\{d\},\theta ))\cho ((\{c\},\natural_m);(\{d\},\theta ))))*{\sf Stop}]$.}
\label{exprqtssdtmc.fig}
\end{figure}

Clearly, the relationships between the steady-state PMFs $\psi_{\bis_{\rm ss}}$ and $\psi_{\bis_{\rm ss}}^*$, as well
as $\varphi_{\bis_{\rm ss}}$ and $\psi_{\bis_{\rm ss}}$, are the same as those
between their
``non-quotient'' versions in Theorem \ref{pmfsim.the} and Proposition \ref{pmfsmc.pro}.

The detailed illustrative quotient example will be presented in Section \ref{gshmsysim.sec}.

In \cite{Buc94b}, it is proven that irreducibility is preserved by aggregation w.r.t. any partition (or equivalence
relation) on the states of finite DTMCs (so they are also positive recurrent). Aggregation decreases the number of
states, hence, the aggregated DTMCs are also finite and positive recurrence is preserved by every aggregation. It is
known \cite{Ros96,Kul09} that irreducible and positive recurrent DTMCs have a single stationary PMF. Note that the
original and/or aggregated DTMCs may be periodic, thus having a unique stationary distribution, but no steady-state
(limiting) one. For example, it may happen that the original DTMC is aperiodic while the aggregated DTMC is periodic
due to merging some states of the former. Thus, both finite irreducible DTMCs and their arbitrary aggregates have a
single stationary PMF. It is also shown in \cite{Buc94b} that for every DTMC aggregated by ordinary lumpability, the
stationary probability of each aggregate state is a sum of the stationary probabilities of all its constituent states
from the original DTMC. The information about individual stationary probabilities of the original DTMC is lost after
such a summation, but in many cases, the stationary probabilities of the aggregated DTMC are enough to calculate
performance measures of the high-level model, from which the original DTMC is extracted. As mentioned in \cite{Buc94b},
in some applications, the aggregated DTMC can be extracted directly from the high-level model. Thus, the aggregation
techniques based on lumping are of practical importance, since they allow one to reduce the state space of the modeled
systems, hence, the computational costs for evaluating their performance.

Let $G$ be a dynamic expression. By definition of $\bis_{\rm ss}$, the relation $\mathcal{R}_{\rm ss}(G)$ on $TS(G)$
induces ordinary lumping on ${\it SMC}(G)$, i.e. if the states of $TS(G)$ are related by $\mathcal{R}_{\rm ss}(G)$ then
the same states in ${\it SMC}(G)$ are related by ordinary lumping. The quotient (maximal aggregate) of ${\it SMC}(G)$
by such an induced ordinary lumping is ${\it SMC}_{\bis_{\rm ss}}(G)$. Since we consider only finite SMCs,
irreducibility of ${\it SMC}(G)$ will imply irreducibility of ${\it SMC}_{\bis_{\rm ss}}(G)$ and they are positive
recurrent. Then a unique quotient stationary PMF of ${\it SMC}_{\bis_{\rm ss}}(G)$ can be calculated from a unique
original stationary PMF of ${\it SMC}(G)$ by summing some elements of the latter, as described in \cite{Buc94b}.
Similar arguments demonstrate that the same holds for ${\it DTMC}(G)$ and ${\it DTMC}_{\bis_{\rm ss}}(G)$.

\section{Stationary behaviour}
\label{stationary.sec}

Let us examine how the proposed equivalences can be used to compare the behaviour of stochastic processes in their
steady states. We shall consider only formulas specifying stochastic processes with infinite behaviour, i.e.
expressions with the iteration operator. Note that the iteration operator does not guarantee infiniteness of behaviour,
since there can exist a deadlock (blocking) within the body (the second argument) of iteration when the corresponding
subprocess does not reach its final state by some reasons. In particular, if the body of iteration contains the ${\sf
Stop}$ expression then the iteration will be ``broken''. On the other hand, the iteration body can be left after a
finite number of
repeated executions and
perform the iteration termination.
To avoid executing activities after the iteration body, we take ${\sf Stop}$ as the termination argument of iteration.

Like in the framework of SMCs, in LDTSIPNs the most common systems for performance analysis are {\em ergodic}
(irreducible, positive recurrent and aperiodic) ones. For ergodic LDTSIPNs, the steady-state marking probabilities
exist and can be determined. In \cite{Mol81,Mol85}, the following sufficient (but not necessary) conditions for
ergodicity of DTSPNs are stated: {\em liveness} (for each transition and any reachable marking there exist a sequence
of markings from it leading to the marking enabling that transition), {\em boundedness} (for any reachable marking the
number of tokens in every place is not greater than some fixed number) and {\em nondeterminism} (the transition
probabilities are strictly less than $1$).

Consider dtsi-box of a dynamic expression $G=\overline{[E*F*{\sf Stop}]}$ specifying a process, which we assume has no
deadlocks while
performing $F$.
If, starting in $[[E*\overline{F}*{\sf Stop}]]_\approx$ and ending in $[[E*\underline{F}*{\sf Stop}]]_\approx$, only
tangible states are passed through, then the three ergodicity conditions are satisfied: the subnet corresponding to the
looping of the iteration body $F$ is live, safe ($1$-bounded) and nondeterministic (since all markings of the subnet
are tangible and non-terminal, the probabilities of transitions from them are strictly less than $1$). Hence, according
to \cite{Mol81,Mol85}, for the dtsi-box, its underlying SMC, restricted to the markings of the mentioned subnet, is
ergodic. The isomorphism between SMCs of expressions and those of the corresponding dtsi-boxes, which is stated by
Proposition \ref{smcs.pro}, guarantees that $SMC(G)$ is ergodic, if restricted to the states between
$[[E*\overline{F}*{\sf Stop}]]_\approx$ and $[[E*\underline{F}*{\sf Stop}]]_\approx$.

The ergodicity conditions above are not necessary, i.e. there exist dynamic expressions with vanishing states traversed
while executing their iteration bodies, such that the properly restricted underlying SMCs are nevertheless ergodic, as
Example \ref{exprsmc.exm} demonstrated. However, it has been shown in \cite{BKr02} that even live, safe and
nondeterministic DTSPNs (as well as live and safe CTSPNs and GSPNs) may be non-ergodic.

In this section, we consider only the process expressions such that their underlying SMCs contain exactly one closed
communication class of states, and this class should also be ergodic to ensure uniqueness of the stationary
distribution, which is also the limiting one. The states not belonging to that class do not disturb the uniqueness,
since the closed communication class is single, hence, they all are transient. Then, for each transient state, the
steady-state probability to be in it is zero while the steady-state probability to enter into the ergodic class
starting from that state is equal to one.
A communication class of states is their equivalence class w.r.t. communication relation, i.e. a maximal subset of
communicating states. A communication class of states is closed if only the states belonging to it are accessible from
every its state.

\subsection{Steady state, residence time and equivalences}

The following proposition demonstrates that, for two dynamic expressions related by $\bis_{\rm ss}$, the steady-state
probabilities to enter into an equivalence class coincide, or the mean recurrence time for an equivalence class is the
same for both expressions.

\begin{proposition}
Let $G,G'$ be dynamic expressions with $\mathcal{R}:G\bis_{\rm ss}G',\ \varphi$ be the steady-state PMF for ${\it
SMC}(G)$ and $\varphi '$ be the steady-state PMF for ${\it SMC}(G')$. Then $\forall\mathcal{H}\in (DR(G)\cup
DR(G'))/_\mathcal{R}$,
$$\sum_{s\in\mathcal{H}\cap DR(G)}\varphi (s)=\sum_{s'\in\mathcal{H}\cap DR(G')}\varphi '(s').$$
\label{statprob.pro}
\end{proposition}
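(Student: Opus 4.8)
The plan is to reduce the statement to a property of the underlying DTMCs and their aggregation by ordinary lumping, using the machinery already developed in the paper: Proposition~\ref{pmfsmc.pro} (which expresses $\varphi$ through $\psi$ by normalizing over tangible states), Theorem~\ref{pmfsim.the} (which relates $\psi$ to $\psi^*$ through the self-loops abstraction vector), and Proposition~\ref{bissplit.pro} (which guarantees that the bisimulation $\mathcal{R}$ never mixes tangible with vanishing states). The key observation is that $\mathcal{R}$ induces ordinary lumpability on the ``composite'' DTMC obtained from the disjoint union $DR(G)\uplus DR(G')$, and that the quotient DTMC is the same whether we start from $G$ or from $G'$ (this is essentially condition~2 in the definition of $\bis_{\rm ss}$, restricted to $A$ ranging over all multiaction parts and then summed). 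So the aggregated stationary PMFs must agree on the composite equivalence classes.

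First I would set up the composite transition system $TS(G)\uplus TS(G')$ (well-defined since $DR(G)\cap DR(G')=\emptyset$ for $G\neq G'$; the case $G=G'$ is trivial) and note that $\mathcal{R}$ partitions its state space into classes $\mathcal{H}\in(DR(G)\cup DR(G'))/_\mathcal{R}$. By Proposition~\ref{bissplit.pro} each $\mathcal{H}$ consists entirely of tangible states or entirely of vanishing ones, so the notions $PM(\mathcal{H},\widetilde{\mathcal{H}})$, ${\it SJ}_\mathcal{R}(\mathcal{H})$, ${\it SL}_\mathcal{R}(\mathcal{H})$ from Section~\ref{reduction.sec} are all well-defined on the composite classes, and for $s\in\mathcal{H}\cap DR(G)$ we have $PM(s,\widetilde{\mathcal{H}})=PM(\mathcal{H},\widetilde{\mathcal{H}})$, the common value being independent of which expression $s$ comes from (this is exactly what clause~2 of $\bis_{\rm ss}$ gives after summing over $A\in\naturals_{\rm fin}^\mathcal{L}$). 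Hence the DTMC $\mathit{DTMC}(G)$ is lumpable w.r.t.\ the restriction of $\mathcal{R}$ to $DR(G)$, and its quotient coincides, as a labeled Markov chain, with the quotient of $\mathit{DTMC}(G')$; I would call this common chain $\mathit{DTMC}_\mathcal{R}$. By the results of \cite{Buc94b} cited at the end of Section~\ref{reduction.sec}, $\mathit{DTMC}_\mathcal{R}$ has a unique stationary PMF $\psi_\mathcal{R}$ and, for the aggregation of $\mathit{DTMC}(G)$, one has $\psi_\mathcal{R}(\mathcal{H})=\sum_{s\in\mathcal{H}\cap DR(G)}\psi(s)$, and likewise $\psi_\mathcal{R}(\mathcal{H})=\sum_{s'\in\mathcal{H}\cap DR(G')}\psi'(s')$. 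Since the quotient chain is literally the same object in both cases, these two sums are equal for every $\mathcal{H}$.

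Then I would pass from $\psi$ to $\varphi$ using Proposition~\ref{pmfsmc.pro}: $\varphi(s)=\psi(s)/\sum_{\tilde s\in DR_{\rm T}(G)}\psi(\tilde s)$ for tangible $s$ and $\varphi(s)=0$ for vanishing $s$, and similarly for $G'$. For a vanishing class $\mathcal{H}$ both sides of the claimed equality are $0$, so only tangible classes matter. For a tangible class $\mathcal{H}$,
\[
\sum_{s\in\mathcal{H}\cap DR(G)}\varphi(s)=\frac{\sum_{s\in\mathcal{H}\cap DR(G)}\psi(s)}{\sum_{\tilde s\in DR_{\rm T}(G)}\psi(\tilde s)}=\frac{\psi_\mathcal{R}(\mathcal{H})}{\sum_{\widetilde{\mathcal{H}}\in(DR(G)\cup DR(G'))/_\mathcal{R},\ \widetilde{\mathcal{H}}\ \mathrm{tangible}}\psi_\mathcal{R}(\widetilde{\mathcal{H}})},
\]
where in the denominator I used that $\sum_{\tilde s\in DR_{\rm T}(G)}\psi(\tilde s)=\sum_{\widetilde{\mathcal{H}}\ \mathrm{tangible}}\sum_{\tilde s\in\widetilde{\mathcal{H}}\cap DR(G)}\psi(\tilde s)=\sum_{\widetilde{\mathcal{H}}\ \mathrm{tangible}}\psi_\mathcal{R}(\widetilde{\mathcal{H}})$, again by lumpability and the fact that tangible classes partition $DR_{\rm T}(G)$. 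The right-hand side of this display depends only on $\mathcal{R}$ and $\psi_\mathcal{R}$, not on whether we started from $G$ or $G'$, so the identical computation for $G'$ yields the same value, which proves the proposition.

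**The main obstacle** I anticipate is the bookkeeping around the uniqueness and existence of the stationary distributions: the paper restricts (in Section~\ref{stationary.sec}) to expressions whose underlying SMCs have exactly one closed ergodic communication class, with all other states transient, so I must check that $\mathcal{R}_{\rm ss}$-aggregation preserves this structure (it does: aggregation preserves irreducibility and positive recurrence of the closed class by \cite{Buc94b}, and transient states aggregate to transient or get absorbed, carrying stationary mass $0$ on either side), and that the induced lumping on $\mathit{DTMC}(G)$ really is an \emph{ordinary} lumping rather than merely a weak one --- which is what licenses the ``sum of constituent stationary probabilities'' formula. The rest is the routine normalization algebra sketched above, which I would not grind through in detail.
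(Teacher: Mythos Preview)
Your proposal is correct and closely parallels the paper's own proof: both reduce the claim to the DTMC level via Proposition~\ref{pmfsmc.pro}, use Proposition~\ref{bissplit.pro} to handle vanishing classes trivially, and observe that clause~2 of $\bis_{\rm ss}$ (summed over all $A$) makes $PM(\mathcal{H},\widetilde{\mathcal{H}})$ well-defined at the level of equivalence classes, independently of whether one works in $DR(G)$ or $DR(G')$.

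The one substantive difference is how the DTMC equality $\sum_{s\in\mathcal{H}\cap DR(G)}\psi(s)=\sum_{s'\in\mathcal{H}\cap DR(G')}\psi'(s')$ is established. You argue abstractly: both $\mathit{DTMC}(G)$ and $\mathit{DTMC}(G')$ are ordinarily lumpable with respect to the restriction of $\mathcal{R}$, their quotients coincide as Markov chains, and then the cited lumping result from \cite{Buc94b} gives the sum formula, with uniqueness of the quotient stationary PMF forcing the two sums to agree. The paper instead proves this inline by a direct induction on the transient PMFs $\psi[k]$ and $\psi'[k]$, showing $\sum_{s\in\mathcal{H}\cap DR(G)}\psi[k](s)=\sum_{s'\in\mathcal{H}\cap DR(G')}\psi'[k](s')$ for all $k$ and then passing to the limit. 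Your route is more conceptual and reuses existing theory; the paper's is self-contained and avoids any worry about whether the lumping citation covers exactly the situation at hand (in particular, the presence of transient states and the passage from a unique stationary distribution to a limiting one). Both approaches need the standing assumption of Section~\ref{stationary.sec} on the communication-class structure, which you correctly identify as the point requiring care.
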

\begin{proof}
See Appendix \ref{statprob.ssc}.
\end{proof}

Let $G$ be a dynamic expression, $\varphi$ be the steady-state PMF for ${\it SMC}(G)$ and $\varphi_{\bis_{\rm ss}}$ be
the steady-state PMF for ${\it SMC}_{\bis_{\rm ss}}(G)$. By Proposition \ref{statprob.pro}, we have
$\forall\mathcal{K}\in DR(G)/_{\mathcal{R}_{\rm ss}(G)},\ \varphi_{\bis_{\rm ss}}(\mathcal{K})=
\sum_{s\in\mathcal{K}}\varphi (s)$. Hence, using ${\it SMC}_{\bis_{\rm ss}}(G)$ instead of ${\it SMC}(G)$ simplifies
the analytical solution, since we have less states, but constructing the TPM for ${\it EDTMC}_{\bis_{\rm ss}}(G)$,
denoted by ${\bf P}_{\bis_{\rm ss}}^*$, also requires some efforts, including determining $\mathcal{R}_{\rm ss}(G)$ and
calculating the probabilities to move from one equivalence class to other. The behaviour of ${\it EDTMC}_{\bis_{\rm
ss}}(G)$ stabilizes quicker than that of ${\it EDTMC}(G)$ (if each of them has a single steady state), since ${\bf
P}_{\bis_{\rm ss}}^*$ is denser matrix than ${\bf P}^*$ (the TPM for ${\it EDTMC}(G)$) due to the fact that the former
matrix is smaller and the transitions between the equivalence classes ``include'' all the transitions between the
states belonging to these equivalence classes.

By Proposition \ref{statprob.pro}, $\bis_{\rm ss}$ preserves the quantitative properties of the stationary behaviour
(the level of SMCs). We now demonstrate that the qualitative properties based on the multiaction labels are preserved
as well (the transition systems level).

\begin{definition}
A {\em derived step trace} of a dynamic expression $G$ is a chain $\Sigma =A_1\cdots A_n\in (\naturals_{\rm
fin}^\mathcal{L})^*$, where $\exists s\in DR(G),\
s\stackrel{\Upsilon_1}{\rightarrow}s_1\stackrel{\Upsilon_2}{\rightarrow}\cdots \stackrel{\Upsilon_n}{\rightarrow}s_n,\
\mathcal{L}(\Upsilon_i)=A_i\ (1\leq i\leq n)$. Then the {\em probability to execute the derived step trace $\Sigma$ in
$s$} is
$$PT(\Sigma ,s)=\sum_{\{\Upsilon_1,\ldots ,\Upsilon_n\mid s=s_0\stackrel{\Upsilon_1}{\rightarrow}s_1
\stackrel{\Upsilon_2}{\rightarrow}\cdots\stackrel{\Upsilon_n}{\rightarrow}s_n,\ \mathcal{L}(\Upsilon_i)=A_i\ (1\leq
i\leq n)\}}\prod_{i=1}^{n}PT(\Upsilon_i,s_{i-1}).$$
\end{definition}

The following theorem demonstrates that, for two dynamic expressions related by $\bis_{\rm ss}$, the steady-state
probabilities to enter into an equivalence class and start a derived step trace from it coincide.

\begin{theorem}
Let $G,G'$ be dynamic expressions with $\mathcal{R}:G\bis_{\rm ss}G',\ \varphi$ be the steady-state PMF for ${\it
SMC}(G),\ \varphi '$ be the steady-state PMF for ${\it SMC}(G')$ and $\Sigma$ be a derived step trace of $G$ and $G'$.
Then $\forall\mathcal{H}\in (DR(G)\cup DR(G'))/_\mathcal{R}$,
$$\sum_{s\in\mathcal{H}\cap DR(G)}\varphi (s)PT(\Sigma ,s)=\sum_{s'\in\mathcal{H}\cap DR(G')}\varphi '(s')PT(\Sigma
,s').$$
\label{stattrace.the}
\end{theorem}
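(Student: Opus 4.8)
The plan is to reduce the statement to Proposition~\ref{statprob.pro} by an appropriate refinement of the equivalence relation. The key observation is that, although $\mathcal{R}$ itself partitions states into classes $\mathcal{H}$, it does not directly "see" the probability $PT(\Sigma,s)$ of executing a fixed derived step trace $\Sigma = A_1\cdots A_n$ from $s$; however, I claim that $PT(\Sigma,s)$ is constant on the intersection of $\mathcal{H}$ with $DR(G)$ (and likewise on $\mathcal{H}\cap DR(G')$). More precisely, I would first prove by induction on the length $n$ of $\Sigma$ that whenever $(s_1,s_2)\in\mathcal{R}$, we have $PT(\Sigma,s_1)=PT(\Sigma,s_2)$. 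The base case $n=0$ is trivial (empty trace, probability $1$). For the inductive step, write $\Sigma = A_1\Sigma'$ with $\Sigma'=A_2\cdots A_n$; then $PT(\Sigma,s) = \sum_{\widetilde{\mathcal{H}}\in(DR(G)\cup DR(G'))/_\mathcal{R}} PM_{A_1}(s,\widetilde{\mathcal{H}})\cdot\overline{PT}(\Sigma',\widetilde{\mathcal{H}})$, where $\overline{PT}(\Sigma',\widetilde{\mathcal{H}})$ is the common value of $PT(\Sigma',\tilde s)$ for $\tilde s\in\widetilde{\mathcal{H}}$ (well-defined by the induction hypothesis, since $\mathcal{R}$ is an equivalence whose classes are exactly the $\widetilde{\mathcal{H}}$). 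Here one must be slightly careful: $PT(\Sigma,s)$ groups the intermediate target states by $\mathcal{R}$-class, which is legitimate precisely because the continuation probability $PT(\Sigma',\cdot)$ depends only on the class; and by the transfer property of $\mathcal{R}:G\bis_{\rm ss}G'$, the factor $PM_{A_1}(s_1,\widetilde{\mathcal{H}})$ equals $PM_{A_1}(s_2,\widetilde{\mathcal{H}})$ for each class $\widetilde{\mathcal{H}}$. Hence the whole sum agrees for $s_1$ and $s_2$.

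Given this lemma, denote by $\mathit{pt}_\mathcal{H}(\Sigma)$ the common value of $PT(\Sigma,s)$ over $s\in\mathcal{H}$ (valid for all states in $\mathcal{H}$, whether in $DR(G)$ or $DR(G')$, by the lemma applied to $\mathcal{R}$ which relates states across the two expressions). Then
$$\sum_{s\in\mathcal{H}\cap DR(G)}\varphi(s)PT(\Sigma,s) = \mathit{pt}_\mathcal{H}(\Sigma)\sum_{s\in\mathcal{H}\cap DR(G)}\varphi(s),$$
and symmetrically
$$\sum_{s'\in\mathcal{H}\cap DR(G')}\varphi'(s')PT(\Sigma,s') = \mathit{pt}_\mathcal{H}(\Sigma)\sum_{s'\in\mathcal{H}\cap DR(G')}\varphi'(s').$$
By Proposition~\ref{statprob.pro}, the two cumulative steady-state weights $\sum_{s\in\mathcal{H}\cap DR(G)}\varphi(s)$ and $\sum_{s'\in\mathcal{H}\cap DR(G')}\varphi'(s')$ coincide, so multiplying both by the same factor $\mathit{pt}_\mathcal{H}(\Sigma)$ yields the desired equality. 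One should also remark that $\Sigma$ being a derived step trace of both $G$ and $G'$ is consistent — if it is executable from some state of $DR(G)$ in class $\mathcal{H}$, then by the transfer property it is executable from the $\mathcal{R}$-related states of $DR(G')$ as well, so $\mathit{pt}_\mathcal{H}(\Sigma)$ is genuinely the same number on both sides (possibly $0$ for classes from which $\Sigma$ cannot start).

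The main obstacle I anticipate is making the inductive step of the lemma fully rigorous, in particular justifying the regrouping of $PT(\Sigma,s)$ into a sum over $\mathcal{R}$-classes of the one-step move probability times a class-indexed continuation probability. This requires knowing that (i) the $\mathcal{R}$-classes partition $DR(G)\cup DR(G')$ (true since $\mathcal{R}$ is an equivalence), (ii) the definition of $PM_A(s,\mathcal{H})$ sums $PT(\Upsilon,s)$ over exactly those one-step moves landing in $\mathcal{H}$ with $\mathcal{L}(\Upsilon)=A$, and (iii) the continuation probabilities $PT(\Sigma',\tilde s)$ agree across a class by the induction hypothesis, so they factor out. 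A secondary subtlety is handling the interplay of tangible and vanishing states: by Proposition~\ref{bissplit.pro}, each class $\mathcal{H}$ consists entirely of tangible or entirely of vanishing states, and vanishing classes contribute $\varphi$-weight $0$ on both sides, so they cause no difficulty. Everything else is a bookkeeping exercise, and I would relegate the detailed computation to an appendix, mirroring the treatment of Proposition~\ref{statprob.pro}.
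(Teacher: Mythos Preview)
Your proposal is correct and follows essentially the same approach as the paper: both establish that $PT(\Sigma,\cdot)$ is constant on each $\mathcal{R}$-class, then factor this constant out and invoke Proposition~\ref{statprob.pro}. The only cosmetic difference is that the paper's induction peels off the \emph{last} step (proving a product formula $\sum_{\text{paths through }\mathcal{H}_1,\ldots,\mathcal{H}_n}\prod PT(\Upsilon_i,s_{i-1})=\prod PM_{A_i}(\mathcal{H}_{i-1},\mathcal{H}_i)$ and then summing over all class sequences), whereas you peel off the \emph{first} step and appeal directly to the induction hypothesis on the continuation trace; the substance is the same.
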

\begin{proof}
See Appendix \ref{stattrace.ssc}.
\end{proof}

Let $G$ be a dynamic expression, $\varphi$ be the steady-state PMF for ${\it SMC}(G),\ \varphi_{\bis_{\rm ss}}$ be the
steady-state PMF for ${\it SMC}_{\bis_{\rm ss}}(G)$ and $\Sigma$ be a derived step trace of $G$. By Theorem
\ref{stattrace.the}, we have $\forall\mathcal{K}\in DR(G)/_{\mathcal{R}_{\rm ss}(G)},\\
\varphi_{\bis_{\rm ss}}(\mathcal{K})PT(\Sigma ,\mathcal{K})=\sum_{s\in\mathcal{K}}\varphi (s)PT(\Sigma ,s)$, where
$\forall s\in\mathcal{K},\ PT(\Sigma ,\mathcal{K})=PT(\Sigma ,s)$.

We now present a result not concerning the steady-state probabilities, but revealing important properties of residence
time in the equivalence classes. The
next proposition demonstrates that, for two dynamic expressions related by $\bis_{\rm ss}$, the sojourn time averages
(and variances) in an equivalence class coincide.

\begin{proposition}
Let $G,G'$ be dynamic expressions with $\mathcal{R}\!:\!G\bis_{\rm ss}G'$. Then $\forall\mathcal{H}\!\in\!(DR(G)\cup
DR(G'))/_\mathcal{R}$,
$$\begin{array}{c}
{\it SJ}_{\mathcal{R}\cap (DR(G))^2}(\mathcal{H}\cap DR(G))={\it SJ}_{\mathcal{R}\cap (DR(G'))^2}(\mathcal{H}\cap
DR(G')),\\[1mm]
{\it VAR}_{\mathcal{R}\cap (DR(G))^2}(\mathcal{H}\cap DR(G))={\it VAR}_{\mathcal{R}\cap (DR(G'))^2}(\mathcal{H}\cap
DR(G')).
\end{array}$$
\label{sjavevar.pro}
\end{proposition}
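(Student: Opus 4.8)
The plan is to reduce Proposition~\ref{sjavevar.pro} to a statement about the self-loop probabilities $PM(\mathcal{H}\cap DR(G),\mathcal{H}\cap DR(G))$ and $PM(\mathcal{H}\cap DR(G'),\mathcal{H}\cap DR(G'))$, since both ${\it SJ}_{\mathcal{R}\cap(DR(G))^2}$ and ${\it VAR}_{\mathcal{R}\cap(DR(G))^2}$ are defined purely as functions of this quantity (namely $\frac{1}{1-x}$ and $\frac{x}{(1-x)^2}$, respectively, for tangible classes, and $0$ for vanishing ones). So it suffices to prove that for every $\mathcal{H}\in(DR(G)\cup DR(G'))/_\mathcal{R}$,
$$PM(\mathcal{H}\cap DR(G),\mathcal{H}\cap DR(G)) = PM(\mathcal{H}\cap DR(G'),\mathcal{H}\cap DR(G')),$$
where, for a class $\mathcal{K}$ of states that all lie in one transition system, $PM(\mathcal{K},\mathcal{K})$ is the common value $PM(s,\mathcal{K})$ for $s\in\mathcal{K}$ (well-defined by the autobisimulation arguments already given in Section~\ref{reduction.sec}, applied here to $\mathcal{R}$ restricted to each side). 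I would also first invoke Proposition~\ref{bissplit.pro} to note that each $\mathcal{H}$ consists entirely of tangible states or entirely of vanishing states, so the vanishing case is trivial (both sides are $0$), and I only need to handle the tangible case.

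For the tangible case, the key step is: take any $s_1\in\mathcal{H}\cap DR(G)$ and any $s_2\in\mathcal{H}\cap DR(G')$; then $(s_1,s_2)\in\mathcal{R}$ (or at least they lie in the same $\mathcal{R}$-class, which is all we need), so by clause~2 of the definition of step stochastic bisimulation applied with the specific equivalence class $\mathcal{H}$ itself and with $A=\emptyset$,
$$s_1\stackrel{\emptyset}{\rightarrow}_\mathcal{P}\mathcal{H}\ \Leftrightarrow\ s_2\stackrel{\emptyset}{\rightarrow}_\mathcal{P}\mathcal{H}.$$
Hence $PM_\emptyset(s_1,\mathcal{H}) = PM_\emptyset(s_2,\mathcal{H})$. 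Now I would observe that from a tangible state $s$, the only step with empty multiaction part leading back into any state is the empty-set step $s\stackrel{\emptyset}{\rightarrow}s$ (the empty loop rule {\bf El}), which keeps $s$ fixed; there are no other activities $\Upsilon$ with $\mathcal{L}(\Upsilon)=\emptyset$ except $\Upsilon=\emptyset$ itself, since a nonempty multiset of activities with empty multiaction part would consist of activities $(\emptyset,\kappa)$, and such executions, while they have empty multiaction part, still must be accounted for — so I need to be slightly more careful and instead use $PM(s,\mathcal{H})$ restricted appropriately. The cleaner route: note that the empty loop is the \emph{only} self-loop from a tangible state unless it arises from iteration, and in any case $PM(s,\mathcal{H}) = PM_\emptyset(s,\mathcal{H}) + \sum_{A\neq\emptyset}PM_A(s,\mathcal{H})$, and bisimulation gives equality of \emph{each} summand $PM_A(s_1,\mathcal{H})=PM_A(s_2,\mathcal{H})$ for every $A$ (with $\mathcal{H}$ as the target class). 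Summing over all $A$ yields $PM(s_1,\mathcal{H})=PM(s_2,\mathcal{H})$, which is exactly $PM(\mathcal{H}\cap DR(G),\mathcal{H}\cap DR(G))=PM(\mathcal{H}\cap DR(G'),\mathcal{H}\cap DR(G'))$. Plugging into the formulas for ${\it SJ}$ and ${\it VAR}$ finishes both equalities simultaneously.

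The main obstacle I anticipate is purely bookkeeping: making sure that the restricted relation $\mathcal{R}\cap(DR(G))^2$ really is a step stochastic autobisimulation on $G$ (so that the class-level quantities ${\it SJ}_{\mathcal{R}\cap(DR(G))^2}$ etc.\ are well-defined in the first place), and that $\mathcal{H}\cap DR(G)$ is indeed one of its equivalence classes — this is where $DR(G)\cap DR(G')=\emptyset$ is used, exactly as in the remark preceding the definition of step stochastic bisimulation. Once that is set up, the transfer of $PM(\cdot,\mathcal{H})$ across the bisimulation is immediate, and the rest is substitution into two rational expressions. I would therefore structure the proof as: (i) reduce to equality of self-loop probabilities via the defining formulas; (ii) dispose of vanishing classes by Proposition~\ref{bissplit.pro}; (iii) for tangible classes, apply the transfer property of $\mathcal{R}$ with target class $\mathcal{H}$, summed over all multiaction parts $A$, to get $PM(s_1,\mathcal{H})=PM(s_2,\mathcal{H})$; (iv) conclude.
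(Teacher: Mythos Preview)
Your proposal is correct and follows essentially the same route as the paper: invoke Proposition~\ref{bissplit.pro} to split into tangible and vanishing classes, dispose of the vanishing case trivially, and for the tangible case sum the bisimulation transfer equalities $PM_A(s_1,\mathcal{H})=PM_A(s_2,\mathcal{H})$ over all $A$ to obtain $PM(\mathcal{H}\cap DR(G),\mathcal{H}\cap DR(G))=PM(\mathcal{H},\mathcal{H})=PM(\mathcal{H}\cap DR(G'),\mathcal{H}\cap DR(G'))$, then substitute into the defining formulas for ${\it SJ}$ and ${\it VAR}$. Your brief detour through $A=\emptyset$ alone is indeed unnecessary (as you noticed), and the ``cleaner route'' you settle on is exactly the paper's argument.
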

\begin{proof}
See Appendix \ref{sjavevar.ssc}.
\end{proof}

\begin{example}
Let $E=[(\{a\},\frac{1}{2})*((\{b\},\frac{1}{2});((\{c\},\frac{1}{3})_1\cho (\{c\},\frac{1}{3})_2))*{\sf Stop}],\\
E'=[(\{a\},\frac{1}{2})*(((\{b\},\frac{1}{3})_1;(\{c\},\frac{1}{2})_1)\cho
((\{b\},\frac{1}{3})_2;(\{c\},\frac{1}{2})_2))*{\sf Stop}]$. It holds that $\overline{E}\bis_{\rm ss}\overline{E'}$.\\
$DR(\overline{E})$ consists of the equivalence classes\\
$\begin{array}{c}
s_1=[[\overline{(\{a\},\frac{1}{2})}*((\{b\},\frac{1}{2});((\{c\},\frac{1}{3})_1\cho (\{c\},\frac{1}{3})_2))*{\sf
Stop}]]_\approx ,\\[1mm]
s_2=[[(\{a\},\frac{1}{2})*(\overline{(\{b\},\frac{1}{2})};((\{c\},\frac{1}{3})_1\cho (\{c\},\frac{1}{3})_2))*{\sf
Stop}]]_\approx ,\\[1mm]
s_3=[[(\{a\},\frac{1}{2})*((\{b\},\frac{1}{2});\overline{((\{c\},\frac{1}{3})_1\cho (\{c\},\frac{1}{3})_2)})*{\sf
Stop}]]_\approx .
\end{array}$\\
$DR(\overline{E'})$ consists of the equivalence classes\\
$\begin{array}{c}
s_1'=[[\overline{(\{a\},\frac{1}{2})}*(((\{b\},\frac{1}{3})_1;(\{c\},\frac{1}{2})_1)\cho
((\{b\},\frac{1}{3})_2;(\{c\},\frac{1}{2})_2))*{\sf Stop}]]_\approx ,\\[1mm]
s_2'=[[(\{a\},\frac{1}{2})*\overline{(((\{b\},\frac{1}{3})_1;(\{c\},\frac{1}{2})_1)\cho
((\{b\},\frac{1}{3})_2;(\{c\},\frac{1}{2})_2))}*{\sf Stop}]]_\approx ,\\[1mm]
s_3'=[[(\{a\},\frac{1}{2})*(((\{b\},\frac{1}{3})_1;\overline{(\{c\},\frac{1}{2})_1})\cho
((\{b\},\frac{1}{3})_2;(\{c\},\frac{1}{2})_2))*{\sf Stop}]]_\approx ,\\[1mm]
s_4'=[[(\{a\},\frac{1}{2})*(((\{b\},\frac{1}{3})_1;(\{c\},\frac{1}{2})_1)\cho
((\{b\},\frac{1}{3})_2;\overline{(\{c\},\frac{1}{2})_2)})*{\sf Stop}]]_\approx .
\end{array}$\\
The steady-state PMFs $\varphi$ for ${\it SMC}(\overline{E})$ and $\varphi '$ for ${\it SMC}(\overline{E'})$ are
$\varphi=\left(0,\frac{1}{2},\frac{1}{2}\right),\ \varphi '=\left(0,\frac{1}{2},\frac{1}{4},\frac{1}{4}\right)$.

Let us consider the equivalence class (w.r.t. $\mathcal{R}_{\rm ss}(\overline{E},\overline{E'})$)
$\mathcal{H}=\{s_3,s_3',s_4'\}$. One can see that the steady-state probabilities for $\mathcal{H}$ coincide:
$\sum_{s\in\mathcal{H}\cap DR(\overline{E})}\varphi (s)=\varphi (s_3)=\frac{1}{2}=\frac{1}{4}+\frac{1}{4}=\varphi
'(s_3')+\varphi '(s_4')=\sum_{s'\in\mathcal{H}\cap DR(\overline{E'})}\varphi '(s')$. Let $\Sigma =\{\{c\}\}$. The
steady-state probabilities to enter into the equivalence class $\mathcal{H}$ and start the derived step trace $\Sigma$
from it coincide as well: $\varphi (s_3)(PT(\{(\{c\},\frac{1}{3})_1\},s_3)+PT(\{(\{c\},\frac{1}{3})_2\},s_3))=
\frac{1}{2}\left(\frac{1}{4}+\frac{1}{4}\right)=\frac{1}{4}=\frac{1}{4}\cdot\frac{1}{2}+\frac{1}{4}\cdot\frac{1}{2}=
\varphi '(s_3')PT(\{(\{c\},\frac{1}{2})_1\},s_3')+\\
\varphi '(s_4')PT(\{(\{c\},\frac{1}{2})_2\},s_4')$.

Further, the sojourn time averages in the equivalence class $\mathcal{H}$ coincide:\\
${\it SJ}_{\mathcal{R}_{\rm ss}(\overline{E},\overline{E'})\cap (DR(\overline{E}))^2}(\mathcal{H}\cap DR(G))=
{\it SJ}_{\mathcal{R}_{\rm ss}(\overline{E},\overline{E'})\cap (DR(\overline{E}))^2}(\{s_3\})=
\frac{1}{1-PM(\{s_3\},\{s_3\})}=\\
\frac{1}{1-PM(s_3,s_3)}=\frac{1}{1-\frac{1}{2}}=2=\frac{1}{1-\frac{1}{2}}=\frac{1}{1-PM(s_3',s_3')}=
\frac{1}{1-PM(s_4',s_4')}=\frac{1}{1-PM(\{s_3',s_4'\},\{s_3',s_4'\})}=\\
{\it SJ}_{\mathcal{R}_{\rm ss}(\overline{E},\overline{E'})\cap (DR(\overline{E'}))^2}(\{s_3',s_4'\})=
{\it SJ}_{\mathcal{R}_{\rm ss}(\overline{E},\overline{E'})\cap (DR(\overline{E'}))^2}(\mathcal{H}\cap DR(G'))$.

Finally, the sojourn time variances in the equivalence class $\mathcal{H}$ coincide:\\
${\it VAR}_{\mathcal{R}_{\rm ss}(\overline{E},\overline{E'})\cap (DR(\overline{E}))^2}(\mathcal{H}\cap DR(G))=
{\it VAR}_{\mathcal{R}_{\rm ss}(\overline{E},\overline{E'})\cap (DR(\overline{E}))^2}(\{s_3\})=
\frac{PM(\{s_3\},\{s_3\})}{(1-PM(\{s_3\},\{s_3\}))^2}=\\
\frac{PM(s_3,s_3)}{(1-PM(s_3,s_3))^2}=\frac{\frac{1}{2}}{\left(1-\frac{1}{2}\right)^2}=2=
\frac{\frac{1}{2}}{\left(1-\frac{1}{2}\right)^2}=\frac{PM(s_3',s_3')}{(1-PM(s_3',s_3'))^2}=
\frac{PM(s_4',s_4')}{(1-PM(s_4',s_4'))^2}=\\
\frac{PM(\{s_3',s_4'\},\{s_3',s_4'\})}{(1-PM(\{s_3',s_4'\},\{s_3',s_4'\}))^2}\!\!=\!\!
{\it VAR}_{\mathcal{R}_{\rm ss}(\overline{E},\overline{E'})\cap (DR(\overline{E'}))^2}(\{s_3',s_4'\})\!\!=\!\!
{\it VAR}_{\mathcal{R}_{\rm ss}(\overline{E},\overline{E'})\cap (DR(\overline{E'}))^2}(\mathcal{H}\cap DR(G'))$.

In Figure \ref{ssbssptimm.fig}, the marked dtsi-boxes corresponding to the dynamic expressions above are presented,
i.e. $N=Box_{\rm dtsi}(\overline{E})$ and $N'=Box_{\rm dtsi}(\overline{E'})$.
\label{ssbsspt.exm}
\end{example}

\begin{figure}
\begin{center}
\includegraphics[scale=0.9]{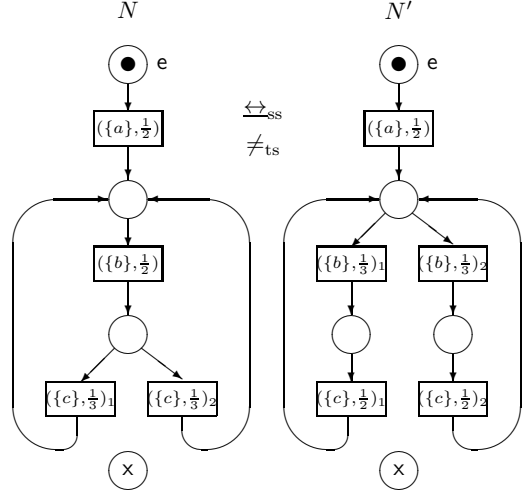}
\end{center}
\vspace{-6mm}
\caption{$\bis_{\rm ss}$ preserves steady-state behaviour and sojourn time properties in the equivalence classes.}
\label{ssbssptimm.fig}
\end{figure}

\subsection{Preservation of performance and simplification of its analysis}

Many performance indices are based on the steady-state probabilities to enter into a set of similar states or, after
coming in it, to start a derived step trace from this set. The similarity of states is usually captured by an
equivalence relation, hence, the sets are often the equivalence classes. Proposition \ref{statprob.pro}, Theorem
\ref{stattrace.the} and Proposition \ref{sjavevar.pro} guarantee coincidence of the mentioned indices for the
expressions related by $\bis_{\rm ss}$. Thus, $\bis_{\rm ss}$ (hence, all the stronger equivalences considered)
preserves performance of stochastic systems modeled by expressions of dtsiPBC.

It is also easier to evaluate performance using an SMC with less states, since in this case the size of the transition
probability matrix
is smaller, and we
solve systems of less equations to calculate the steady-state probabilities. The reasoning above validates the
following method of performance analysis simplification.
\begin{enumerate}

\item The investigated system is specified by a static expression of dtsiPBC.

\item The transition system of the expression is constructed.

\item After treating the transition system for self-similarity, a step stochastic autobisimulation equivalence for the
expression is determined.

\item The quotient underlying SMC is derived from the quotient transition system.

\item Stationary probabilities and performance indices are obtained using the SMC.

\end{enumerate}

The limitation of the method above is its applicability only to the expressions such that their underlying SMCs contain
exactly one closed communication class of states, and this class should also be ergodic to ensure uniqueness of the
stationary distribution. If an SMC contains several closed communication classes of states that are all ergodic then
several stationary distributions may exist, which depend on the initial PMF. There is an analytical method to determine
stationary probabilities for SMCs of this kind as well \cite{Kul09}. Note that the underlying SMC of every process
expression has only one initial PMF (that at the time moment $0$), hence, the stationary distribution will be unique in
this case too. The general steady-state probabilities are then calculated as the sum of the stationary probabilities of
all the ergodic subsets of states, weighted by the probabilities to enter into these subsets, starting from the initial
state and passing through some transient states. It is worth applying the method only to the systems with similar
subprocesses.

Before calculating stationary probabilities, we can further reduce the quotient underlying SMC, using the algorithm
from \cite{MBCDF95,Bal01,Bal07} that eliminates vanishing states from the corresponding EDTMC and thereby decreases the
size of its TPM. For SMCs reduction we can also apply an analogue of the deterministic barrier partitioning method from
\cite{GDBK11} for semi-Markov processes (SMPs), which allows one to perform quicker the first passage-time analysis.
Another option is the method of stochastic state classes \cite{HPRV12} for generalized SMPs (GSMPs) reduction that
simplifies the transient performance analysis.

\section{Generalized shared memory system}
\label{gshmsysim.sec}

Let us consider a model of two processors accessing a common shared memory described in \cite{MBCDF95,Bal01,Bal07} in
the continuous time setting on GSPNs. We shall analyze this shared memory system in the discrete time stochastic
setting of dtsiPBC, where concurrent execution of activities is possible, while no two transitions of a GSPN may fire
simultaneously (in parallel). Our model parameterizes that from \cite{TMV13}. The model behaves as follows. After
activation of the system (turning the computer on), two processors are active, and the common memory is available. Each
processor can request an access to the memory after which the instantaneous decision is made. When the decision is made
in favour of a processor, it starts acquisition of the memory and the other processor should wait until the former one
ends its memory operations, and the system returns to the state with both active processors and available common
memory. The diagram of the system is depicted in Figure \ref{shmdiagr.fig}.

\begin{figure}
\begin{center}
\includegraphics[scale=0.8]{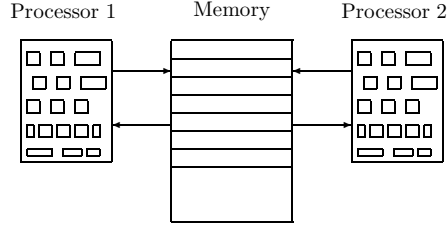}
\end{center}
\vspace{-6mm}
\caption{The diagram of the shared memory system.}
\label{shmdiagr.fig}
\end{figure}

\subsection{The concrete system}

The meaning of actions from the dtsiPBC expressions which will specify the system modules is as follows. The action $a$
corresponds to the system activation. The actions $r_i\ (1\leq i\leq 2)$ represent the common memory request of
processor $i$. The instantaneous actions $d_i$ correspond to the decision on the memory allocation in favour of the
processor $i$. The actions $m_i$ represent the common memory access of processor $i$. The other actions are used for
communication purposes only via synchronization, and we abstract from them later using restriction. For $a_1,\ldots
,a_n\in Act\ (n\in\naturals)$, we shall abbreviate $\sy a_1\cdots\sy a_n\rs a_1\cdots\rs a_n$ to $\sr (a_1,\ldots
,a_n)$.

We take general values for all multiaction probabilities and weights in the specification. Let all stochastic
multiactions have the same generalized probability $\rho\!\!\in\!\!(0;1)$ and all immediate ones have the same generalized
weight $l\!\!\in\!\!\reals_{>0}$. The resulting specification $K$ of the generalized shared memory system~is~below.

The static expression of the first processor is\\
$K_1=[(\{x_1\},\rho )*((\{r_1\},\rho );(\{d_1,y_1\},\natural_l);(\{m_1,z_1\},\rho ))*{\sf Stop}]$.

The static expression of the second processor is\\
$K_2=[(\{x_2\},\rho )*((\{r_2\},\rho );(\{d_2,y_2\},\natural_l);(\{m_2,z_2\},\rho ))*{\sf Stop}]$.

The static expression of the shared memory is\\
$K_3=[(\{a,\widehat{x_1},\widehat{x_2}\},\rho )*(((\{\widehat{y_1}\},\natural_l);(\{\widehat{z_1}\},\rho ))\cho
((\{\widehat{y_2}\},\natural_l);(\{\widehat{z_2}\},\rho )))*{\sf Stop}]$.

The static expression of the generalized shared memory system is\\
$K=(K_1\| K_2\| K_3)\sr (x_1,x_2,y_1,y_2,z_1,z_2)$.

As a result of the synchronization of immediate multiactions $(\{d_i,y_i\},\natural_l)$ and
$(\{\widehat{y_i}\},\natural_l)$ we get\\
$(\{d_i\},\natural_{2l})\ (1\leq i\leq 2)$. The synchronization of stochastic multiactions $(\{m_i,z_i\},\rho )$ and
$(\{\widehat{z_i}\},\rho )$ produces $(\{m_i\},\rho^2)\ (1\leq i\leq 2)$. The result of synchronization of
$(\{a,\widehat{x_1},\widehat{x_2}\},\rho )$ with $(\{x_1\},\rho )$ is $(\{a,\widehat{x_2}\},\rho^2)$, and that of
synchronization of $(\{a,\widehat{x_1},\widehat{x_2}\},\rho )$ with $(\{x_2\},\rho )$ is
$(\{a,\widehat{x_1}\},\rho^2)$. After applying synchronization to $(\{a,\widehat{x_2}\},\rho^2)$ and $(\{x_2\},\rho )$,
as well as to $(\{a,\widehat{x_1}\},\rho^2)$ and $(\{x_1\},\rho )$, we get the same activity $(\{a\},\rho^3)$.

We have $DR_{\rm T}(\overline{K})=\{\tilde{s}_1,\tilde{s}_2,\tilde{s}_5,\tilde{s}_5,\tilde{s}_8,\tilde{s}_9\}$ and
$DR_{\rm V}(\overline{K})=\{\tilde{s}_3,\tilde{s}_4,\tilde{s}_6\}$.

The interpretation of the states is: $\tilde{s}_1$ is the initial state, $\tilde{s}_2$: the system is activated and the
memory is not requested, $\tilde{s}_3$: the memory is requested by the first processor, $\tilde{s}_4$: the memory is
requested by the second processor, $\tilde{s}_5$: the memory is allocated to the first processor, $\tilde{s}_6$: the
memory is requested by two processors, $\tilde{s}_7$: the memory is allocated to the second processor, $\tilde{s}_8$:
the memory is allocated to the first processor and the memory is requested by the second processor, $\tilde{s}_9$: the
memory is allocated to the second processor and the memory is requested by the first processor.

In Figure \ref{shmimgtsrw.fig}, the transition system $TS(\overline{K})$ is presented. In Figure \ref{shmimgsmc.fig},
the underlying SMC ${\it SMC}(\overline{K})$ is depicted. Note that, in step semantics, we may execute the following
activities in parallel: $(\{r_1\},\rho ),\\
(\{r_2\},\rho )$, as well as $(\{r_1\},\rho ),(\{m_2\},\rho^2)$, and $(\{r_2\},\rho ),(\{m_1\},\rho^2)$. Therefore, the
state $\tilde{s}_6$ only exists in step semantics, since it is reachable exclusively by executing $(\{r_1\},\rho )$ and
$(\{r_2\},\rho )$ in parallel.

\begin{figure}
\begin{center}
\includegraphics[scale=0.9]{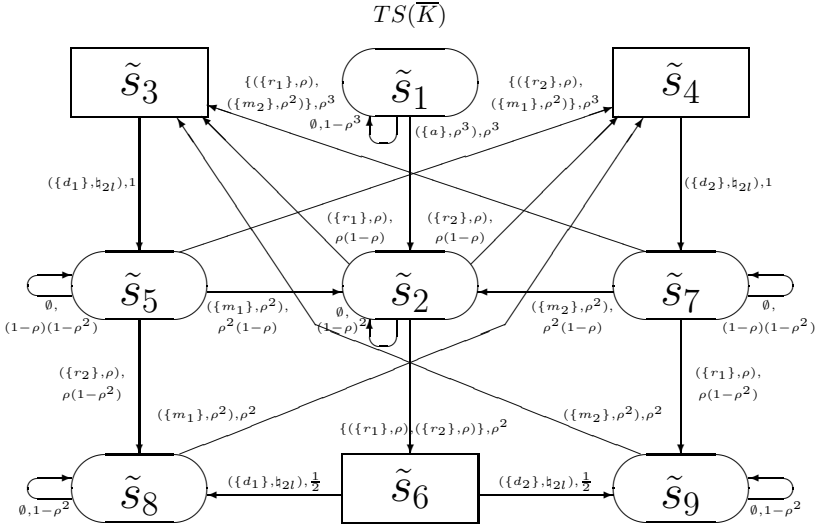}
\end{center}
\vspace{-6mm}
\caption{The transition system of the generalized shared memory system.}
\label{shmimgtsrw.fig}
\end{figure}

\begin{figure}
\begin{center}
\includegraphics[scale=0.9]{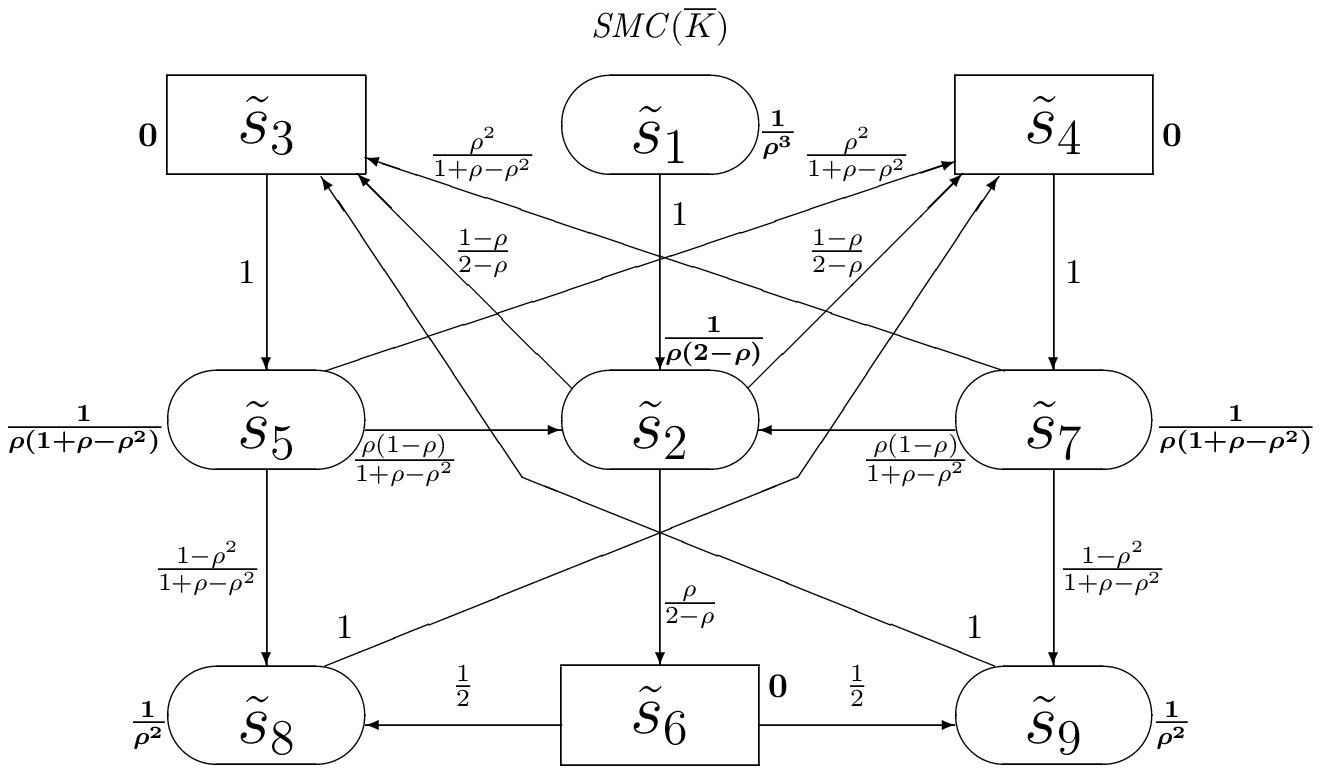}
\end{center}
\vspace{-6mm}
\caption{The underlying SMC of the generalized shared memory system.}
\label{shmimgsmc.fig}
\end{figure}

The average sojourn time vector of $\overline{K}$ is $\widetilde{\it SJ}=\left(\frac{1}{\rho^3},\frac{1}{\rho (2-\rho
)},0,0,\frac{1}{\rho (1+\rho -\rho^2)},0, \frac{1}{\rho (1+\rho -\rho^2)},\frac{1}{\rho^2},\frac{1}{\rho^2}\right)$.

The sojourn time variance vector of $\overline{K}$ is\\
$\begin{array}{c}
\widetilde{\it VAR}\!\!=\!\!\left(\!\frac{1-\rho^3}{\rho^6},\frac{(1-\rho )^2}{\rho^2(2-\rho )^2},0,0,\frac{(1-\rho
)^2(1+\rho )}{\rho^2(1+\rho -\rho^2)^2},0,\frac{(1-\rho )^2(1+\rho )}{\rho^2(1+\rho
-\rho^2)^2},\frac{1-\rho^2}{\rho^4},\frac{1-\rho^2}{\rho^4}\!\right).
\end{array}$
The TPM for ${\it EDTMC}(\overline{K})$~is\\
$\widetilde{\bf P}^*=\left(\begin{array}{ccccccccc}
0 & 1 & 0 & 0 & 0 & 0 & 0 & 0 & 0\\
0 & 0 & \frac{1-\rho}{2-\rho} & \frac{1-\rho}{2-\rho} & 0 & \frac{\rho}{2-\rho} & 0 & 0 & 0\\
0 & 0 & 0 & 0 & 1 & 0 & 0 & 0 & 0\\
0 & 0 & 0 & 0 & 0 & 0 & 1 & 0 & 0\\
0 & \frac{\rho (1-\rho )}{1+\rho -\rho^2} & 0 & \frac{\rho^2}{1+\rho -\rho^2} & 0 & 0 & 0 &
\frac{1-\rho^2}{1+\rho -\rho^2} & 0\\
0 & 0 & 0 & 0 & 0 & 0 & 0 & \frac{1}{2} & \frac{1}{2}\\
0 & \frac{\rho (1-\rho )}{1+\rho -\rho^2} & \frac{\rho^2}{1+\rho -\rho^2} & 0 & 0 & 0 & 0 & 0 &
\frac{1-\rho^2}{1+\rho -\rho^2}\\
0 & 0 & 0 & 1 & 0 & 0 & 0 & 0 & 0\\
0 & 0 & 1 & 0 & 0 & 0 & 0 & 0 & 0
\end{array}\right)$.

The steady-state PMF for ${\it EDTMC}(\overline{K})$ is $\tilde{\psi}^*=\frac{1}{2(6+3\rho -9\rho^2+2\rho^3)}
(0,2\rho (2-3\rho -\rho^2),2+\rho -3\rho^2+\rho^3,\\
2+\rho -3\rho^2+\rho^3,2+\rho -3\rho^2+\rho^3,2\rho^2(1-\rho ),2+\rho -3\rho^2+\rho^3,2-\rho -\rho^2,2-\rho -\rho^2)$.

The steady-state PMF $\tilde{\psi}^*$ weighted by $\widetilde{\it SJ}$ is\\
$\frac{1}{2\rho^2(6+3\rho -9\rho^2+2\rho^3)}(0,2\rho^2(1-\rho ),0,0,\rho (2-\rho ),0,\rho (2-\rho ),2-\rho -\rho^2,
2-\rho -\rho^2)$.

We normalize the steady-state weighted PMF, dividing it by the sum of its components\\
$\tilde{\psi}^*\widetilde{\it SJ}^{\rm T}=\frac{2+\rho -\rho^2-\rho^3}{\rho^2(6+3\rho -9\rho^2+2\rho^3)}$.
The steady-state PMF for ${\it SMC}(\overline{K})$ is\\
$\tilde{\varphi}=\frac{1}{2(2+\rho -\rho^2-\rho^3)}(0,2\rho^2(1-\rho ),0,0,\rho (2-\rho ),0,\rho (2-\rho ),
2-\rho -\rho^2,2-\rho -\rho^2)$.

We can now calculate the main performance indices.
\begin{itemize}

\item The average recurrence time in the state $\tilde{s}_2$, where no processor requests the memory, called the {\em
average system run-through}, is $\frac{1}{\tilde{\varphi}_2}=\frac{2+\rho -\rho^2-\rho^3}{\rho^2(1-\rho )}$.

\item The common memory is available only in the states $\tilde{s}_2,\tilde{s}_3,\tilde{s}_4,\tilde{s}_6$. The
steady-state probability that the memory is available is $\tilde{\varphi}_2+\tilde{\varphi}_3+\tilde{\varphi}_4+
\tilde{\varphi}_6=\frac{\rho^2(1-\rho )}{2+\rho -\rho^2-\rho^3}+0+0+0=\frac{\rho^2(1-\rho )}{2+\rho
-\rho^2-\rho^3}$. The steady-state probability that the memory is used (i.e. not available), called the {\em shared
memory utilization}, is $1-\frac{\rho^2(1-\rho )}{2+\rho -\rho^2-\rho^3}=\frac{2+\rho -2\rho^2}{2+\rho
-\rho^2-\rho^3}$.

\item After activation of the system, we leave the state $\tilde{s}_1$ for ever, and the common memory is either
requested or allocated in every remaining state, with exception of $\tilde{s}_2$.
The {\em rate with which the necessity of shared memory emerges} coincides with the rate of leaving $\tilde{s}_2$,
$\frac{\tilde{\varphi}_2}{\widetilde{\it SJ}_2}=\frac{\rho^2(1-\rho )}{2+\rho -\rho^2-\rho^3}\cdot\frac{\rho
(2-\rho )}{1}=\frac{\rho^3(1-\rho )(2-\rho )}{2+\rho -\rho^2-\rho^3}$.

\item The parallel common memory request of two processors $\{(\{r_1\},\rho ),(\{r_2\},\rho )\}$ is only possible
from the state $\tilde{s}_2$. In this state, the request probability is the sum of the execution probabilities for
all multisets of activities containing both $(\{r_1\},\rho )$ and $(\{r_2\},\rho )$.
The {\em steady-state probability of the shared memory request from two processors} is
$\tilde{\varphi}_2\sum_{\{\Upsilon\mid (\{(\{r_1\},\rho ),(\{r_2\},\rho )\}\subseteq\Upsilon\}}PT(\Upsilon
,\tilde{s}_2)=\frac{\rho^2(1-\rho )}{2+\rho -\rho^2-\rho^3}\rho^2=\frac{\rho^4(1-\rho )}{2+\rho -\rho^2-\rho^3}$.

\item The common memory request of the first processor $(\{r_1\},\rho )$ is only possible from the states
$\tilde{s}_2,\tilde{s}_7$. In each of the states, the request probability is the sum of the execution probabilities
for all sets of activities containing $(\{r_1\},\rho )$. The {\em steady-state probability of the shared memory
request from the first processor} is $\tilde{\varphi}_2\sum_{\{\Upsilon\mid (\{r_1\},\rho )\in\Upsilon\}}PT(\Upsilon
,\tilde{s}_2)+\tilde{\varphi}_7\sum_{\{\Upsilon\mid (\{r_1\},\rho )\in\Upsilon\}}PT(\Upsilon ,\tilde{s}_7)=\\
\frac{\rho^2(1-\rho )}{2+\rho -\rho^2-\rho^3}(\rho (1-\rho )+\rho^2)+
\frac{\rho (2-\rho )}{2(2+\rho -\rho^2-\rho^3)}(\rho (1-\rho^2)+\rho^3)=
\frac{\rho^2(2+\rho -2\rho^2)}{2(2+\rho -\rho^2-\rho^3)}$.

\end{itemize}

In Figure \ref{shmimgmdboxrw.fig}, the marked dtsi-boxes corresponding to the dynamic expressions of two processors,
shared memory and the generalized shared memory system are presented, i.e. $N_i=Box_{\rm dtsi}(\overline{K_i})\ (1\leq
i\leq 3)$ and $N=Box_{\rm dtsi}(\overline{K})$.

\begin{figure}
\begin{center}
\includegraphics[scale=0.9]{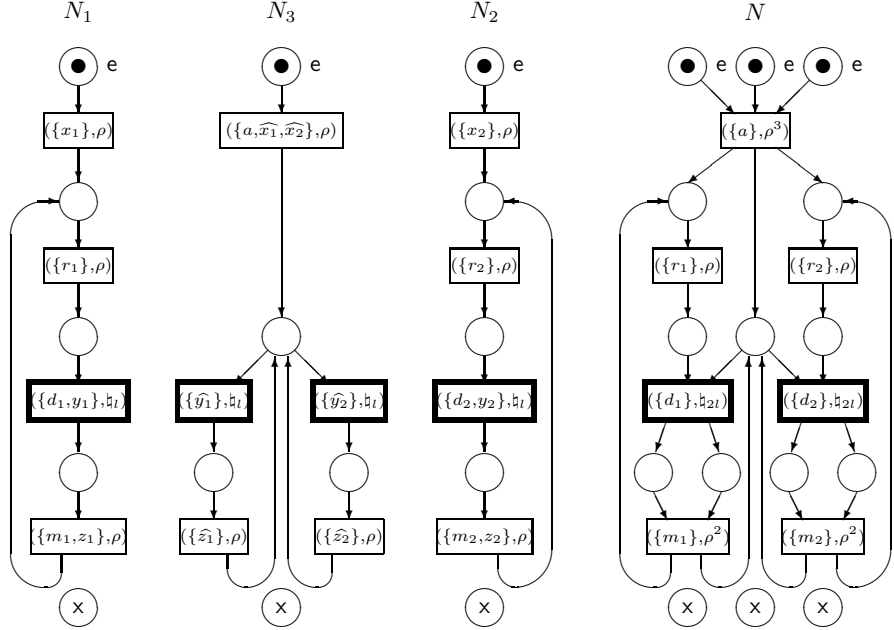}
\end{center}
\vspace{-6mm}
\caption{The marked dtsi-boxes of two processors, shared memory and the generalized shared memory system.}
\label{shmimgmdboxrw.fig}
\end{figure}

\subsection{The abstract system}

Consider a modification of the generalized shared memory system with abstraction from the identifiers of the
processors
that makes them indistinguishable, called the abstract generalized shared memory one. For the abstraction, we replace
the actions $r_i,d_i,m_i\ (1\leq i\leq 2)$ in the system specification by $r,d,m$.

The static expression of the first processor is\\
$L_1=[(\{x_1\},\rho )*((\{r\},\rho );(\{d,y_1\},\natural_l);(\{m,z_1\},\rho ))*{\sf Stop}]$.

The static expression of the second processor is\\
$L_2=[(\{x_2\},\rho )*((\{r\},\rho );(\{d,y_2\},\natural_l);(\{m,z_2\},\rho ))*{\sf Stop}]$.

The static expression of the shared memory is\\
$L_3=[(\{a,\widehat{x_1},\widehat{x_2}\},\rho )*(((\{\widehat{y_1}\},\natural_l);(\{\widehat{z_1}\},\rho ))
\cho ((\{\widehat{y_2}\},\natural_l);(\{\widehat{z_2}\},\rho )))*{\sf Stop}]$.

The static expression of the abstract generalized shared memory system is\\
$L=(L_1\| L_2\| L_3)\sr (x_1,x_2,y_1,y_2,z_1,z_2)$.

$DR(\overline{L})$ resembles $DR(\overline{K})$, and $TS(\overline{L})$ is similar to $TS(\overline{K})$. We have ${\it
SMC}(\overline{L})\simeq{\it SMC}(\overline{K})$. Thus, the average sojourn time vectors of $\overline{L}$ and
$\overline{K}$, as well as the TPMs and the steady-state PMFs for ${\it EDTMC}(\overline{L})$ and ${\it
EDTMC}(\overline{K})$, coincide.

The first, second and third performance indices are the same for the generalized system and its abstraction. The next
performance index is specific to the abstract system.
\begin{itemize}

\item The common memory request of a processor $(\{r\},\rho )$ is only possible from the states
$\tilde{s}_2,\tilde{s}_5,\tilde{s}_7$. In each of the states, the request probability is the sum of the execution
probabilities for all sets of activities containing $(\{r\},\rho )$. The {\em steady-state probability of the
shared memory request from a processor} is
$\tilde{\varphi}_2\sum_{\{\Upsilon\mid (\{r\},\rho )\in\Upsilon\}}PT(\Upsilon ,\tilde{s}_2)+
\tilde{\varphi}_5\sum_{\{\Upsilon\mid (\{r\},\rho )\in\Upsilon\}}PT(\Upsilon ,\tilde{s}_5)+\\
\tilde{\varphi}_7\sum_{\{\Upsilon\mid (\{r\},\rho )\in\Upsilon\}}PT(\Upsilon ,\tilde{s}_7)=
\frac{\rho^2(1-\rho )}{2+\rho -\rho^2-\rho^3}(\rho (1-\rho )+\rho (1-\rho )+\rho^2)+\\
\frac{\rho (2-\rho )}{2(2+\rho -\rho^2-\rho^3)}(\rho (1-\rho^2)+\rho^3)+
\frac{\rho (2-\rho )}{2(2+\rho -\rho^2-\rho^3)}(\rho (1-\rho^2)+\rho^3)=
\frac{\rho^2(2-\rho )(1+\rho -\rho^2)}{2+\rho -\rho^2-\rho^3}$.

\end{itemize}

We have $DR(\overline{L})/_{\mathcal{R}_{\rm ss}(\overline{L})}=\{\widetilde{\mathcal{K}}_1,\widetilde{\mathcal{K}}_2,
\widetilde{\mathcal{K}}_3,\widetilde{\mathcal{K}}_4,\widetilde{\mathcal{K}}_5,\widetilde{\mathcal{K}}_6\}$, where
$\widetilde{\mathcal{K}}_1=\{\tilde{s}_1\}$ (the initial state), $\widetilde{\mathcal{K}}_2=\{\tilde{s}_2\}$ (the
system is activated and the memory is not requested), $\widetilde{\mathcal{K}}_3=\{\tilde{s}_3,\tilde{s}_4\}$ (the
memory is requested by one processor), $\widetilde{\mathcal{K}}_4=\{\tilde{s}_5,\tilde{s}_7\}$ (the memory is allocated
to a processor), $\widetilde{\mathcal{K}}_5=\{\tilde{s}_6\}$ (the memory is requested by two processors),
$\widetilde{\mathcal{K}}_6=\{\tilde{s}_8,\tilde{s}_9\}$ (the memory is allocated to a processor and the memory is
requested by another processor). Further, $DR_{\rm T}(\overline{L})/_{\mathcal{R}_{\rm ss}(\overline{L})}=
\{\widetilde{\mathcal{K}}_1,\widetilde{\mathcal{K}}_2,\widetilde{\mathcal{K}}_4,\widetilde{\mathcal{K}}_6\}$ and
$DR_{\rm V}(\overline{L})/_{\mathcal{R}_{\rm ss}(\overline{L})}=
\{\widetilde{\mathcal{K}}_3,\widetilde{\mathcal{K}}_5\}$.

In Figure \ref{shmimgqtsm.fig}, the quotient transition system $TS_{\bis_{\rm ss}}(\overline{L})$ is presented. In
Figure \ref{shmimgqsmcm.fig}, the quotient underlying SMC ${\it SMC}_{\bis_{\rm ss}}(\overline{L})$ is depicted. Note
that, in step semantics, we may execute the following multiactions in parallel: $\{r\},\{r\}$, as well as
$\{r\},\{m\}$. Again, the state $\widetilde{\mathcal{K}}_5$ only exists in step semantics, since it is reachable
exclusively by executing $\{r\}$ and $\{r\}$ in parallel.

\begin{figure}
\begin{center}
\includegraphics[scale=0.85]{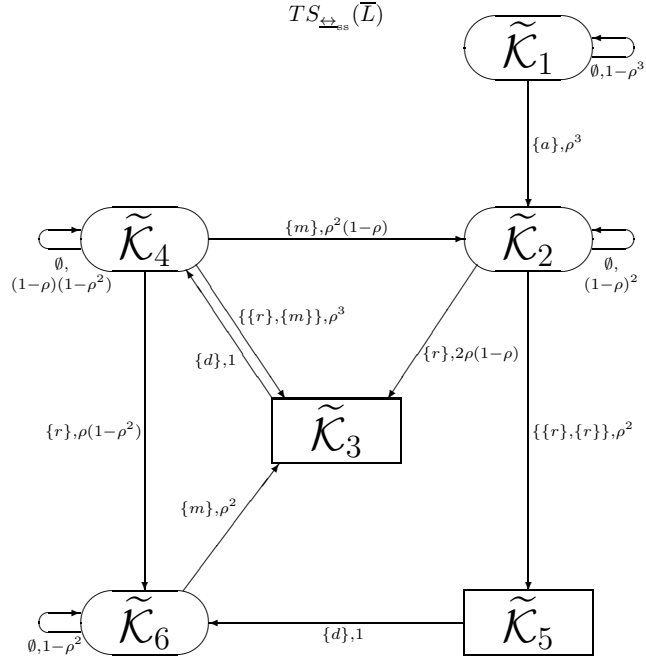}
\end{center}
\vspace{-6mm}
\caption{The quotient transition system of the abstract generalized shared memory system.}
\label{shmimgqtsm.fig}
\end{figure}

\begin{figure}
\begin{center}
\includegraphics[scale=0.85]{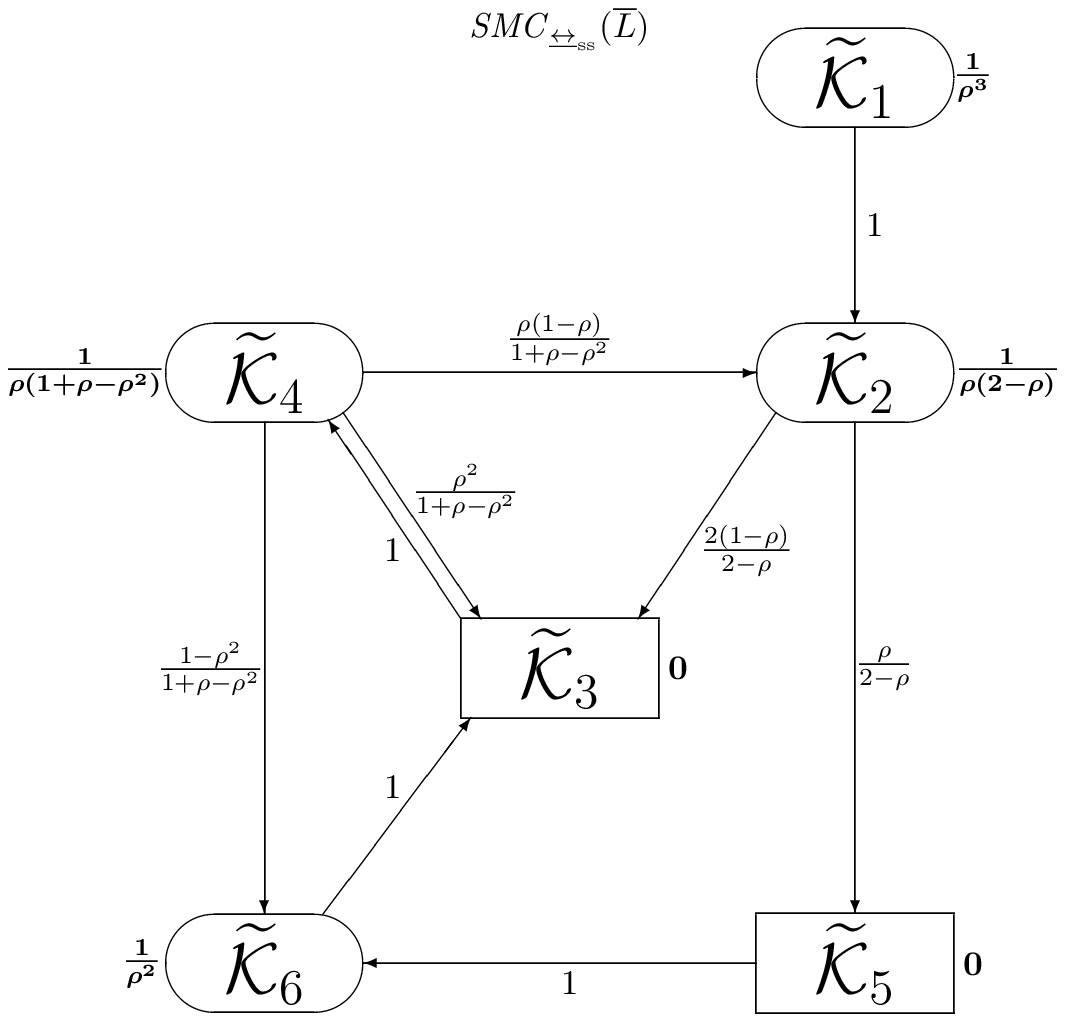}
\end{center}
\vspace{-6mm}
\caption{The quotient underlying SMC of the abstract generalized shared memory system.}
\label{shmimgqsmcm.fig}
\end{figure}

The quotient average sojourn time vector of $\overline{F}$ is
$\widetilde{\it SJ}'=\left(\frac{1}{\rho^3},\frac{1}{\rho (2-\rho )},0,\frac{1}{\rho (1+\rho -\rho^2)},0,
\frac{1}{\rho^2}\right)$.

The quotient sojourn time variance vector of $\overline{F}$ is
$\widetilde{\it VAR}'=\left(\frac{1-\rho^3}{\rho^6},\frac{(1-\rho )^2}{\rho^2(2-\rho )^2},0,\frac{(1-\rho
)^2(1+\rho )}{\rho^2(1+\rho -\rho^2)^2},0,\frac{1-\rho^2}{\rho^4}\right)$.

The TPM for ${\it EDTMC}_{\bis_{\rm ss}}(\overline{L})$ is
$\widetilde{\bf P}'^*=\left(\begin{array}{cccccc}
0 & 1 & 0 & 0 & 0 & 0\\
0 & 0 & \frac{2(1-\rho )}{2-\rho} & 0 & \frac{\rho}{2-\rho} & 0\\
0 & 0 & 0 & 1 & 0 & 0\\
0 & \frac{\rho (1-\rho )}{1+\rho -\rho^2} & \frac{\rho^2}{1+\rho -\rho^2} & 0 & 0 & \frac{1-\rho^2}{1+\rho -\rho^2}\\
0 & 0 & 0 & 0 & 0 & 1\\
0 & 0 & 1 & 0 & 0 & 0
\end{array}\right)$.

The steady-state PMF for ${\it EDTMC}_{\bis_{\rm ss}}(\overline{L})$ is\\
$\tilde{\psi}'^*=\frac{1}{6+3\rho -9\rho^2+2\rho^3}(0,\rho (2-3\rho +\rho^2),2+\rho -3\rho^2+\rho^3,
2+\rho -3\rho^2+\rho^3,\rho^2(1-\rho ),2-\rho -\rho^2)$.

The steady-state PMF $\tilde{\psi}'^*$ weighted by $\widetilde{\it SJ}'$ is
$\frac{1}{\rho^2(6+3\rho -9\rho^2+2\rho^3)}(0,\rho^2(1-\rho ),0,\rho (2-\rho ),0,2-\rho -\rho^2)$.

We normalize the steady-state weighted PMF, dividing it by the sum of its components\\
$\tilde{\psi}'^*\widetilde{\it SJ}'^{\rm T}=\frac{2+\rho -\rho^2-\rho^3}{\rho^2(6+3\rho -9\rho^2+2\rho^3)}$.

The steady-state PMF for ${\it SMC}_{\bis_{\rm ss}}(\overline{L})$ is
$\tilde{\varphi}'=\frac{1}{2+\rho -\rho^2-\rho^3}(0,\rho^2(1-\rho ),0,\rho (2-\rho ),0,2-\rho -\rho^2)$.

We can now calculate the main performance indices.
\begin{itemize}

\item The average recurrence time in the state $\widetilde{\mathcal{K}}_2$, where no processor requests the memory,
called the {\em average system run-through}, is $\frac{1}{\tilde{\varphi}_2'}=\frac{2+\rho
-\rho^2-\rho^3}{\rho^2(1-\rho )}$.

\item The common memory is available only in the states $\widetilde{\mathcal{K}}_2,\widetilde{\mathcal{K}}_3,
\widetilde{\mathcal{K}}_5$. The steady-state probability that the memory is available is
$\tilde{\varphi}_2'+\tilde{\varphi}_3'+\tilde{\varphi}_5'=\frac{\rho^2(1-\rho )}{2+\rho -\rho^2-\rho^3}+0+0=
\frac{\rho^2(1-\rho )}{2+\rho -\rho^2-\rho^3}$. The steady-state probability that the memory is used (i.e. not
available), called the {\em shared memory utilization}, is $1-\frac{\rho^2(1-\rho )}{2+\rho -\rho^2-\rho^3}=
\frac{2+\rho -2\rho^2}{2+\rho -\rho^2-\rho^3}$.

\item After activation of the system, we leave the state $\widetilde{\mathcal{K}}_1$ for ever, and the common memory is
either requested or allocated in every remaining state, with exception of $\widetilde{\mathcal{K}}_2$.
The {\em
rate with which the necessity of shared memory emerges} coincides with the rate of leaving $\widetilde{\mathcal{K}}_2$,
$\frac{\tilde{\varphi}_2'}{\widetilde{\it SJ}_2'}= \frac{\rho^2(1-\rho )}{2+\rho -\rho^2-\rho^3}\cdot\frac{\rho
(2-\rho )}{1}=\frac{\rho^3(1-\rho )(2-\rho )}{2+\rho -\rho^2-\rho^3}$.

\item The parallel common memory request of two processors $\{\{r\},\{r\}\}$ is only possible from the state
$\widetilde{\mathcal{K}_2}$. In this state, the request probability is the sum of the execution probabilities for
all multisets of multiactions containing $\{r\}$ twice.
The {\em steady-state probability of the shared memory request from two processors} is
$\tilde{\varphi}_2'\sum_{\{A,\widetilde{\mathcal{K}}\mid \{\{r\},\{r\}\}\subseteq A,\
\widetilde{\mathcal{K}}_2\stackrel{A}{\rightarrow}\widetilde{\mathcal{K}}\}} PM_A(\widetilde{\mathcal{K}}_2,
\widetilde{\mathcal{K}})=\frac{\rho^2(1-\rho )}{2+\rho -\rho^2-\rho^3}\rho^2=\frac{\rho^4(1-\rho )}{2+\rho
-\rho^2-\rho^3}$.

\item The common memory request of a processor $\{r\}$ is only possible from the states $\widetilde{\mathcal{K}}_2,
\widetilde{\mathcal{K}}_4$. In each of the states, the request probability is the sum of the execution
probabilities for all multisets of multiactions containing $\{r\}$. The {\em steady-state probability of the shared
memory request from a processor} is
$\tilde{\varphi}_2'\sum_{\{A,\widetilde{\mathcal{K}}\mid\{r\}\in A,\
\widetilde{\mathcal{K}}_2\stackrel{A}{\rightarrow}\widetilde{\mathcal{K}}\}}
PM_A(\widetilde{\mathcal{K}}_2,\widetilde{\mathcal{K}})+
\tilde{\varphi}_4'\sum_{\{A,\widetilde{\mathcal{K}}\mid\{r\}\in A,\
\widetilde{\mathcal{K}}_4\stackrel{A}{\rightarrow}\widetilde{\mathcal{K}}\}}
PM_A(\widetilde{\mathcal{K}}_4,\widetilde{\mathcal{K}})=\\
\frac{\rho^2(1-\rho )}{2+\rho -\rho^2-\rho^3}(2\rho (1-\rho )+\rho^2)+\frac{\rho (2-\rho )}{2+\rho
-\rho^2-\rho^3}(\rho (1-\rho^2)+\rho^3)= \frac{\rho^2(2-\rho )(1+\rho -\rho^2)}{2+\rho -\rho^2-\rho^3}$.

\end{itemize}

The performance indices are the same for the complete and the quotient abstract generalized shared memory systems. The
coincidence of the first, second and third performance indices obviously illustrates the results of Proposition
\ref{statprob.pro} and Proposition \ref{sjavevar.pro}. The coincidence of the fourth performance index is due to
Theorem \ref{stattrace.the}: one should just apply
this result to the derived step trace $\{\{r\},\{r\}\}$ of the expression $\overline{L}$ and itself. The coincidence of
the fifth performance index is due to Theorem \ref{stattrace.the}: one should just apply
this result to the derived step traces $\{\{r\}\},\ \{\{r\},\{r\}\},\ \{\{r\},\{m\}\}$ of the expression $\overline{L}$
and itself, and then sum the left and right parts of the three resulting equalities.

Let us consider what is the effect of quantitative changes of the parameter $\rho$ upon performance of the quotient
abstract generalized shared memory system in its steady state. Remember that $\rho\in (0;1)$ is the probability of
every stochastic multiaction in the specification of the system. The closer is $\rho$ to $0$, the less is the
probability to execute some activities at every discrete time tick, hence, the system will most probably {\em stand
idle}. The closer is $\rho$ to $1$, the greater is the probability to execute some activities at every discrete time
tact, hence, the system will most probably {\em operate}.

Since $\tilde{\varphi}_1'=\tilde{\varphi}_3'=\tilde{\varphi}_5'=0$, only $\tilde{\varphi}_2'=\frac{\rho^2(1-\rho
)}{2+\rho -\rho^2-\rho^3},\ \tilde{\varphi}_4'=\frac{\rho (2-\rho )}{2+\rho -\rho^2-\rho^3},\
\tilde{\varphi}_6'=\frac{2-\rho -\rho^2}{2+\rho -\rho^2-\rho^3}$ depend on $\rho$. In Figure \ref{shmimgqssp.fig}, the
plots of $\tilde{\varphi}_2',\ \tilde{\varphi}_4',\ \tilde{\varphi}_6'$ as functions of $\rho$ are depicted. Remember
that we do not allow $\rho =0$ or $\rho =1$.

One can see that $\tilde{\varphi}_2',\ \tilde{\varphi}_4'$ tend to $0$ and $\tilde{\varphi}_6'$ tends to $1$ when
$\rho$ approaches $0$. Thus, when $\rho$ is closer to $0$, the probability that the memory is allocated to a processor
and the memory is requested by another processor increases, hence, we have {\em more unsatisfied memory requests}.

Next, $\tilde{\varphi}_2',\ \tilde{\varphi}_6'$ tend to $0$ and $\tilde{\varphi}_4'$ tends to $1$ when $\rho$
approaches $1$. When $\rho$ is closer to $1$, the probability that the memory is allocated to a processor (and not
requested by another one) increases, hence, we have {\em less unsatisfied memory requests}.

The maximal value $0.0797$ of $\tilde{\varphi}_2'$ is reached when $\rho\approx 0.7433$. The probability that the
system is acti\-vated and the memory is not requested is maximal,
the {\em maximal shared memory availability}, is about~$8\%$.

\begin{figure}
\begin{center}
\includegraphics[width=0.8\textwidth]{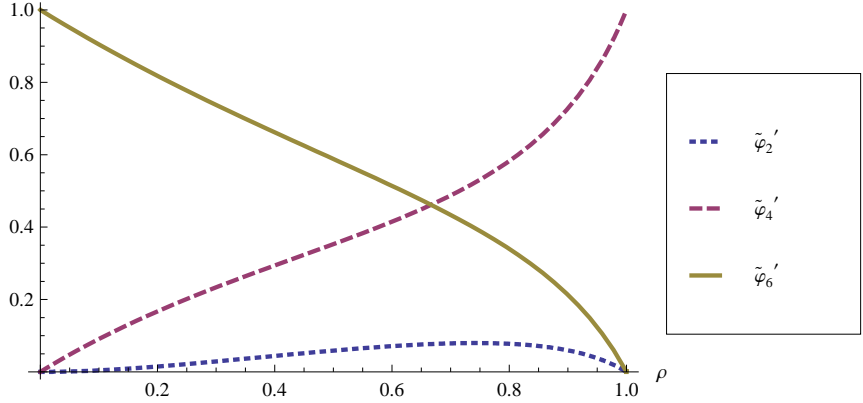}
\end{center}
\vspace{-8mm}
\caption{Steady-state probabilities $\tilde{\varphi}_2',\ \tilde{\varphi}_4',\ \tilde{\varphi}_6'$ as functions of the
parameter $\rho$.}
\label{shmimgqssp.fig}
\end{figure}

In Figure \ref{shmimgqart.fig}, the plot of the average system run-through, calculated as
$\frac{1}{\tilde{\varphi}_2'}$, as a function of $\rho$ is depicted.
The run-through tends to $\infty$ when $\rho$ approaches $0$ or $1$. Its minimal value $12.5516$ is reached when
$\rho\approx 0.7433$. To speed up operation of the system, one should take the parameter $\rho$ closer to $0.7433$.

\begin{figure}
\begin{center}
\includegraphics[width=0.8\textwidth]{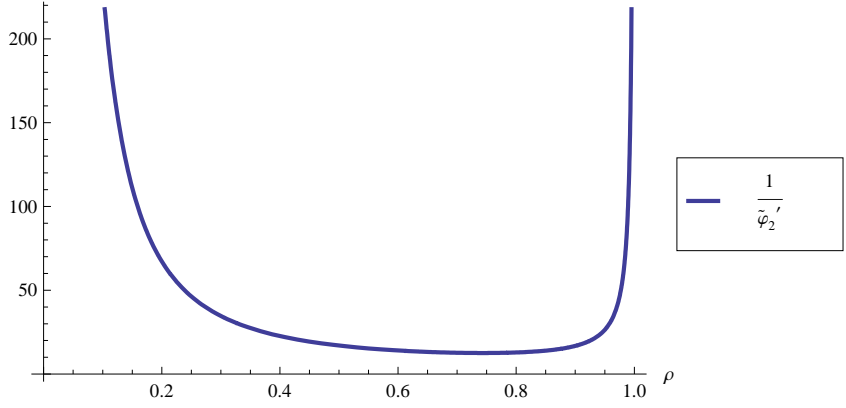}
\end{center}
\vspace{-8mm}
\caption{Average system run-through $\frac{1}{\tilde{\varphi}_2'}$ as a function of the parameter $\rho$.}
\label{shmimgqart.fig}
\end{figure}

The first curve in Figure \ref{shmimgqindn.fig} represents the shared memory utilization, calculated as
$1-\tilde{\varphi}_2'-\tilde{\varphi}_3'-\tilde{\varphi}_5'$, as a function of $\rho$. The utilization tends to $1$
both when $\rho$ approaches $0$ and when $\rho$ approaches $1$. The minimal value $0.9203$ of the utilization is
reached when $\rho\approx 0.7433$. Thus, the {\em minimal shared memory utilization} is about $92\%$. To increase the
utilization, one should take the parameter $\rho$ closer to $0$ or $1$.

The second curve in Figure \ref{shmimgqindn.fig} represents the rate with which the necessity of shared memory emerges,
calculated as $\frac{\tilde{\varphi}_2'}{\widetilde{\it SJ}_2'}$, as a function of $\rho$. The rate tends to $0$ both
when $\rho$ approaches $0$ and when $\rho$ approaches $1$. The maximal value $0.0751$ of the rate is reached when
$\rho\approx 0.7743$. The {\em maximal rate with which the necessity of shared memory emerges} is about $\frac{1}{13}$.
To decrease the rate, one must take the parameter $\rho$ closer to $0$ or $1$.

The third curve in Figure \ref{shmimgqindn.fig} represents the steady-state probability of the shared memory request
from two processors, calculated as $\tilde{\varphi}_2'\widetilde{\mathcal{P}}_{25}'$, where
$\widetilde{\mathcal{P}}_{25}'= \sum_{\{A,\widetilde{\mathcal{K}}\mid\{\{r\},\{r\}\}\subseteq A,\
\widetilde{\mathcal{K}}_2\stackrel{A}{\rightarrow}\widetilde{\mathcal{K}}\}}PM_A(\widetilde{\mathcal{K}}_2,
\widetilde{\mathcal{K}})=PM(\widetilde{\mathcal{K}}_2,\widetilde{\mathcal{K}}_5)$, as function of $\rho$. One can see
that the probability tends to $0$ both when $\rho$ approaches $0$ and when $\rho$ approaches $1$. The maximal value
$0.0517$ of the probability is reached when $\rho\approx 0.8484$. To decrease the mentioned probability, one should
take the parameter $\rho$ closer to $0$ or $1$.

The fourth curve in Figure \ref{shmimgqindn.fig} represents the steady-state probability of the shared memory request
from a processor, calculated as $\tilde{\varphi}_2'\widetilde{\Sigma}_2'+\tilde{\varphi}_4'\widetilde{\Sigma}_4'$, as
a function of $\rho$, where $\widetilde{\Sigma}_i'\!\!=\!\!\sum_{\{A,\widetilde{\mathcal{K}}\mid\{r\}\in A,\
\widetilde{\mathcal{K}}_i\stackrel{A}{\rightarrow}\widetilde{\mathcal{K}}\}}\!\!PM_A(\widetilde{\mathcal{K}}_i,
\widetilde{\mathcal{K}}),\\
i\in\{2,4\}$. One can see that the probability tends to $0$ when $\rho$ approaches $0$ and it tends to $1$ when $\rho$
approaches $1$. To increase the probability, one should take the parameter $\rho$ closer to $1$.

\begin{figure}
\begin{center}
\includegraphics[width=0.8\textwidth]{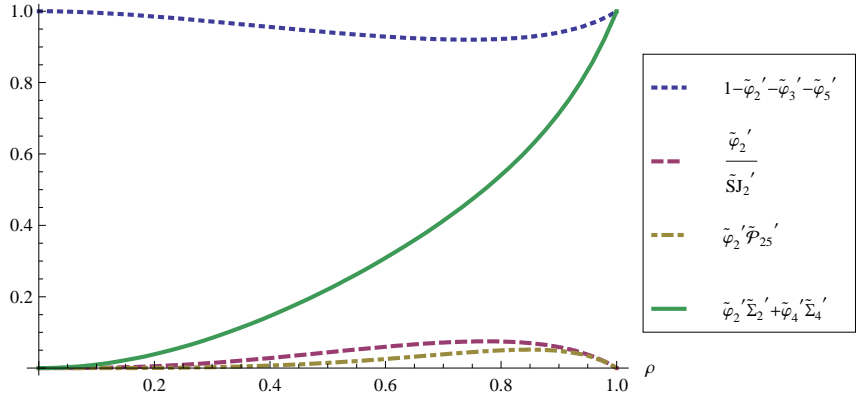}
\end{center}
\vspace{-8mm}
\caption{Some performance indices as functions of the parameter $\rho$.}
\label{shmimgqindn.fig}
\end{figure}

\section{Related work}
\label{relwork.sec}

Let us consider differences and similarities between dtsiPBC and other well-known SPAs.

\subsection{Continuous time and interleaving semantics}

Let us compare dtsiPBC with the classical interleaving SPAs.

Markovian Timed Processes for Performance Evaluation (MTIPP) \cite{HR94} specifies every activity as a pair consisting
of the action name (including the symbol $\tau$ for the {\em internal}, invisible action) and the parameter of
exponential distribution of the action delay (the {\em rate}). The interleaving operational semantics is defined on the
basis of Markovian (i.e. extended with the specification of rates) labeled transition systems. The interleaving
behaviour is here because the exponential PDF is a continuous one and simultaneous execution of any two activities has
zero probability according to the properties of continuous distributions. CTMCs can be derived from the
transition systems to analyze~performance.

Performance Evaluation Process Algebra (PEPA) \cite{Hil96} treats the activities as pairs consisting of action types
(including the {\em unknown} type $\tau$) and activity rates. The rate is either the parameter of exponential
distribution of the activity duration or it is {\em unspecified}. An activity with unspecified rate is {\em passive} by
its action type. The operational semantics is interleaving, it is defined via the extension of labeled transition
systems with a possibility to specify activity rates. Based on the transition systems, the continuous time Markov
processes (CTMPs) are generated which are used for performance evaluation with the help of the embedded continuous time
Markov chains (ECTMCs). In \cite{GHKR03}, a denotational semantics of PEPA has been proposed via PEPA nets that are
high-level CTSPNs with coloured tokens (coloured CTSPNs), from which the underlying CTMCs can be retrieved.

Extended Markovian Process Algebra (EMPA) \cite{BGo98} interprets each action as a pair consisting of its type and
rate. Actions can be {\em external} or {\em internal} (denoted by $\tau$) according to types. There are three kinds of
actions according to rates: {\em timed} ones with exponentially distributed durations (essentially, the actions from
MTIPP and PEPA), {\em immediate} ones with priorities and weights (the actions analogous to immediate transitions of
GSPNs) and {\em passive} ones (similar to passive actions of PEPA). The operational semantics is interleaving and based
on the labeled transition systems enriched with the information about action rates. For the exponentially timed kernel
of the algebra (the sublanguage including only exponentially timed and passive actions), it is possible to construct
CTMCs from the transition systems of the process terms to analyze the performance. In \cite{BDGo98,Bern99}, a
denotational semantics of EMPA based on GSPNs has been defined, from which one can also extract the underlying SMCs and
CTMCs (when both immediate and timed transitions are present) or DTMCs (but when there are only immediate transitions).

dtsiPBC considers every activity as a pair consisting of the multiaction (not just an action, as in the classical SPAs)
as a first element. The second element is either the probability (not the rate, as in the classical SPAs) to execute
the multiaction independently (the activity is called a stochastic multiaction in this case) or the weight expressing
how important is the execution of this multiaction (then the activity is called an immediate multiaction). Immediate
multiactions in dtsiPBC are similar to immediate actions in EMPA, but all the immediate multiactions in
dtsiPBC have the same high priority (with the goal to execute them always before stochastic multiactions, all having
the same low priority), whereas the immediate actions in EMPA can have different priorities. Associating the same
priority with all immediate multiactions in dtsiPBC results in the simplified specification and analysis, and such a
decision is appropriate to the calculus, since weights (assigned also to immediate actions in EMPA) are enough to
denote preferences among immediate multiactions and to produce the conformable probabilistic behaviours. There are no
immediate actions in MTIPP and PEPA. Immediate actions are available only in iPEPA \cite{HBC13}, where they are
analogous to immediate multiactions in dtsiPBC, and in a variant of TIPP \cite{GHR93} discussed while constructing the
calculus PM-TIPP \cite{Ret95}, but there immediate activities are used just to specify probabilistic branching and they
cannot be synchronized. dtsiPBC has a discrete time semantics, and residence time in the tangible states is
geometrically distributed, unlike the classical SPAs with continuous time semantics and exponentially distributed
activity delays. As a consequence, dtsiPBC has a step operational semantics in contrast to interleaving operational
semantics of the classical SPAs. The performance in dtsiPBC is analyzed via the underlying SMCs and (reduced) DTMCs
\cite{TMV15} extracted from the labeled probabilistic transition systems associated with the expressions. In the
classical SPAs, CTMCs are usually used for performance evaluation. dtsiPBC has a denotational semantics based on
LDTSIPNs from which the underlying SMCs and (reduced) DTMCs are derived, unlike (reduced) CTMCs in PEPA and EMPA. MTIPP
has no denotational semantics.

\subsection{Continuous time and non-interleaving semantics}

A few non-interleaving SPAs were considered among non-Markovian ones \cite{KA01,BA04}.

Generalized Stochastic Process Algebra (GSPA) \cite{BKLL95} is a stochastic extension of Simple Process Algebra
\cite{BKLL95}. GSPA has no operational semantics. GSPA has a true-concurrent denotational semantics via generalized
stochastic event structures (GSESs) with non-Markovian stochastic delays of events. In \cite{KBLL96}, generalized
semi-Markov processes (GSMPs) were extracted from GSESs to analyze performance.

Generalized Stochastic $\pi$-calculus (S$\pi$) \cite{Pri96,Pri02} extends $\pi$-calculus \cite{MPW92}. S$\pi$ allows
for general continuous distributions of activity delays. It has a proved operational semantics with transitions labeled
by encodings of their deduction trees. The transition labels encode the action causality information and allow one to
derive the enabling relations and the firing distributions of concurrent transitions from the transition sequences.
Nevertheless, abstracting from stochastic delays leads to the classical early interleaving semantics of $\pi$-calculus.
No well-established underlying performance model for this version of S$\pi$ exists.

Generalized Semi-Markovian Process Algebra (GSMPA) \cite{BBGo98,Bra02} is an enrichment of EMPA. GSMPA has an
ST-operational semantics and non-Markovian action delays. The ST-operational semantics of GSMPA is based on decorated
transition systems governed by transition rules with rather complex preconditions. There are two types of transitions:
the choice (action beginning) and the termination (action ending) ones. The choice transitions are labeled by weights
of single actions chosen for execution while the termination transitions have no labels. Only single actions can begin,
but several actions can end in parallel. Thus, the choice transitions happen just sequentially while the termination
transitions can happen simultaneously. As a result, the decorated interleaving / step transition systems are obtained.
The performance analysis in GSMPA is accomplished via GSMPs.

dtsiPBC has immediate multiactions while GSPA, S$\pi$ and GSMPA do not specify instantaneous events or activities.
Geometrically distributed or zero delays are associated with process states in dtsiPBC, unlike generally distributed
delays assigned to events in GSPA or to activities in S$\pi$ and GSMPA. dtsiPBC has a discrete time operational
semantics allowing for concurrent execution of activities in steps. GSPA has no operational semantics while S$\pi$ and
GSMPA have continuous time ones. In continuous time semantics, concurrency is simulated by interleaving, since
simultaneous occurrence of any two events has zero probability according to the properties of continuous probability
distributions. Therefore, interleaving transitions should be annotated with an additional information to keep the
concurrency. dtsiPBC has an SPN-based denotational semantics. In comparison with event structures, PNs are more
expressive and visually tractable formalism, capable of finitely specifying an infinite behaviour. Recursion in GSPA
produces infinite GSESs while dtsiPBC has iteration operation with a finite SPN semantics. Identification of infinite
GSESs that can be finitely represented in GSPA was left for a future research.

\subsection{Discrete time}

Much fewer SPAs with discrete time semantics were constructed.

Dts-nets \cite{AHR00} are a class of compositional DTSPNs with generally distributed discrete time transition delays.
The denotational semantics of a stochastic extension (we call it stochastic ACP or sACP) of a subset of Algebra of
Communicating Processes (ACP) \cite{BK85} can be constructed via dts-nets. There are two types of transitions:
immediate (timeless) ones, with zero delays, and time ones, whose delays are random variables with general discrete
distributions. The top-down synthesis of dts-nets consists in the substitution of their transitions by blocks
(dts-subnets) corresponding to some composition operators. It was explained how to calculate the throughput time of
dts-nets using the service time (holding time or delay) of their transitions. For this, the notions of service
distribution for the transitions and throughput distribution for the building blocks were defined. Since the throughput
time of the parallelism block was calculated as the maximal service time for its two constituting transitions, the
analogue of the step semantics was implemented.

Theory of Communicating Processes with discrete stochastic time ($TCP^{\rm dst}$) \cite{MVi08,MVi09}, later called
Theory of Communicating Processes with discrete real and stochastic time ($TCP^{\rm drst}$) \cite{MABV12}, is another
stochastic extension of ACP. $TCP^{\rm dst}$ has discrete real time (deterministic) delays (including zero time delays)
and discrete stochastic time delays. The algebra generalizes real time processes to discrete stochastic time ones by
applying real time properties to stochastic time and imposing race condition to real time semantics. $TCP^{\rm dst}$
has an interleaving operational semantics in terms of stochastic transition systems. The performance is analyzed via
discrete time probabilistic reward graphs which are essentially the reward transition systems with probabilistic states
having finite number of outgoing probabilistic transitions and timed states having a single outgoing timed transition.
The mentioned graphs can be transformed by unfolding or geometrization into discrete time Markov reward chains (DTMRCs)
appropriate for transient or stationary analysis.

dtsiPBC, sACP and $TCP^{\rm dst}$, all have zero delays. However, discrete time delays in dtsiPBC are zeros or
geometrically distributed and associated with process states. The zero delays are possible just in vanishing states
while geometrically distributed delays are possible only in tangible states. For each tangible state, the parameter of
geometric distribution governing the delay in the state is completely determined by the probabilities of all stochastic
multiactions executable from it. In sACP and $TCP^{\rm dst}$, delays are generally distributed, but they are assigned
to transitions in sACP and separated from actions (excepting zero delays) in $TCP^{\rm dst}$. Moreover, a special
attention is given to zero delays in sACP and deterministic delays in $TCP^{\rm dst}$. In sACP, immediate (timeless)
transitions with zero delays serve as source and sink transitions of the dts-subnets corresponding to the choice,
parallelism and iteration operators. In $TCP^{\rm dst}$, zero delays of actions are specified by undelayable action
prefixes while positive deterministic delays of processes are specified with timed delay prefixes. Neither formal
syntax nor operational semantics for sACP were defined and it was not explained how to derive Markov chains from the
algebraic expressions or the corresponding dts-nets to analyze performance. It was not stated explicitly, which type of
semantics (interleaving or step) is accommodated in sACP. In spite of the discrete time approach, operational semantics
of $TCP^{\rm dst}$ is still interleaving, unlike that of dtsiPBC.
$TCP^{\rm dst}$ has no denotational semantics.

Table \ref{spaclassim.tab} summarizes the SPAs comparison above and that from Section \ref{introduction.sec}, by
classifying the SPAs according to the concept of time, the presence of immediate (multi)actions and the type of
operational semantics. The names of SPAs, whose denotational semantics is based on SPNs, are printed in bold font. The
underlying stochastic process (if defined) is specified near the name of the corresponding SPA.

\begin{table}
\caption{Classification of stochastic process algebras.}
\vspace{-1mm}
\label{spaclassim.tab}
\begin{center}
\small\begin{tabular}{|c|c||c|c|}
\hline
Time & Immediate & Interleaving & Non-interleaving\\
 & (multi)actions & semantics & semantics \\
\hline
Continuous & No & MTIPP (CTMC), {\bf PEPA} (CTMP), & GSPA (GSMP), S$\pi$,\\
 & & {\bf sPBC} (CTMC) & GSMPA (GSMP)\\
\cline{2-4}
 & Yes & {\bf EMPA} (SMC, CTMC), & ---\\
 & & {\bf gsPBC} (SMC) & \\
\hline
Discrete & No & --- & {\bf dtsPBC} (DTMC)\\
\cline{2-4}
 & Yes & $TCP^{\rm dst}$ (DTMRC) & {\bf sACP},\\
 & & & {\bf dtsiPBC} (SMC, DTMC)\\
\hline
\end{tabular}
\end{center}
\end{table}

\section{Discussion}
\label{discussion.sec}

Let us now discuss which advantages has dtsiPBC in comparison with the SPAs described in Section \ref{relwork.sec}.

\subsection{Analytical solution}

An important aspect is the analytical tractability of the underlying stochastic process, used for performance
evaluation in SPAs. The underlying CTMCs in MTIPP and PEPA, as well as SMCs in EMPA, are treated analytically, but
these continuous time SPAs have interleaving semantics. GSPA, S$\pi$ and GSMPA are the continuous time models, for
which a non-interleaving semantics is constructed, but for the underlying GSMPs in GSPA and GSMPA, only simulation and
numerical methods are applied, whereas no performance model for S$\pi$ is defined. sACP and $TCP^{dst}$ are the
discrete time models with the associated analytical methods for the throughput calculation in sACP or for the
performance evaluation based on the underlying DTMRCs in $TCP^{dst}$, but both models have interleaving semantics.
dtsiPBC is a discrete time model with a non-interleaving semantics, where analytical methods are applied to the
underlying SMCs. Hence, if an interleaving model is appropriate as a framework for the analytical solution towards
performance evaluation then one has a choice between the continuous time SPAs MTIPP, PEPA, EMPA and the discrete time
ones sACP, $TCP^{dst}$. Otherwise, if one needs a non-interleaving model with the associated analytical methods for
performance evaluation and the discrete time approach is feasible then dtsiPBC is the right choice.

The existence of an analytical solution also permits to interpret quantitative values (rates, probabilities, weights
etc.) from the system specifications as parameters, which can be adjusted to optimize the system performance, like in
dtsPBC, dtsiPBC and parametric probabilistic transition systems (i.e. DTMCs whose transition probabilities may be
real-value parameters) \cite{LMST07}. Note that DTMCs whose transition probabilities are parameters were introduced in
\cite{Daw04}. Parametric CTMCs with the transition rates treated as parameters were investigated in \cite{HKM08}. On
the other hand, no parameters in formulas of SPAs were considered in the literature so far. In dtsiPBC we can easily
construct examples with more parameters than we did in our case study. The performance indices will be then interpreted
as functions of several variables. The advantage of our approach is that, unlike of the method from \cite{LMST07}, we
should not impose to the parameters any special conditions needed to guarantee that the real values, interpreted as the
transition probabilities, always lie in the interval $[0;1]$. To be convinced of this fact, just remember that, as we
have demonstrated, the positive probability functions $PF,\ PT,\ PM,\ PM^*$ define probability distributions, hence,
they always return
values belonging to $(0;1]$ for any probability parameters from $(0;1)$ and weight parameters from $\reals_{>0}$.
In addition, the transition constraints (their probabilities, rates and guards), calculated using the parameters, in
our case should not always be polynomials over variables-parameters, as often required in the mentioned papers, but
they may also be fractions of polynomials, like in our case study.

\subsection{Application area}

From the application viewpoint,
MTIPP and PEPA are well-suited for interleaving continuous time systems,
in which the activity rates or the average sojourn time in the states are known in advance and exponential distribution
approximates well the activity delay distributions.
EMPA, however, can be used to model the mentioned systems with the activity delays of different duration order or the
extended systems, in which purely probabilistic choices or urgent activities must be implemented. GSPA and GSMPA fit
well for modeling continuous time systems with a capability to keep the activity causality information, and with known
activity delay distributions, which cannot be approximated accurately by exponential
distributions.
S$\pi$ can additionally model mobility in such systems. $TCP^{dst}$ is a good choice for interleaving discrete time
systems with deterministic (fixed) and generalized stochastic delays, whereas sACP is capable to model non-interleaving
systems as well, but it offers not enough performance analysis methods. dtsiPBC is consistent for the step discrete
time systems such that the independent execution probabilities of activities are known and geometrical distribution
approximates well the state residence time distributions. In addition, dtsiPBC can model these systems featuring very
scattered activity delays or even more complex systems with instantaneous probabilistic choice or urgency, hence,
dtsiPBC can be taken as a non-interleaving discrete time counterpart of EMPA.

\subsection{Concurrency interpretation}

The stochastic process calculi proposed in the literature are based on interleaving, as a rule, and parallelism is
simulated by synchronous or asynchronous execution. As a semantic domain, the interle\-aving formalism of transition
systems is often used. However,
to properly support intuition of the behaviour of concurrent and distributed systems, their semantics should treat
parallelism as a primitive concept that cannot be reduced to nondeterminism. Moreover, in interleaving semantics, some
important properties of these systems cannot be expressed, such as simultaneous occurrence of concurrent transitions
\cite{DP99} or local deadlock in the spatially distributed processes \cite{MPY96}. Therefore, investigation of
stochastic extensions for more expressive and powerful algebraic calculi is an important issue. The development of step
or ``true concurrency'' (such that parallelism is considered as a causal independence) SPAs is an interesting and
nontrivial problem, which has attracted special attention in the last years. Nevertheless, not so many formal
stochastic models were defined whose underlying stochastic processes
were based on DTMCs. As mentioned in \cite{Fou10}, such models are more difficult to analyze, since a lot of events can
occur simultaneously in discrete time systems (the models have a step semantics) and the probability of a set of events
cannot be easily related to the probability of the single ones. As observed in \cite{HPRV12}, even for stochastic
models with generally distributed time delays, some restrictions on the concurrency degree were imposed to simplify
their analysis techniques. In particular, the enabling restriction requires that no two generally distributed
transitions are enabled in any reachable marking. Hence, their activity periods do not intersect and no two such
transitions can fire simultaneously, this results in interleaving semantics of the model.

Stochastic models with discrete time and step semantics have the following important advantage over those having just
an interleaving semantics. The underlying Markov chains of parallel stochastic processes have the additional
transitions corresponding to the simultaneous execution of concurrent (i.e. non-synchronized) activities.
These additional transitions allow us one to bypass a lot of intermediate states, which otherwise should be visited
when interleaving semantics is accommodated. When step semantics is used, the intermediate states can also be
visited with some probability (this is an advantage, since some alternative system's behaviour may start from these
states), but this probability is not greater than the corresponding one in case of interleaving semantics. While in
interleaving semantics, only the empty or singleton (multi)sets of activities can be executed, in step semantics,
generally, the (multi)sets of activities with more than one element can be executed as well. Hence, in step semantics,
there are more variants of execution from each state than in the interleaving case and the executions probabilities,
whose sum
must be equal to $1$, are distributed among more possibilities. Therefore, the systems with parallel stochastic
processes usually have smaller average run-through.
Thus, when the underlying Markov chains of the processes are ergodic, they will take less discrete time units to
stabilize the behaviour, since their TPMs will be denser because of additional non-zero elements outside the main
diagonal. Hence, both the first passage-time performance indices based on the transient probabilities and the
steady-state performance indices based on the stationary probabilities can be computed quicker, resulting in faster
quantitative analysis of the systems. On the other hand, step semantics, induced by simultaneous firing several
transitions at each step, is natural for Petri nets and allows one to exploit full power of the model. Therefore, it is
important to respect the probabilities of parallel executions of activities in discrete time SPAs, especially in those
with a Petri net denotational semantics.

\subsection{Advantages of dtsiPBC}

The advantages of dtsiPBC are the flexible multiaction labels, immediate multiactions, powerful operations, as well as
a step operational and a Petri net denotational semantics allowing for concurrent execution of activities
(transitions), together with an ability for analytical and parametric performance evaluation.

\section{Conclusion}
\label{conclusion.sec}

In this paper, we have proposed a discrete time stochastic extension dtsiPBC of a finite part of PBC enriched with
iteration and immediate multiactions. The calculus has a concurrent step operational semantics based on labeled
probabilistic transition systems and a denotational semantics in terms of a subclass of LDTSIPNs. A method of
performance evaluation in the framework of the calculus has been presented. Step stochastic bisimulation equivalence of
process expressions has been defined and its interrelations with other equivalences of the calculus have been
investigated. We have explained how to reduce transition systems and underlying SMCs of expressions w.r.t. the
introduced equivalence. We have proved that the mentioned equivalence guarantees identity of the stationary behaviour
and the sojourn time properties, and thus preserves performance measures. A case study of a generalization of the
shared memory system by allowing for variable probabilites in its specification has been presented. The case study is
an example of modeling, performance evaluation and performance preserving reduction within the calculus.

The advantage of our framework is twofold. First, one can specify in it concurrent composition and synchronization of
(multi)actions, whereas this is not possible in classical Markov chains. Second, algebraic formulas represent processes
in a more compact way than Petri nets and allow one to apply syntactic transformations and comparisons. Process
algebras are compositional by definition and their operations naturally correspond to operators of programming
languages. Hence, it is much easier to construct a complex model in the algebraic setting than in PNs. The complexity
of PNs generated for practical models in the literature demonstrates that it is not straightforward to construct such
PNs directly from the system specifications. dtsiPBC is well suited for the discrete time applications, such as
business processes, neural and transportation networks, computer and communication systems, web services, whose
discrete states change with a global time tick, and in which the distributed architecture or the concurrency level
should be preserved (remember that, in step semantics, we have additional transitions due to concurrent executions).

Future work will consist in constructing a congruence for dtsiPBC, i.e. the equivalence that withstands application of
all its operations. A possible candidate is a stronger version of $\bis_{\rm ss}$ defined via transition systems
equipped with two extra transitions {\sf skip} and {\sf redo}, like those from \cite{MVCF08}. We also plan to extend
the calculus with deterministically timed multiactions having a fixed discrete time delay (including the zero one which
is the case of immediate multiactions) to enhance expressiveness of the calculus and extend application area of the
associated analysis techniques. The resulting SPA will be a concurrent discrete time analogue of SM-PEPA \cite{Brad05},
whose underlying stochastic model is a semi-Markov chain. Finally, recursion could be added to dtsiPBC to increase its
specification power.

%
%
\nocite{*}
\bibliographystyle{abbrvnat}
\bibliography{dtsipbcdmtcs}

\appendix

\section{Proofs}
\label{proofs.sec}

\subsection{Proof of Proposition \protect\ref{largestbisim.pro}}
\label{largestbisim.ssc}

Like
for strong equivalence in Proposition 8.2.1 from \cite{Hil96}, we shall prove the following fact about step stochastic
bisimulation. Let us have $\forall j\in\mathcal{J},\ \mathcal{R}_j:G\bis_{\rm ss}G'$ for some index set $\mathcal{J}$.
Then the transitive closure of the union of all relations $\mathcal{R}=(\cup_{j\in\mathcal{J}}\mathcal{R}_j)^+$ is also
an equivalence and $\mathcal{R}:G\bis_{\rm ss}G'$.

Since $\forall j\in\mathcal{J},\ \mathcal{R}_j$ is an equivalence, by definition of $\mathcal{R}$, we get that
$\mathcal{R}$ is also an equivalence. Let $j\in\mathcal{J}$, then, by definition of $\mathcal{R},\
(s_1,s_2)\in\mathcal{R}_j$ implies $(s_1,s_2)\in\mathcal{R}$. Hence, $\forall\mathcal{H}_{jk}\in (DR(G)\cup
DR(G'))/_{\mathcal{R}_j},\ \exists\mathcal{H}\in (DR(G)\cup DR(G'))/_\mathcal{R},\
\mathcal{H}_{jk}\subseteq\mathcal{H}$. Moreover, $\exists\mathcal{J}',\
\mathcal{H}=\cup_{k\in\mathcal{J}'}\mathcal{H}_{jk}$.

We denote $\mathcal{R}(n)=(\cup_{j\in\mathcal{J}}\mathcal{R}_j)^n$. Let $(s_1,s_2)\in\mathcal{R}$, then, by definition
of $\mathcal{R},\ \exists n>0,\ (s_1,s_2)\in\mathcal{R}(n)$. We shall prove that $\mathcal{R}:G\bis_{\rm ss}G'$ by
induction on $n$. It is clear that $\forall j\in\mathcal{J},\ \mathcal{R}_j:G\bis_{\rm ss}G'$ implies $\forall
j\in\mathcal{J},\ ([G]_\approx ,[G']_\approx )\in\mathcal{R}_j$ and we have $([G]_\approx ,[G']_\approx
)\in\mathcal{R}$ by definition of $\mathcal{R}$. It remains to prove that $(s_1,s_2)\in\mathcal{R}$ implies
$\forall\mathcal{H}\in (DR(G)\cup DR(G'))/_\mathcal{R},\ \forall A\in\naturals_{\rm fin}^\mathcal{L},\
PM_A(s_1,\mathcal{H})=PM_A(s_2,\mathcal{H})$.
\begin{itemize}

\item $n=1$\\
In this case, $(s_1,s_2)\in\mathcal{R}$ implies $\exists j\in\mathcal{J},\ (s_1,s_2)\in\mathcal{R}_j$. Since
$\mathcal{R}_j:G\bis_{\rm ss}G'$, we get\\
$\forall\mathcal{H}\in (DR(G)\cup DR(G'))/_\mathcal{R},\ \forall A\in\naturals_{\rm fin}^\mathcal{L}$,
$$PM_A(s_1,\mathcal{H})=\sum_{k\in\mathcal{J}'}PM_A(s_1,\mathcal{H}_{jk})=
\sum_{k\in\mathcal{J}'}PM_A(s_2,\mathcal{H}_{jk})=PM_A(s_2,\mathcal{H}).$$

\item $n\rightarrow n+1$\\
Suppose that $\forall m\leq n,\ (s_1,s_2)\in\mathcal{R}(m)$ implies $\forall\mathcal{H}\in (DR(G)\cup
DR(G'))/_\mathcal{R},\ \forall A\in\naturals_{\rm fin}^\mathcal{L},\\
PM_A(s_1,\mathcal{H})=PM_A(s_2,\mathcal{H})$. Then $(s_1,s_2)\in\mathcal{R}(n+1)$ implies $\exists
j\in\mathcal{J},\ (s_1,s_2)\in\mathcal{R}_j\circ \mathcal{R}(n)$, i.e. $\exists s_3\in (DR(G)\cup DR(G'))$, such
that $(s_1,s_3)\in\mathcal{R}_j$ and $(s_3,s_2)\in\mathcal{R}(n)$. Then, like for the case $n=1$, we get
$PM_A(s_1,\mathcal{H})=PM_A(s_3,\mathcal{H})$. By the induction hypothesis, we get
$PM_A(s_3,\mathcal{H})=PM_A(s_2,\mathcal{H})$. Thus, $\forall\mathcal{H}\in (DR(G)\cup DR(G'))/_\mathcal{R},\
\forall A\in\naturals_{\rm fin}^\mathcal{L}$,
$$PM_A(s_1,\mathcal{H})=PM_A(s_3,\mathcal{H})=PM_A(s_2,\mathcal{H}).$$

\end{itemize}

By definition, $\mathcal{R}_{\rm ss}(G,G')$ is at least as large as the largest step stochastic bisimulation between
$G$ and $G'$. It follows from above that $\mathcal{R}_{\rm ss}(G,G')$ is an equivalence and $\mathcal{R}_{\rm
ss}(G,G'):G\bis_{\rm ss}G'$, hence, it is the largest step stochastic bisimulation between $G$ and $G'$. \qed

\subsection{Proof of Proposition \protect\ref{statprob.pro}}
\label{statprob.ssc}

By Proposition \ref{bissplit.pro}, $(DR(G)\cup DR(G'))/_\mathcal{R}\!\!=\!\!((DR_{\rm T}(G)\cup DR_{\rm
T}(G'))/_\mathcal{R})\uplus ((DR_{\rm V}(G)\cup DR_{\rm V}(G'))/_\mathcal{R})$. Hence, $\forall\mathcal{H}\in
(DR(G)\cup DR(G'))/_\mathcal{R}$, all states from $\mathcal{H}$ are tangible, when\\
$\mathcal{H}\in (DR_{\rm T}(G)\cup DR_{\rm T}(G'))/_\mathcal{R}$, or all of them are vanishing, when $\mathcal{H}\in
(DR_{\rm V}(G)\cup DR_{\rm V}(G'))/_\mathcal{R}$.

By definition of the steady-state PMFs for SMCs, $\forall s\in DR_{\rm V}(G),\ \varphi (s)=0$ and $\forall s'\in
DR_{\rm V}(G'),\\
\varphi '(s')=0$. Thus, $\forall\mathcal{H}\in (DR_{\rm V}(G)\cup DR_{\rm V}(G'))/_\mathcal{R},\
\sum_{s\in\mathcal{H}\cap DR(G)}\varphi (s)=\sum_{s\in\mathcal{H}\cap DR_{\rm V}(G)}\varphi (s)=0=\\
\sum_{s'\in\mathcal{H}\cap DR_{\rm V}(G')}\varphi '(s')=\sum_{s'\in\mathcal{H}\cap DR(G')}\varphi '(s')$.

By Proposition \ref{pmfsmc.pro}, $\forall s\in DR_{\rm T}(G),\ \varphi (s)=\frac{\psi (s)}{\sum_{\tilde{s}\in
DR_{\rm T}(G)}\psi (\tilde{s})}$ and $\forall s'\in DR_{\rm T}(G'),\\
\varphi '(s')=\frac{\psi '(s')}{\sum_{\tilde{s}'\in DR_{\rm T}(G')}\psi '(\tilde{s}')}$, where $\psi$ and $\psi '$ are
the steady-state PMFs for ${\it DTMC}(G)$ and ${\it DTMC}(G')$, respectively. Thus,
$\forall\mathcal{H},\widetilde{\mathcal{H}}\in (DR_{\rm T}(G)\cup DR_{\rm T}(G'))/_\mathcal{R},\
\sum_{s\in\mathcal{H}\cap DR(G)}\varphi (s)=\sum_{s\in\mathcal{H}\cap DR_{\rm T}(G)}\varphi
(s)=\sum_{s\in\mathcal{H}\cap DR_{\rm T}(G)}\left(\frac{\psi (s)}{\sum_{\tilde{s}\in DR_{\rm T}(G)}\psi
(\tilde{s})}\right)=\frac{\sum_{s\in\mathcal{H}\cap DR_{\rm T}(G)}\psi (s)}{\sum_{\tilde{s}\in DR_{\rm T}(G)}\psi
(\tilde{s})}= \frac{\sum_{s\in\mathcal{H}\cap DR_{\rm T}(G)}\psi (s)}{\sum_{\widetilde{\mathcal{H}}}
\sum_{\tilde{s}\in\widetilde{\mathcal{H}}\cap DR_{\rm T}(G)}\psi (\tilde{s})}$ and\\
$\sum_{s'\in\mathcal{H}\cap DR(G')}\varphi '(s')=\sum_{s'\in\mathcal{H}\cap DR_{\rm T}(G')}\varphi '(s')=
\sum_{s'\in\mathcal{H}\cap DR_{\rm T}(G')}\left(\frac{\psi '(s')}{\sum_{\tilde{s}'\in DR_{\rm T}(G')}\psi
'(\tilde{s}')}\right)=\\
\frac{\sum_{s'\in\mathcal{H}\cap DR_{\rm T}(G')}\psi '(s')}{\sum_{\tilde{s}'\in DR_{\rm T}(G')}\psi '(\tilde{s}')}=
\frac{\sum_{s'\in\mathcal{H}\cap DR_{\rm T}(G')}\psi '(s')}{\sum_{\widetilde{\mathcal{H}}}
\sum_{\tilde{s}'\in\widetilde{\mathcal{H}}\cap DR_{\rm T}(G')}\psi '(\tilde{s}')}$.

It remains to prove that $\forall\mathcal{H}\in (DR_{\rm T}(G)\cup DR_{\rm T}(G'))/_\mathcal{R},\
\sum_{s\in\mathcal{H}\cap DR_{\rm T}(G)}\psi (s)=\\
\sum_{s'\in\mathcal{H}\cap DR_{\rm T}(G')}\psi '(s')$. Since $(DR(G)\cup DR(G'))/_\mathcal{R}=((DR_{\rm T}(G)\cup
DR_{\rm T}(G'))/_\mathcal{R})\uplus ((DR_{\rm V}(G)\cup DR_{\rm V}(G'))/_\mathcal{R})$, the previous equality is a
consequence of the following: $\forall\mathcal{H}\in (DR(G)\cup DR(G'))/_\mathcal{R},\\
\sum_{s\in\mathcal{H}\cap DR(G)}\psi (s)=\sum_{s'\in\mathcal{H}\cap DR(G')}\psi '(s')$.

It is sufficient to prove the previous statement for transient PMFs only, since $\psi=\lim_{k\to\infty}\psi[k]$ and
$\psi '=\lim_{k\to\infty}\psi '[k]$. We proceed by induction on $k$.
\begin{itemize}

\item $k=0$\\
The only nonzero values of the initial PMFs of ${\it DTMC}(G)$ and ${\it DTMC}(G')$ are $\psi [0]([G]_\approx )$
and $\psi [0]([G']_\approx )$. Let $\mathcal{H}_0$ be the equivalence class containing $[G]_\approx$ and
$[G']_\approx$. Then\\
$\sum_{s\in\mathcal{H}_0\cap DR(G)}\psi [0](s)=\psi [0]([G]_\approx )=1=\psi '[0]([G']_\approx )=
\sum_{s'\in\mathcal{H}_0\cap DR(G')}\psi '[0](s')$. As for other equivalence classes, $\forall\mathcal{H}\in
((DR(G)\cup DR(G'))/_\mathcal{R})\setminus\mathcal{H}_0$, we have $\sum_{s\in\mathcal{H}\cap DR(G)}\psi
[0](s)=0=\sum_{s'\in\mathcal{H}\cap DR(G')}\psi '[0](s')$.

\item $k\rightarrow k+1$\\
Let $\mathcal{H}\!\in\!(DR(G)\cup DR(G'))/_\mathcal{R}$ and $s_1,s_2\!\in\!\mathcal{H}$. We have
$\forall\widetilde{\mathcal{H}}\!\in\!(DR(G)\cup DR(G'))/_\mathcal{R},\
\forall A\!\in\!\naturals_{\rm fin}^\mathcal{L},\\
s_1\stackrel{A}{\rightarrow}_\mathcal{P}\widetilde{\mathcal{H}}\ \Leftrightarrow\
s_2\stackrel{A}{\rightarrow}_\mathcal{P}\widetilde{\mathcal{H}}$. Therefore, $PM(s_1,\widetilde{\mathcal{H}})=
\sum_{\{\Upsilon\mid\exists\tilde{s}_1\in\widetilde{\mathcal{H}},\
s_1\stackrel{\Upsilon}{\rightarrow}\tilde{s}_1\}}PT(\Upsilon ,s_1)=\\
\sum_{A\in\naturals_{\rm fin}^\mathcal{L}}\sum_{\{\Upsilon\mid\exists\tilde{s}_1\in\widetilde{\mathcal{H}},\
s_1\stackrel{\Upsilon}{\rightarrow}\tilde{s}_1,\ \mathcal{L}(\Upsilon )=A\}}PT(\Upsilon ,s_1)=
\sum_{A\in\naturals_{\rm fin}^\mathcal{L}}PM_A(s_1,\widetilde{\mathcal{H}})=\\
\sum_{A\in\naturals_{\rm fin}^\mathcal{L}}PM_A(s_2,\widetilde{\mathcal{H}})=
\sum_{A\in\naturals_{\rm fin}^\mathcal{L}}\sum_{\{\Upsilon\mid\exists\tilde{s}_2\in\widetilde{\mathcal{H}},\
s_2\stackrel{\Upsilon}{\rightarrow}\tilde{s}_2,\ \mathcal{L}(\Upsilon )=A\}}PT(\Upsilon ,s_2)=\\
\sum_{\{\Upsilon\mid\exists\tilde{s}_2\in\widetilde{\mathcal{H}},\
s_2\stackrel{\Upsilon}{\rightarrow}\tilde{s}_2\}}PT(\Upsilon ,s_2)=PM(s_2,\widetilde{\mathcal{H}})$. Since this
equality is valid for all $s_1,s_2\in\mathcal{H}$, we can denote
$PM(\mathcal{H},\widetilde{\mathcal{H}})=PM(s_1,\widetilde{\mathcal{H}})=PM(s_2,\widetilde{\mathcal{H}})$.
Transitions from the states of $DR(G)$ always lead to those from the same set, hence, $\forall s\in DR(G),\
PM(s,\widetilde{\mathcal{H}})=PM(s,\widetilde{\mathcal{H}}\cap DR(G))$. The same holds for $DR(G')$.

By induction hypothesis, $\sum_{s\in\mathcal{H}\cap DR(G)}\psi [k](s)=\sum_{s'\in\mathcal{H}\cap DR(G')}\psi
'[k](s')$. Further,\\
$\sum_{\tilde{s}\in\widetilde{\mathcal{H}}\cap DR(G)}\psi [k+1](\tilde{s})=
\sum_{\tilde{s}\in\widetilde{\mathcal{H}}\cap DR(G)}\sum_{s\in DR(G)}\psi [k](s)PM(s,\tilde{s})=\\
\sum_{s\in DR(G)}\sum_{\tilde{s}\in\widetilde{\mathcal{H}}\cap DR(G)}\psi [k](s)PM(s,\tilde{s})=
\sum_{s\in DR(G)}\psi [k](s)\sum_{\tilde{s}\in\widetilde{\mathcal{H}}\cap DR(G)}PM(s,\tilde{s})=\\
\sum_\mathcal{H}\sum_{s\in\mathcal{H}\cap DR(G)}\psi [k](s)
\sum_{\tilde{s}\in\widetilde{\mathcal{H}}\cap DR(G)}PM(s,\tilde{s})=\\
\sum_\mathcal{H}\sum_{s\in\mathcal{H}\cap DR(G)}\psi [k](s)\sum_{\tilde{s}\in\widetilde{\mathcal{H}}\cap
DR(G)}\sum_{\{\Upsilon\mid s\stackrel{\Upsilon}{\rightarrow}\tilde{s}\}}PT(\Upsilon ,s)=\\
\sum_\mathcal{H}\sum_{s\in\mathcal{H}\cap DR(G)}\psi
[k](s)\sum_{\{\Upsilon\mid\exists\tilde{s}\in\widetilde{\mathcal{H}}\cap
DR(G),\ s\stackrel{\Upsilon}{\rightarrow}\tilde{s}\}}PT(\Upsilon ,s)=\\
\sum_\mathcal{H}\sum_{s\in\mathcal{H}\cap DR(G)}\psi [k](s)PM(s,\widetilde{\mathcal{H}})=
\sum_\mathcal{H}\sum_{s\in\mathcal{H}\cap DR(G)}\psi [k](s)PM(\mathcal{H},\widetilde{\mathcal{H}})=\\
\sum_\mathcal{H}PM(\mathcal{H},\widetilde{\mathcal{H}})\sum_{s\in\mathcal{H}\cap DR(G)}\psi [k](s)=
\sum_\mathcal{H}PM(\mathcal{H},\widetilde{\mathcal{H}})\sum_{s'\in\mathcal{H}\cap DR(G')}\psi '[k](s')=\\
\sum_\mathcal{H}\sum_{s'\in\mathcal{H}\cap DR(G')}\psi '[k](s')PM(\mathcal{H},\widetilde{\mathcal{H}})=
\sum_\mathcal{H}\sum_{s'\in\mathcal{H}'\cap DR(G')}\psi '[k](s')PM(s',\widetilde{\mathcal{H}})=\\
\sum_\mathcal{H}\sum_{s'\in\mathcal{H}\cap DR(G')}\psi '[k](s')\sum_{\{\Upsilon\mid\exists\tilde{s}'\in
\widetilde{\mathcal{H}}\cap DR(G'),\ s'\stackrel{\Upsilon}{\rightarrow}\tilde{s}'\}}PT(\Upsilon ,s')=\\
\sum_\mathcal{H}\sum_{s'\in\mathcal{H}\cap DR(G')}\psi '[k](s')\sum_{\tilde{s}'\in\widetilde{\mathcal{H}}\cap
DR(G')}\sum_{\{\Upsilon\mid\exists\tilde{s}',\ s'\stackrel{\Upsilon}{\rightarrow}\tilde{s}'\}}PT(\Upsilon ,s')=\\
\sum_\mathcal{H}\sum_{s'\in\mathcal{H}\cap DR(G')}\psi '[k](s')\sum_{\tilde{s}'\in\widetilde{\mathcal{H}}\cap
DR(G')}PM(s',\tilde{s}')=\\
\sum_{s'\in DR(G')}\psi '[k](s')\sum_{\tilde{s}'\in\widetilde{\mathcal{H}}\cap DR(G')}PM(s',\tilde{s}')=\\
\sum_{s'\in DR(G')}\sum_{\tilde{s}'\in\widetilde{\mathcal{H}}\cap DR(G')}\psi '[k](s')PM(s',\tilde{s}')=\\
\sum_{\tilde{s}'\in\widetilde{\mathcal{H}}\cap DR(G')}\sum_{s'\in DR(G')}\psi '[k](s')PM(s',\tilde{s}')=
\sum_{\tilde{s}'\in\widetilde{\mathcal{H}}\cap DR(G')}\psi '[k+1](\tilde{s}')$. \qed

\end{itemize}

\subsection{Proof of Theorem \protect\ref{stattrace.the}}
\label{stattrace.ssc}

Let $\mathcal{H}\in (DR(G)\cup DR(G'))/_\mathcal{R}$ and $s,\bar{s}\in\mathcal{H}$. We have
$\forall\widetilde{\mathcal{H}}\in (DR(G)\cup DR(G'))/_\mathcal{R},\ \forall A\in\naturals_{\rm fin}^\mathcal{L},\\
s\stackrel{A}{\rightarrow}_\mathcal{P}\widetilde{\mathcal{H}}\ \Leftrightarrow\
\bar{s}\stackrel{A}{\rightarrow}_\mathcal{P}\widetilde{\mathcal{H}}$. Since this equality is valid for all
$s,\bar{s}\in\mathcal{H}$, we can rewrite it as
$\mathcal{H}\stackrel{A}{\rightarrow}_\mathcal{P}\widetilde{\mathcal{H}}$ and denote
$PM_A(\mathcal{H},\widetilde{\mathcal{H}})= PM_A(s,\widetilde{\mathcal{H}})=PM_A(\bar{s},\widetilde{\mathcal{H}})$. The
transitions from the states of $DR(G)$ always lead to those from the same set, hence, $\forall s\in DR(G),\
PM_A(s,\widetilde{\mathcal{H}})= PM_A(s,\widetilde{\mathcal{H}}\cap DR(G))$. The same holds for $DR(G')$.

Let $\Sigma =A_1\cdots A_n$ be a derived step trace of $G$ and $G'$. Then $\exists\mathcal{H}_0,\ldots
,\mathcal{H}_n\in (DR(G)\cup DR(G'))/_\mathcal{R},\\
\mathcal{H}_0\stackrel{A_1}{\rightarrow}_{\mathcal{P}_1}\mathcal{H}_1\stackrel{A_2}{\rightarrow}_{\mathcal{P}_2}\cdots
\stackrel{A_n}{\rightarrow}_{\mathcal{P}_n}\mathcal{H}_n$.
Let us prove that the sum of probabilities of all the paths starting in every $s_0\in\mathcal{H}_0$ and going through
the states from $\mathcal{H}_1,\ldots ,\mathcal{H}_n$ is equal to the product of $\mathcal{P}_1,\ldots ,\mathcal{P}_n$:
$$\sum_{\{\Upsilon_1,\ldots ,\Upsilon_n\mid s_0\stackrel{\Upsilon_1}{\rightarrow}\cdots
\stackrel{\Upsilon_n}{\rightarrow}s_n,\ \mathcal{L}(\Upsilon_i)=A_i,\ s_i\in\mathcal{H}_i\ (1\leq i\leq
n)\}}\prod_{i=1}^n PT(\Upsilon_i,s_{i-1})=\prod_{i=1}^n PM_{A_i}(\mathcal{H}_{i-1},\mathcal{H}_i).$$

We prove this equality by induction on the derived step trace length $n$.
\begin{itemize}

\item $n=1$\\
$\sum_{\{\Upsilon_1\mid s_0\stackrel{\Upsilon_1}{\rightarrow}s_1,\ \mathcal{L}(\Upsilon_1)=A_1,\
s_1\in\mathcal{H}_1\}}PT(\Upsilon_1,s_0)=PM_{A_1}(s_0,\mathcal{H}_1)=PM_{A_1}(\mathcal{H}_0,\mathcal{H}_1)$.

\item $n\rightarrow n+1$\\
$\sum_{\{\Upsilon_1,\ldots ,\Upsilon_n,\Upsilon_{n+1}\mid s_0\stackrel{\Upsilon_1}{\rightarrow}\cdots
\stackrel{\Upsilon_n}{\rightarrow}s_n\stackrel{\Upsilon_{n+1}}{\rightarrow}s_{n+1},\ \mathcal{L}(\Upsilon_i)=A_i,\
s_i\in\mathcal{H}_i\ (1\leq i\leq n+1)\}}\prod_{i=1}^{n+1}PT(\Upsilon_i,s_{i-1})=\\
\sum_{\{\Upsilon_1,\ldots ,\Upsilon_n\mid s_0\stackrel{\Upsilon_1}{\rightarrow}\cdots
\stackrel{\Upsilon_n}{\rightarrow}s_n,\ \mathcal{L}(\Upsilon_i)=A_i,\ s_i\in\mathcal{H}_i\ (1\leq i\leq n)\}}\\
\sum_{\{\Upsilon_{n+1}\mid s_n\stackrel{\Upsilon_{n+1}}{\rightarrow}s_{n+1},\ \mathcal{L}(\Upsilon_{n+1})=A_{n+1},\
s_n\in\mathcal{H}_n,\ s_{n+1}\in\mathcal{H}_{n+1}\}}\prod_{i=1}^n PT(\Upsilon_i,s_{i-1})
PT(\Upsilon_{n+1},s_n)=\\
\sum_{\{\Upsilon_1,\ldots ,\Upsilon_n\mid s_0\stackrel{\Upsilon_1}{\rightarrow}\cdots
\stackrel{\Upsilon_n}{\rightarrow}s_n,\ \mathcal{L}(\Upsilon_i)=A_i,\ s_i\in\mathcal{H}_i\ (1\leq i\leq n)\}}
\Bigl[\prod_{i=1}^n PT(\Upsilon_i,s_{i-1})\\
\left.\sum_{\{\Upsilon_{n+1}\mid s_n\stackrel{\Upsilon_{n+1}}{\rightarrow}s_{n+1},\
\mathcal{L}(\Upsilon_{n+1})=A_{n+1},\
s_n\in\mathcal{H}_n,\ s_{n+1}\in\mathcal{H}_{n+1}\}}PT(\Upsilon_{n+1},s_n)\right]=\\
\sum_{\{\Upsilon_1,\ldots ,\Upsilon_n\mid s_0\stackrel{\Upsilon_1}{\rightarrow}\cdots
\stackrel{\Upsilon_n}{\rightarrow}s_n,\ \mathcal{L}(\Upsilon_i)=A_i,\ s_i\in\mathcal{H}_i\ (1\leq i\leq n)\}}
\prod_{i=1}^n PT(\Upsilon_i,s_{i-1})PM_{A_{n+1}}(s_n,\mathcal{H}_{n+1})=\\
\sum_{\{\Upsilon_1,\ldots ,\Upsilon_n\mid s_0\stackrel{\Upsilon_1}{\rightarrow}\cdots
\stackrel{\Upsilon_n}{\rightarrow}s_n,\ \mathcal{L}(\Upsilon_i)=A_i,\ s_i\in\mathcal{H}_i\ (1\leq i\leq n)\}}
\prod_{i=1}^n PT(\Upsilon_i,s_{i-1})PM_{A_{n+1}}(\mathcal{H}_n,\mathcal{H}_{n+1})=\\
PM_{A_{n+1}}(\mathcal{H}_n,\mathcal{H}_{n+1})\sum_{\{\Upsilon_1,\ldots ,\Upsilon_n\mid
s_0\stackrel{\Upsilon_1}{\rightarrow}\cdots\stackrel{\Upsilon_n}{\rightarrow}s_n,\ \mathcal{L}(\Upsilon_i)=A_i,\
s_i\in\mathcal{H}_i\ (1\leq i\leq n)\}}\prod_{i=1}^n PT(\Upsilon_i,s_{i-1})=\\
PM_{A_{n+1}}(\mathcal{H}_n,\mathcal{H}_{n+1})\prod_{i=1}^n PM_{A_i}(\mathcal{H}_{i-1},\mathcal{H}_i)=\prod_{i=1}^{n+1}
PM_{A_i}(\mathcal{H}_{i-1},\mathcal{H}_i)$.

\end{itemize}

Let $s_0,\bar{s}_0\in\mathcal{H}_0$. We have\\
$PT(A_1\cdots A_n,s_0)=\sum_{\{\Upsilon_1,\ldots ,\Upsilon_n\mid s_0\stackrel{\Upsilon_1}{\rightarrow}\cdots
\stackrel{\Upsilon_n}{\rightarrow}s_n,\ \mathcal{L}(\Upsilon_i)=A_i,\ (1\leq i\leq n)\}}\prod_{i=1}^n
PT(\Upsilon_i,s_{i-1})=\\
\sum_{\mathcal{H}_1,\ldots ,\mathcal{H}_n}\sum_{\{\Upsilon_1,\ldots ,\Upsilon_n\mid
s_0\stackrel{\Upsilon_1}{\rightarrow} \cdots\stackrel{\Upsilon_n}{\rightarrow}s_n,\ \mathcal{L}(\Upsilon_i)=A_i,\
s_i\in\mathcal{H}_i\ (1\leq i\leq n)\}}\prod_{i=1}^n PT(\Upsilon_i,s_{i-1})=\\
\sum_{\mathcal{H}_1,\ldots ,\mathcal{H}_n}\prod_{i=1}^n PM_{A_i}(\mathcal{H}_{i-1},\mathcal{H}_i)=\\
\sum_{\mathcal{H}_1,\ldots ,\mathcal{H}_n}\sum_{\{\overline{\Upsilon}_1,\ldots ,\overline{\Upsilon}_n\mid
\bar{s}_0\stackrel{\overline{\Upsilon}_1}{\rightarrow}\cdots\stackrel{\overline{\Upsilon}_n}{\rightarrow}\bar{s}_n,\
\mathcal{L}(\overline{\Upsilon}_i)=A_i,\ \bar{s}_i\in\mathcal{H}_i\ (1\leq i\leq n)\}}\prod_{i=1}^n
PT(\overline{\Upsilon}_i,\bar{s}_{i-1})=\\
\sum_{\{\overline{\Upsilon}_1,\ldots ,\overline{\Upsilon}_n\mid\bar{s}_0\stackrel{\overline{\Upsilon}_1}{\rightarrow}
\cdots\stackrel{\overline{\Upsilon}_n}{\rightarrow}\bar{s}_n,\ \mathcal{L}(\overline{\Upsilon}_i)=\\
A_i,\ (1\leq i\leq n)\}}\prod_{i=1}^n PT(\overline{\Upsilon}_i,\bar{s}_{i-1})=PT(A_1\cdots A_n,\bar{s}_0)$.\\
Since we have the previous equality for all $s_0,\bar{s}_0\in\mathcal{H}_0$, we can denote
$PT(A_1\cdots A_n,\mathcal{H}_0)=\\
PT(A_1\cdots A_n,s_0)=PT(A_1\cdots A_n,\bar{s}_0)$.

By Proposition \ref{statprob.pro}, $\sum_{s\in\mathcal{H}\cap DR(G)}\varphi (s)=\sum_{s'\in\mathcal{H}\cap
DR(G')}\varphi '(s')$. We now can complete the proof:\\
$\sum_{s\in\mathcal{H}\cap DR(G)}\varphi (s)PT(\Sigma ,s)=\sum_{s\in\mathcal{H}\cap DR(G)}\varphi (s)PT(\Sigma
,\mathcal{H})=PT(\Sigma ,\mathcal{H})\sum_{s\in\mathcal{H}\cap DR(G)}\varphi (s)=\\
PT(\Sigma ,\mathcal{H})\sum_{s'\in\mathcal{H}\cap DR(G')}\varphi '(s')= \sum_{s'\in\mathcal{H}\cap DR(G')}\varphi
'(s')PT(\Sigma ,\mathcal{H})=\\
\sum_{s'\in\mathcal{H}\cap DR(G')}\varphi '(s')PT(\Sigma ,s')$. \qed

\subsection{Proof of Proposition \protect\ref{sjavevar.pro}}
\label{sjavevar.ssc}

Let us present two facts, which will be used in the proof.
\begin{enumerate}

\item By Proposition \ref{bissplit.pro}, $(DR(G)\cup DR(G'))/_\mathcal{R}=
((DR_{\rm T}(G)\cup DR_{\rm T}(G'))/_\mathcal{R})\uplus ((DR_{\rm V}(G)\cup DR_{\rm V}(G'))/_\mathcal{R})$. Hence,
$\forall\mathcal{H}\in (DR(G)\cup DR(G'))/_\mathcal{R}$, all states from $\mathcal{H}$ are tangible, when
$\mathcal{H}\in (DR_{\rm T}(G)\cup DR_{\rm T}(G'))/_\mathcal{R}$, or all of them are vanishing, when
$\mathcal{H}\in (DR_{\rm V}(G)\cup DR_{\rm V}(G'))/_\mathcal{R}$.

\item Let $\mathcal{H}\in (DR(G)\cup DR(G'))/_\mathcal{R}$ and $s_1,s_2\in\mathcal{H}$. We have
$\forall\widetilde{\mathcal{H}}\in (DR(G)\cup DR(G'))/_\mathcal{R},\\
\forall A\in\naturals_{\rm fin}^\mathcal{L},\ s_1\stackrel{A}{\rightarrow}_\mathcal{P}\widetilde{\mathcal{H}}\
\Leftrightarrow\ s_2\stackrel{A}{\rightarrow}_\mathcal{P}\widetilde{\mathcal{H}}$.
Hence, $PM(s_1,\widetilde{\mathcal{H}})= \sum_{\{\Upsilon\mid\exists\tilde{s}_1\in\widetilde{\mathcal{H}},\
s_1\stackrel{\Upsilon}{\rightarrow}\tilde{s}_1\}}PT(\Upsilon ,s_1)=\\
\sum_{A\in\naturals_{\rm fin}^\mathcal{L}}\sum_{\{\Upsilon\mid\exists\tilde{s}_1\in\widetilde{\mathcal{H}},\
s_1\stackrel{\Upsilon}{\rightarrow}\tilde{s}_1,\ \mathcal{L}(\Upsilon )=A\}}PT(\Upsilon ,s_1)=
\sum_{A\in\naturals_{\rm fin}^\mathcal{L}}PM_A(s_1,\widetilde{\mathcal{H}})=\\
\sum_{A\in\naturals_{\rm fin}^\mathcal{L}}PM_A(s_2,\widetilde{\mathcal{H}})=
\sum_{A\in\naturals_{\rm fin}^\mathcal{L}}\sum_{\{\Upsilon\mid\exists\tilde{s}_2\in\widetilde{\mathcal{H}},\
s_2\stackrel{\Upsilon}{\rightarrow}\tilde{s}_2,\ \mathcal{L}(\Upsilon )=A\}}PT(\Upsilon ,s_2)=\\
\sum_{\{\Upsilon\mid\exists\tilde{s}_2\in\widetilde{\mathcal{H}},\
s_2\stackrel{\Upsilon}{\rightarrow}\tilde{s}_2\}}PT(\Upsilon ,s_2)=PM(s_2,\widetilde{\mathcal{H}})$. Since we have
the previous equality for all\\
$s_1,s_2\in\mathcal{H}$, we can denote $PM(\mathcal{H},\widetilde{\mathcal{H}})=PM(s_1,\widetilde{\mathcal{H}})=
PM(s_2,\widetilde{\mathcal{H}})$. The transitions from the states of $DR(G)$ always lead to those from the same set,
hence, $\forall s\in DR(G),\\
PM(s,\widetilde{\mathcal{H}})=PM(s,\widetilde{\mathcal{H}}\cap DR(G))$. The same is true for $DR(G')$. Hence, for
all $s\in\mathcal{H}\cap DR(G)$, we obtain
$PM(\mathcal{H},\widetilde{\mathcal{H}})=PM(s,\widetilde{\mathcal{H}})=PM(s,\widetilde{\mathcal{H}}\cap
DR(G))=PM(\mathcal{H}\cap DR(G),\widetilde{\mathcal{H}}\cap DR(G))$. The same is true for $DR(G')$. Finally,
$PM(\mathcal{H}\cap DR(G),\widetilde{\mathcal{H}}\cap
DR(G))=PM(\mathcal{H},\widetilde{\mathcal{H}})=\\
PM(\mathcal{H}\cap DR(G'),\widetilde{\mathcal{H}}\cap DR(G'))$.

\end{enumerate}

Let us now prove the proposition statement for the sojourn time averages.
\begin{itemize}

\item Let $\mathcal{H}\in (DR_{\rm V}(G)\cup DR_{\rm V}(G'))/_\mathcal{R}$.\\
We have $\mathcal{H}\cap DR(G)=\mathcal{H}\cap DR_{\rm V}(G)\in DR_{\rm V}(G)/_\mathcal{R}$ and
$\mathcal{H}\cap DR(G')=\mathcal{H}\cap DR_{\rm V}(G')\in DR_{\rm V}(G')/_\mathcal{R}$.
By definition of the average sojourn time in an equivalence class of states, we get ${\it SJ}_{\mathcal{R}\cap
(DR(G))^2}(\mathcal{H}\cap DR(G))={\it SJ}_{\mathcal{R}\cap (DR(G))^2}(\mathcal{H}\cap DR_{\rm V}(G))=0=\\
{\it SJ}_{\mathcal{R}\cap (DR(G'))^2}(\mathcal{H}\cap DR_{\rm V}(G'))={\it SJ}_{\mathcal{R}\cap
(DR(G'))^2}(\mathcal{H}\cap DR(G'))$.

\item Let $\mathcal{H}\in (DR_{\rm T}(G)\cup DR_{\rm T}(G'))/_\mathcal{R}$.\\
We have $\mathcal{H}\cap DR(G)=\mathcal{H}\cap DR_{\rm T}(G)\in DR_{\rm T}(G)/_\mathcal{R}$ and
$\mathcal{H}\cap DR(G')=\mathcal{H}\cap DR_{\rm T}(G')\in DR_{\rm T}(G')/_\mathcal{R}$.
By definition of the average sojourn time in an equivalence class of states, we get\\
${\it SJ}_{\mathcal{R}\cap (DR(G))^2}(\mathcal{H}\cap DR(G))\!=\!{\it SJ}_{\mathcal{R}\cap (DR(G))^2}(\mathcal{H}\cap
DR_{\rm T}(G))\!=\!\frac{1}{1-PM(\mathcal{H}\cap DR_{\rm T}(G),\mathcal{H}\cap DR_{\rm T}(G))}\!=\\
\frac{1}{1-PM(\mathcal{H}\cap DR(G),\mathcal{H}\cap DR(G))}=\frac{1}{1-PM(\mathcal{H},\mathcal{H})}=
\frac{1}{1-PM(\mathcal{H}\cap DR(G'),\mathcal{H}\cap DR(G'))}=\\
\frac{1}{1-PM(\mathcal{H}\cap DR_{\rm T}(G'),\mathcal{H}\cap DR_{\rm T}(G'))}\!\!=\!\!
{\it SJ}_{\mathcal{R}\cap (DR(G'))^2}(\mathcal{H}\cap DR_{\rm T}(G'))\!\!=\!\!
{\it SJ}_{\mathcal{R}\cap (DR(G'))^2}(\mathcal{H}\cap DR(G'))$.

\end{itemize}

Thus, $\forall\mathcal{H}\in (DR(G)\cup DR(G'))/_\mathcal{R}$ we have ${\it SJ}_{\mathcal{R}\cap
(DR(G))^2}(\mathcal{H}\cap DR(G))=\\
{\it SJ}_{\mathcal{R}\cap (DR(G'))^2}(\mathcal{H}\cap DR(G'))$.

The proposition statement for the sojourn time variances is proved similarly. \qed

\end{document}